\documentclass[11pt,letterpaper]{article}

\usepackage[margin=1in]{geometry}
\usepackage{array}
\usepackage{graphicx}
\usepackage{multicol}
\usepackage{multirow}
\usepackage{pgfplots}
\usepackage{tikz}
\usepackage[version=4]{mhchem}
\usepackage{amsmath,amssymb,amsfonts,mathtools}
\usepackage{appendix}
\usepackage{bbm}
\usepackage{enumitem}
\usepackage{amsthm}
\usepackage{titlesec}
\usepackage{graphicx}
\usepackage{textcomp}
\usepackage{xcolor}
\usepackage{subcaption}
\usepackage{algpseudocode}
\usepackage{algorithm}
\usepackage{verbatim}
\usepackage{setspace}
\usepackage{tablefootnote}
\usepackage{fancyhdr}
\usepackage{thmtools}
\usepackage{thm-restate}
\pgfplotsset{compat=1.18}

\newcommand{\mc}{\mathcal}
\newcommand{\br}[1]{\left\{#1\right\}}

\usepackage[linktoc=all]{hyperref}
\hypersetup{
    colorlinks=true,
    linkcolor=black,
    filecolor=blue,      
    urlcolor=blue,
    citecolor=blue,
    pdfpagemode=FullScreen,
    }

\usepackage[capitalise, noabbrev]{cleveref}

\usepackage[sortcites, natbib=true, url=false, style=alphabetic, doi=false, isbn=false, maxbibnames=9, minalphanames=3]{biblatex}

\renewcommand{\l}{\left (}
\renewcommand{\r}{\right)}
\newtheorem{theorem}{Theorem}[section]
\newtheorem{lemma}[theorem]{Lemma}
\newtheorem{definition}[theorem]{Definition}
\newtheorem{example}[theorem]{Example}
\newtheorem{remark}[theorem]{Remark}
\newtheorem{claim}[theorem]{Claim}
\newtheorem{corollary}[theorem]{Corollary}

\newcommand{\E}{\mathbb E}
\renewcommand{\P}{\mathbb P}

\newcommand{\Z}{\mathbb Z}
\newcommand{\poly}{\operatorname{poly}}
\newcommand{\polylog}{\operatorname{polylog}}
\newcommand{\Alg}{\textsc{Alg}}
\newcommand{\R}{\mathbb R}
\DeclareMathOperator{\lca}{lca}

\newcommand{\seq}{\boldsymbol}
\begin{document}
\title {Online Monotone Metric Embeddings}
\author{Christian Coester\\ University of Oxford \and Yichen Huang\\ Harvard University}
\date{}
\maketitle
\titlespacing{\paragraph}{%
  0pt}{
  0.5\baselineskip}{
  1em}
\begin{abstract}
\thispagestyle{plain}

Metric embeddings into structured spaces, particularly hierarchically well-separated trees (HSTs), are a fundamental tool in the design of online algorithms. In the classical online embedding setting, points arrive sequentially and must be embedded irrevocably upon arrival, resulting in strong distortion lower bounds of $\Omega(\min(n, \log n\log \Delta))$, where $n$ is the number of points and $\Delta$ their aspect ratio.

We propose a novel relaxation, \emph{online monotone metric embeddings}, which allows distances between embedded points in the target space to decrease monotonically over time. Such relaxed embeddings remain compatible with many online algorithms. Moreover, this relaxation breaks existing lower bound barriers, enabling embeddings into HSTs with distortion $O(\log^2 n)$.

We also study a dynamic variant, where points may both arrive and depart, seeking distortion guarantees in terms of the maximum number $l$ of simultaneously present points. For traditional embeddings, such bounds are impossible, and this limitation persists even for deterministic monotone embeddings. Surprisingly, probabilistic monotone embeddings allow for $O(l \log l)$ distortion, which is nearly optimal given an $\Omega(l)$ lower bound.

\end{abstract}

\pagenumbering{gobble}
\newpage
\pagenumbering{arabic}
\section{Introduction}

Metric embeddings are a powerful tool in many online and approximation algorithms. By embedding a metric space into a structured one with small distortion, one can solve algorithmic problems on the simpler metric and then translate solutions back to the original space with bounded loss. Here we focus on embeddings into \emph{hierarchically well-separated trees (HSTs)}, a widely studied target metric space with numerous algorithmic applications~\cite{Bartal96, FRT03}.

When applying metric embeddings in the context of online algorithms, the traditional approach embeds the entire underlying metric into an HST offline, subsequently solving the problem there~\cite{Bartal2020, Bartal96, FRT03}. However, the distortion of these embeddings inherently depends on the size of the metric space $M$ (e.g., at least $\Omega(\log |M|)$ \cite{Bartal96}), which can be infinite even for simple metrics like the real line. This motivates the study of \emph{online embeddings}, which only embed points relevant to the request sequence and thus involve only finitely many points.

\paragraph{Online Embeddings and Their Challenges.} 
In an online embedding, points arrive sequentially. Upon arrival of each new point $v$, the embedding algorithm must irrevocably place $v$ into the target metric without knowledge of future points, and it cannot later move points in this embedding under traditional models~\cite{Indyk10online, Bartal2020, Bhore24duet}. This restriction leads to strong distortion lower bounds: a deterministic lower bound of $2^{\Omega(n)}$ \cite{newman2020online} and a randomized lower bound of $\widetilde{\Omega}(\min (n, \log n \log \Delta))$~\cite{Indyk10online, Bartal2020}, where $n$ is the length of the sequence and $\Delta$ the aspect ratio, i.e., the ratio between the maximum and minimum nonzero distances between points. Since $\Delta$ can be unbounded, these bounds are exponentially larger than those in the offline setting, which admits deterministic distortion $O(n)$ and randomized distortion $O(\log n)$.

\paragraph{Monotone Metric Embeddings: A Mild but Powerful Relaxation.}
Our starting observation is that requiring a fully fixed embedding at each step is often unnecessarily restrictive. In many algorithmic settings, permitting distances between already-embedded points to \emph{decrease} over time is a benign change that would not degrade performance. For example, with a \emph{non-contractive} embedding that never underestimates distances, decreasing distances among existing points only improves the approximation quality. As a consequence, many online metric algorithms remain competitive even when paired with an embedding that occasionally shrinks distances, and we provide a meta-theorem for verifying the compatibility of potential-based algorithms.

This motivates a new model that we call \emph{online monotone metric embeddings}: upon arrival of a new point, previously embedded points may be repositioned, provided that no pairwise distance between already-embedded points increases. This additional flexibility turns out to evade the lower bounds known for the traditional model. Specifically, we achieve probabilistic online monotone embeddings into HSTs with distortion $O(\log^2 n)$, eliminating the dependence on $\Delta$ and approaching the offline embedding distortion of $\Theta(\log n)$~\cite{FRT03}. We also obtain tight $\Theta(n)$ bounds on \emph{deterministic} online monotone embeddings, improving exponentially over the strict\footnote{We refer to the traditional embedding model, where distances in the target metric must be fixed upon arrival, as the \emph{strict} embedding model.} setting.

\paragraph{Dynamic Embeddings with Removals.}
Another natural question is whether an online embedding must always maintain an embedding of \emph{all} points ever introduced. In practice, only a small set of \emph{alive} points may matter at any given time. For example, in metrical service systems (i.e., set chasing problems)~\cite{bubeck22shortest, PAPADIMITRIOU1991127,CoesterT26}, only the points in the current and previous request sets matter. In $k$-server or $k$-taxi problems, points requested long ago intuitively lose relevance over time. Motivated by such scenarios, we examine a \emph{dynamic} embedding model where points may arrive and depart, such that at most $l$ points are alive (arrived and have not departed) at any time. The algorithm needs only to maintain an embedding of the alive set.

Previous literature on the $k$-server problem has attempted to employ such an approach, with partial success: \cite{Bubeck18} achieved a $\polylog(k,\Delta)$-competitive algorithm for the $k$-server problem using an evolving HST embedding. Building upon this, \cite{Lee2018FusibleHA} aimed to further reduce the competitive ratio to $\polylog(k)$ by refining the dynamic embedding, and while this work contained several promising ideas, it was later withdrawn due to a bug in the proof. Despite this interest, a systematic exploration of dynamic embedding models remains limited. We investigate what distortion as a function of $l$, if any, is achievable in the dynamic embedding setting.

In the strict setting, known results on embedding graphs of pathwidth $l$ \cite{Lee2009PathwidthTA} imply an \emph{offline} embedding with distortion $O((4l)^{l^3+1})$ into trees. For \emph{strict online} embeddings into HSTs, even such exponential distortion is unachievable: for $l=3$, the distortion is an unbounded function of $n$. For online monotone embeddings, the same impossibility still holds in the case of deterministic embeddings. Surprisingly, we show that the combination of monotonicity and randomization removes these barriers, enabling a distortion of $O(l \log l)$. We further establish an almost matching lower bound of $\Omega(l)$ for probabilistic embeddings, which holds even offline. This lower bound may be of independent interest, particularly for future research seeking to narrow the gap in the randomized competitive ratio of the $k$-server problem. It could inspire new lower bound constructions, or at least constrain possible approaches when aiming for a $\polylog(k)$-competitive algorithm.

\medskip

In summary, monotone metric embeddings offer a natural relaxation of the online embedding model that enables overcoming known lower bounds, often exponentially. To the best of our knowledge, this notion of ``monotone recourse'' is new and may have broader applicability to other online problems. We view this conceptual idea as our main contribution. 
Some of our results follow from relatively simple adaptations of existing techniques, illustrating that monotone recourse can be addressed within known algorithmic frameworks.

\subsection{Our Results}

We will refer to the notion of an update sequence, formally defined in \cref{def: update sequence}, which specifies the arrival and departure of points. We denote by $n$ the total number of points in the sequence, and by $l$ the \emph{width}, the maximum number of points simultaneously present at any time.

Our embedding results are presented in two categories: the \emph{incremental} setting\footnote{
Although our incremental algorithms can accommodate departures, when focusing on an $n$-dependent bound, having departures only simplifies the task by reducing the number of alive points.}, in which points only arrive and the distortion is analyzed as a function of $n$, and the \emph{fully dynamic} setting, in which points arrive and depart and the distortion is analyzed as a function of $l$. Finally, we discuss some applications.

\subsubsection{Incremental Setting}
We first present improved probabilistic embeddings for the classical incremental model when monotone updates are allowed.

\begin{restatable}[Probabilistic Embedding]{theorem}{Incremental}
\label{thm: general}
For every $n \in \mathbb N$, there exists a probabilistic online monotone embedding of up to $n$ points from any metric space into HSTs with distortion $O(\log^2 n)$. If the points are from an $O(1)$-dimensional normed space, the distortion improves to $O\bigl(\log n\bigr)$. 
\end{restatable}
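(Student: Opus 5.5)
Our plan is to run an FRT-style random hierarchical decomposition online and to use monotone recourse to remove the dependence on the aspect ratio $\Delta$. We fix a random permutation-like ordering and a random scale parameter $\beta\in[1,2]$. Concretely, each arriving point $v$ draws an independent uniform priority. At level $i$, with radius $r_i=\beta 2^{i}$, we assign $v$ to the cluster of the highest-priority already-arrived point within distance $r_i$. If there is none, $v$ becomes its own center.

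In the standard online FRT analysis, the $\log\Delta$ factor comes from summing the separation probability over all $\log\Delta$ scales. The role of monotonicity is that we only need to maintain levels that are "relevant" for the current point set. An HST edge at scale $2^i$ that separates two points is harmful only at scales comparable to their distance. We therefore keep the tree compressed: a level is contracted whenever all current points in a cluster lie in a single child. When a new point arrives, it may split a cluster. The resulting change only refines clusters at levels below the merge point of existing pairs, so pairwise tree distances $2^{\text{lca level}}$ between old points can only decrease (via re-rooting or compression) and never increase. Verifying this monotonicity carefully is the first key step. We ensure it by letting an old point's cluster change only to one whose center has higher priority, and by arguing that the lca level of any old pair is nonincreasing.

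For the distortion bound, non-contraction holds deterministically because clusters at level $i$ have diameter at most $2r_i$. For expansion, we fix a pair $u,v$ at distance $d$ at the time both have arrived. The standard FRT charging says that a center $w$ cuts $(u,v)$ at level $i$ only if $w$ is among the first points in priority order within distance about $r_i$. This gives probability $O(d/r_i)\cdot \frac{1}{j}$ for the $j$-th closest candidate, and hence expected cost $O(d\log n)$ per "active" scale window. The main obstacle is that later arrivals can cut $u,v$ again at new levels. Monotonicity does not let the distance grow, but the final distance is determined by the first cut ever made, i.e.\ at the arrival time of the later endpoint. So the task is to bound the expected tree distance at that moment, with only points arrived so far.

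With online arrivals, the priority order among candidates is not a uniform random permutation relative to distance order, because later points cannot preempt earlier centers. This is the source of the extra $\log n$ factor. We handle it with a two-level charging. For each scale $i$ with $r_i\in[d,\, d\cdot\poly(n)]$, the cut probability is $O\big((d/r_i)\log n\big)$, and the cost is $r_i$. Scales below $d/\poly(n)$ never matter, and scales above $d\cdot\poly(n)$ contribute geometrically vanishing terms after compression; the compression argument is exactly what removes the $\Delta$-dependence. Summing $O(\log n)$ per scale over $O(\log n)$ relevant scales gives $O(\log^2 n)$. The hardest step is justifying that only $O(\log n)$ scales per pair contribute, independent of $\Delta$. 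Our first attempt is a padding argument: scales with no current point in the annulus between $d$ and $r_i$ contribute zero probability of cutting, and the number of scales containing at least one point's distance is $O(\log n)$ after grouping.

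For $O(1)$-dimensional normed spaces, we replace the priority clustering by randomly shifted grids at each scale. The separation probability for a pair at distance $d$ at scale $r_i$ is then $O(\dim\cdot d/r_i)=O(d/r_i)$ with no $\log n$ factor, and the relevant-scale count remains $O(\log n)$. This gives $O(\log n)$ distortion. Monotonicity is again obtained by compressing levels containing a single child.
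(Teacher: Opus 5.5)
Your construction never uses monotone recourse to actually decrease a distance, and that is a genuine gap. Deleting unary nodes (``compression'') leaves every leaf distance $\varphi(\lca(u,v))$ unchanged, because a unary node is never the lca of two leaves. The FRT-style rule then admits only two readings, and both fail:
\begin{itemize}
\item If old points never change clusters, your scheme is a \emph{strict} online embedding. It is then subject to the known randomized lower bound $\widetilde\Omega(\min(n,\log n\log\Delta))$, which is $\widetilde\Omega(n)$ for $\Delta=2^n$, so no analysis can give $O(\log^2 n)$.
\item If an old point moves to a newer, higher-priority center, it is separated from its former clustermates at that level. Their tree distance increases, violating monotonicity.
\end{itemize}
Correspondingly, the two steps carrying your bound are false.
\begin{itemize}
\item The number of scales at which some center can cut $\{u,v\}$ is not $O(\log n)$. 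The distances from $u$ to the other points can be $2,4,\dots,2^n$, occupying $n$ scales; the paper's relevant-scales lemma only gives $O(n\log(1/\varepsilon))$.
\item At a scale $r\gg d\cdot\poly(n)$, nothing in your scheme pushes the cut probability below the $\Theta(d/r)$ it has when $u$'s center sits at distance $\approx r$. So each such scale can contribute $\Theta(d)$. ``Geometric decay'' would need a cut probability $o(d/r)$ there, i.e.\ an approximate $\varepsilon$-enforcing property.
\end{itemize}

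The missing idea is obtaining that property by \emph{merging} clusters. The paper uses Bartal-type components: balls with independent truncated-exponential radii, each point belonging to the earliest-born ball containing it, so old points never move. Whenever a pair at distance $\le\varepsilon s$ is split, the two components are merged, but only through a single designated merging component. This keeps clusters $s$-bounded, with diameter at most $3s/4+2\varepsilon s$. A close pair then stays split only if one component cuts it while a \emph{different} component has already cut some close pair. Independence of the radii bounds this probability by $O(n^5\varepsilon\log n)\,d/s$, which gives $(O(\log n),\varepsilon,O(n^5\varepsilon))$-smoothness. Taking $\varepsilon=n^{-6}$, the HST construction lemma yields distortion $O(\log n\cdot\log n^6+n^{-1}\cdot n\log n)=O(\log^2 n)$. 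The $O(n)$ relevant scales are paid for by $\gamma=O(1/n)$ multiplied by $\sum_j\delta^{(j)}=O(n\log n)$, not by arguing that only $O(\log n)$ scales matter.

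Your normed-space sketch has the same gap. On the line, a shifted grid of width $r$ cuts a pair at distance $d\le r$ with probability exactly $d/r$, so every scale $r\ge d$ contributes $\Theta(d)$ and you get $O(\log\Delta)$; compression does not change this. The paper again needs merges: on the line each component may merge once with a neighbour, giving $\gamma=O(n\varepsilon)$ with $\varepsilon=n^{-3}$. It then refines across coordinates to handle $\ell_\infty^D$ and uses norm equivalence for general norms.
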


Although \cref{thm: general} requires prior knowledge of $n$, we can employ it in our applications (Theorems~\ref{thm: k-server} and \ref{thm: k-taxi}) even in situations where $n$ is unknown by using a standard guess-and-double approach. For the pure embedding question with unknown $n$, a modified algorithm gives distortion $O(\log^2 n \log \log n)$. Note that the $\Omega(\log n)$ lower bound for offline embeddings extends to our setting, so our bounds for constant-dimensional normed spaces are tight, whereas a quadratic gap remains for general metrics.

We also analyze deterministic online monotone embeddings, obtaining tight bounds. Recall that the optimal distortion for deterministic strict online embeddings is exponential in $n$.

\begin{restatable}[Deterministic Embedding]{theorem}{Deterministic}
\label{thm: deterministic}
For every $n \in \mathbb N$, there is a deterministic online monotone embedding of up to $n$ points from any metric space into HSTs with distortion $n-1$ when $n$ is known. Without prior knowledge of $n$, the distortion is $n \cdot \widetilde{\Theta}(\log n)$.\footnote{$\widetilde \Theta$ hides $\log \log n$ factors.} Both bounds are tight.
\end{restatable}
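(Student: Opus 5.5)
The plan is to work with the \emph{subdominant (single-linkage) ultrametric}. For the current point set $X_t$, define
\[
u_t(x,y)=\min_{P:x\leadsto y}\ \max_{e\in P} d(e),
\]
where the minimum ranges over all paths in the complete graph on $X_t$. Then $u_t$ is an ultrametric, so it is representable as an HST after rounding scales, at an $O(1)$ loss or with the paper's HST convention. It also satisfies $u_t\le d$. Since every minimax path may be taken simple, it has at most $|X_t|-1\le n-1$ edges, so the triangle inequality gives $d(x,y)\le (n-1)\,u_t(x,y)$.

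For known $n$ I will output $D_t=(n-1)u_t$. This is non-contractive and has expansion at most $n-1$. Monotonicity is automatic: inserting a point only adds candidate paths, so $u_t(x,y)$ can only decrease for old pairs, and deleting points is harmless. The repositioning in the HST is simply the re-clustering by connected components of the threshold graphs $\{e: d(e)\le r\}$.

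For the matching lower bound $n-1$, which holds even offline, I take $n$ equally spaced points on a line with consecutive distance $1$. For any non-contractive ultrametric $D$,
\[
n-1\le D(p_1,p_n)\le \max_i D(p_i,p_{i+1}),
\]
so some adjacent pair is stretched by a factor $n-1$.

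For unknown $n$, the constant $n-1$ must be replaced by a scale that grows with time, yet old distances may not increase. I would define the HST through merge events. When two components of the threshold hierarchy first become joined at scale $r$, at time $t$, the distance between them is set to $c_t\, r$. Here $c_t$ should behave like $t\log t(\log\log t)^{1+\epsilon}$, chosen so that the diameter recursion stays valid:
\[
\mathrm{diam}(\text{merged})\le \mathrm{diam}(A)+\mathrm{diam}(B)+r .
\]
Monotonicity requires taking, at each level, the minimum of the old value and the value forced by a new shortcut. I expect checking that this minimum is still consistent with a hierarchical, ultrametric structure to be the main technical step. I would handle it by only ever \emph{merging} clusters, never splitting them, and assigning the merged cluster the smaller of the available labels.

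For the lower bound in the unknown-$n$ case, an adversary builds nested clusters, revealing a path of points at geometrically separated scales. It stops at an unknown time, forcing $\sum_t 1/c_t$-type constraints. These imply $c_n\ge n\cdot\widetilde\Omega(\log n)$ infinitely often, mirroring the standard guess-without-doubling lower bound.
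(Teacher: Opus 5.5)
Your known-$n$ argument matches the paper's. The paper also takes components of the threshold graphs $\{d\le s\}$ on all points seen so far, which is your $u_t$ realized exactly as a $1$-HST, scaled by $n-1$, with evenly spaced points on a line for tightness.

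Your unknown-$n$ upper bound also reduces to the paper's construction once ``take the minimum, then merge'' is made precise. At level $\theta$, the clusters are the components of the graph joining each arriving $v_k$ to every earlier point within distance $\theta/c_k$. In that form, nestedness across levels and monotonicity over time are immediate, since edges are only ever added. So ultrametric consistency is not the hard step. What must be checked is non-contraction after merging. When $v_k$ arrives, take one edge into each component it joins. This gives a spanning forest in which a simple path uses at most two edges charged to each $v_k$, each of length at most $\theta/c_k$. Hence cluster diameters are $O(\theta\sum_k 1/c_k)$. The requirement on $c_t$ is therefore exactly that $\sum_t 1/c_t$ be bounded by a suitable constant after rescaling. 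You should state this explicitly rather than leave it implicit in the pairwise recursion.

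The genuine gap is the lower bound for unknown $n$, which the tightness claim needs. A path at geometrically separated scales forces only constant distortion. With gaps $\rho^{t}$, $d(v_1,v_n)$ is within a constant factor of the largest single gap. So chaining via $d_T(v_1,v_n)\le\max_t d_T(v_{t-1},v_t)$ can only force $f=\Omega(1)$.

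The missing idea is to calibrate the gaps to the algorithm's own guarantee $f(t)$. This guarantee must hold at every time $t$, because the sequence may stop there. Put $v_1=0$ and $v_t=\sum_{j=2}^{t}1/f(j)$ on the line.
\begin{itemize}
\item At time $t$, $d_T(v_{t-1},v_t)\le f(t)\cdot 1/f(t)=1$, and monotonicity preserves this forever.
\item The ultrametric inequality then gives $d_T(v_1,v_n)\le 1$.
\item Non-contraction requires $d_T(v_1,v_n)\ge\sum_{j=2}^{n}1/f(j)$.
\end{itemize}
Hence $\sum_j 1/f(j)\le 1$. This rules out $f(n)=O(n\log n\log\log n)$ and gives $f(n)\ge n\cdot\widetilde\Omega(\log n)$ for infinitely many $n$. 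It is the same convergent-series condition that drives the upper bound. That is why the tight answer is $n\cdot\widetilde\Theta(\log n)$ rather than a clean $\Theta(n\log n)$.
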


\subsubsection{Fully Dynamic Setting}
In the fully dynamic model, points may both arrive and depart. We first present negative results demonstrating that strict dynamic embeddings and deterministic monotone embeddings cannot achieve bounded distortion purely in terms of the width $l$.

\begin{restatable}[Dynamic Impossibility]{theorem}{DynamicLowerDeter}
\label{thm: lower bound dynamic}
Even for update sequences of width $3$, any deterministic strict dynamic embedding into HSTs incurs distortion $\Omega(2^n)$, and any probabilistic strict embedding incurs distortion $\Omega(n)$. Furthermore, any deterministic monotone embedding into HSTs incurs distortion $\Omega(n)$ for update sequences of width $3$.
\end{restatable}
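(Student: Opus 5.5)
We prove all three parts with one adversarial ``sliding window'' construction of width $3$, which keeps the aspect ratio among the alive points bounded at every step. The adversary keeps two \emph{anchor} points alive and replaces one of them at each step, so the alive set always has at most $3$ points. Concretely, points $x_0,x_1,\dots$ lie on a line with $x_i$ at position $p_i$. At step $i$ the alive set is $\{x_{i-1},x_i\}$, then $x_{i+1}$ arrives and $x_{i-1}$ departs. The positions are chosen adaptively, with $|p_{i+1}-p_i|$ comparable to $|p_i-p_{i-1}|$ up to a factor of $2$. The point is that an HST has no ``memory-free'' way to place a fresh point: in the embedding, $x_{i+1}$ must sit either in the subtree containing $x_i$ at some level or separate from it. The adversary puts $x_{i+1}$ on the side that forces the tree distance from the new pair to the old embedded structure to grow, or forces the new pair's distance to be contracted.

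\textbf{Deterministic strict, $\Omega(2^n)$.} Let $d_T(x_{i-1},x_i)$ be the tree distance of the current pair. Since distances are frozen in the strict model, the $\lca$ of $x_{i-1},x_i$ is a fixed internal node $u_i$ of label $\Lambda_i$. A new point $x_{i+1}$ at true distance $\delta$ from $x_i$ must be placed either below $u_i$ or above it. The adversary picks the position of $x_{i+1}$ as follows. It chooses $p_{i+1}=p_i\pm\delta$, with $\delta$ either half or double the previous gap. The sign is chosen so that, by the three-point HST ultrametric property ($d_T(x_{i-1},x_{i+1})=\max$ of the other two, or the two larger distances coincide), the pair $(x_i,x_{i+1})$ must be embedded with distance at least a $2\times$ larger ratio to its true distance than the previous pair had. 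I expect the key lemma to be an invariant $\rho_{i+1}\ge 2\rho_i$, where $\rho_i=d_T(x_{i-1},x_i)/|p_i-p_{i-1}|$, or else a contraction appears. Non-contractiveness forces $\rho_i\ge1$, and the doubling then gives $\rho_n\ge 2^{\Omega(n)}$. This mirrors the $2^{\Omega(n)}$ strict incremental bound of \cite{newman2020online}, but needs only three alive points, because the old point is never needed again once its constraint is recorded in $u_i$.

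\textbf{Probabilistic strict, $\Omega(n)$.} Apply Yao's principle with the adversary choosing the sign of each step uniformly at random. At each step, with probability at least $1/2$ the random choice hits the ``bad'' side for the fixed embedding. A bad step adds an additive constant to $\log$ of the stretch, or more directly adds a term to the expected stretch of some surviving pair. Summing over the $n$ steps gives expected stretch $\Omega(n)$ for some pair, which appears at the end. The main obstacle is converting the multiplicative doubling into a linear expected bound. I would try a potential argument: measure the position of $x_i$ inside the frozen subtree of $u_i$, and show that each bad step increases $\E[\log\rho]$ by a constant. Since expectation of the exponential dominates, only a linear lower bound survives for the expected distortion of a single pair, which is precisely $\Omega(n)$.

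\textbf{Deterministic monotone, $\Omega(n)$.} Distances may now shrink, so the doubling invariant breaks. However, the embedding must stay non-contractive and can only decrease distances, so each re-placement is ``paid'' irreversibly. I would use the same window, tracking the label $\Lambda_i$ of the $\lca$ of the alive pair relative to the true distance. Whenever the adversary chooses the bad side, the algorithm must either keep the pair far apart, paying stretch, or merge subtrees, permanently lowering some label. Non-contractiveness on later pairs prevents relabeling below the true scale. Charging each step one unit of ``scale drift'' then gives distortion $\Omega(n)$. Verifying that the deterministic adversary can always find a bad side, using monotonicity to prevent lifting subtrees back apart, is the step I expect to be hardest.
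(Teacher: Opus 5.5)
There is a genuine gap, and it lies in the one design choice shared by all three parts: your window always deletes the \emph{older} point $x_{i-1}$, so the adversary's only lever is where to put $x_{i+1}$. The embedding, however, responds \emph{after} seeing $x_{i+1}$. The ultrametric inequality only guarantees that one of $d_T(x_{i-1},x_{i+1})$ and $d_T(x_i,x_{i+1})$ is at least $d_T(x_{i-1},x_i)$, and the embedding decides which. It will therefore put the stretch on the pair that is about to be deleted.

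Concretely, take the moves you allow: $p_{i+1}=p_i\pm g/2$ or $p_i\pm 2g$, with $g=|p_i-p_{i-1}|$. Consider the strict embedding that always sets $d_T(x_i,x_{i+1})=2|p_{i+1}-p_i|$ and completes the triangle ultrametrically. Normalize $p_{i-1}=0$, $p_i=1$, $d_T(x_{i-1},x_i)=2$. The four moves $p_{i+1}\in\{\tfrac12,\tfrac32,-1,3\}$ give $\bigl(d_T(x_i,x_{i+1}),d_T(x_{i-1},x_{i+1})\bigr)=(1,2),(1,2),(4,4),(4,4)$. The corresponding true distances are $(\tfrac12,\tfrac12),(\tfrac12,\tfrac32),(2,1),(2,3)$. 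So this is a valid non-contractive strict embedding into 2-HSTs with distortion at most $4$, and it preserves its own invariant. A strict deterministic embedding is also monotone, and trivially probabilistic. Hence this single embedding defeats your construction in all three parts. In particular, $\rho_{i+1}\ge 2\rho_i$ is false, and no choice of sign can enforce it.

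Here are the missing ideas, as in the paper.

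\textbf{Deterministic strict.} The adversary must keep the freedom of \emph{which} old point to delete. Insert the midpoint $x$ of the alive pair $(u,v)$ and look at the embedding. Keep $x$ together with whichever of $u,v$ has $d_T(x,\cdot)\ge d_T(u,v)$; one exists because $d_T(u,v)\le\max(d_T(u,x),d_T(x,v))$. The true distance halves each step while the frozen tree distance never drops, giving $2^{n-2}$.

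\textbf{Probabilistic strict.} This is the construction of Indyk et al. The coin decides which neighbour $y_i\in\{l_i,r_i\}$ of $v_i$ survives, i.e., into which half the future points go. It is flipped \emph{after} $v_i$ is embedded, so it hits the stretched side with probability $1/2$. The final pair's stretch is then at least $2^J$, where $J$ is the length of the terminal run of unlucky flips, and $\E[2^J]\ge\sum_j 2^{-j-1}2^j=\Omega(n)$. Your heuristic that $\E[\log\rho]$ grows linearly is also wrong: a lucky flip lets the embedding reset the stretch. Moreover, linear growth of $\E[\log\rho]$ would by Jensen give an exponential bound, not the linear one you conclude.

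\textbf{Deterministic monotone.} A sliding window without memory cannot work; you need a permanent anchor. Keep $0$ alive and slide a unit pair $(x,x+1)$ to the right. Monotonicity gives $d_{t_{i+1}}(0,i)\le d_{t_i}(0,i)$. The ultrametric inequality gives $d_{t_{i+1}}(0,i+1)\le\max\bigl(d_{t_{i+1}}(0,i),d_{t_{i+1}}(i,i+1)\bigr)$. By induction, $n\le d_{t_n}(0,n)\le\max_i d_{t_{i+1}}(i,i+1)$, so some unit-distance pair is stretched by $n$. Here monotonicity is exactly the adversary's tool, not an obstacle to be worked around.
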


Despite this deterministic impossibility, randomization surprisingly enables embeddings whose distortion is bounded by the width $l$:

\begin{restatable}[Probabilistic Dynamic Embedding]{theorem}{GeneralDynamic}
\label{thm: general dynamic}
There exists a probabilistic online monotone embedding from any metric space into HSTs with distortion $O(l \log l)$, where $l$ is the width of the sequence. If the points are from $O(1)$-dimensional normed space, the distortion improves to $O\bigl(l\bigr)$. These results hold without prior knowledge of $l$.
\end{restatable}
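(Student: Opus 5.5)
We plan to derive \cref{thm: general dynamic} from the incremental construction behind \cref{thm: general}, using a random-partition (FRT/CKR-style) embedding whose hierarchy is built online. The key observation is that a point only needs its ancestors in the HST to be determined relative to the currently alive points. A departure simply deletes a leaf and never increases any distance. Arrivals, in turn, can be handled by monotone merges: whenever a new point $v$ arrives, some existing clusters may have to be merged at some level. Merging two subtrees at level $i$ only decreases the distances between them, and this is exactly what monotonicity permits.

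\emph{Step 1 (the algorithm).} We fix a uniformly random scale $\beta\in[1,2)$. For each level $i$ we take radius $r_i=\beta 2^i$, and we draw a random ordering (a random priority) of points. The natural CKR construction is to assign each alive point $v$ at level $i$ to the highest-priority alive point within distance $r_i$. A pure CKR assignment is not monotone under arrivals, because a new high-priority center can steal points and split a cluster. We therefore modify it. When a new point arrives, it joins an existing level-$i$ cluster if some alive member of that cluster lies within $r_i$. If it is within $r_i$ of several clusters, those clusters are merged; this is a monotone operation. Departures never split clusters. Clusters are nested across levels, since a merge at level $i$ forces merges at all higher levels. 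The HST distance is $2^{i+1}$ when $i$ is the lowest level at which $u$ and $v$ share a cluster, so the embedding is non-contractive up to a constant provided level-$i$ clusters have small diameter at the time we compare. Because clusters grow only by chaining through alive points, we instead charge the diameter via the number of alive points: a chain of merges among at most $l$ simultaneously alive points spans distance at most $O(l\, r_i)$. We will then use scales $r_i=2^i/(c\,l)$, guessing $l$ by doubling. When the observed width exceeds the current guess, we coarsen all levels by a factor of 2, which is again a monotone operation.

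\emph{Step 2 (expansion).} For fixed alive $u,v$ at distance $d$, the expected distance is $\sum_i \Pr[u,v \text{ separated at level } i]\cdot 2^{i+1}$. Separation at level $i$ requires the ball of radius $r_i$ around the path of merges to ``cut'' between them. Using the random $\beta$, the probability of a cut at level $i$ is $O(d/r_i)$, multiplied by the number of distinct clusters that could interfere. With the ordering/padding argument of FRT, that factor becomes $O(\log l)$, since only the alive points near $u$ matter, and harmonic sums over at most $l$ candidates give $H_l$. Summing gives $O(d\, l\log l)$. Here the factor $l$ comes from the shrunken scales needed for the contraction guarantee, and the factor $\log l$ from the padding. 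For $O(1)$-dimensional normed spaces, we replace the FRT padding bound with the standard $O(\dim)$ padding of random grid/ball partitions, which yields $O(l)$.

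\emph{Main obstacle.} The delicate part is non-contraction under chaining. A cluster that was merged through points that have since departed can contain two alive points that are far apart, so the cluster diameter is not bounded by $l\,r_i$ at later times. We will first try to handle this by splitting, which is disallowed. Failing that, we use an invariant that forbids merging unless the merged cluster currently has alive diameter at most $r_i\cdot O(l)$, and otherwise we place $v$ into a fresh cluster at that level. We must then show that the probability that this refusal separates close points is still $O(d/r_i)$. The randomness in $\beta$ and in the ordering has to make this charging work, and is exactly why deterministic monotone embeddings fail (\cref{thm: lower bound dynamic}). Verifying this charging argument, together with the doubling over unknown $l$, is the crux.
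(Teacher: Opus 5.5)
There is a genuine gap at the step you call the crux. For your rule it is not merely unverified; it is false.

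\textbf{Refusals split close pairs.} Once an arriving point joins and merges every level-$i$ cluster that has an alive member within $r_i$, the random priorities play no role. Given $\beta$, the partition is deterministic single linkage with a diameter cap. A close alive pair can then only be split by a refused merge, and refusals are triggered by far-away alive points that the oblivious adversary places.

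Here is a concrete instance on the line.
\begin{itemize}
\item Keep $b=0$ alive. Move two walkers from $b$, in steps below the smallest possible $r_i$, up to a point $z$ at distance $Z$ in the middle of the range of the cap.
\item Insert $p=Z+\rho$, with $\rho=\Theta(2^i/l)$ at most half the smallest $r_i$.
\item Finally insert $q$ with $d(p,q)=\epsilon$ and $d(q,z)\le r_i$.
\end{itemize}
Suppose the cap lies in $[Z,Z+\rho)$, which has probability $\Theta(1/l)$ over $\beta$. Then the walkers stay in $b$'s cluster $K$, and $p$ is refused and opens a fresh cluster. Next, $q$ is close to both $K$ and $p$'s cluster, so it triggers a refused merge and opens yet another cluster. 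Hence $\E[d_T(p,q)]=\Omega(2^i/l)$ while $\epsilon$ is arbitrary, giving unbounded distortion at width $5$.

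The natural alternative, joining the cluster of the nearest alive point, fails even deterministically. Take two base points farther apart than every possible cap but less than twice the smallest one, and grow a cluster from each toward the other. Let the last point land at distance $\epsilon/2$ from its predecessor and $\epsilon$ from the other cluster.

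\textbf{What is missing: a base partition independent of merges.} You need a random base partition whose boundaries never depend on merge decisions. The paper keeps Bartal-style ball carving ordered by birth time, so a new center never steals points. This, rather than dropping the carving, is the fix for the CKR non-monotonicity you noticed. Radii lie in $[s/16,s/8]$ and are drawn from a truncated exponential with parameter $\chi_j=2j$, where $j$ is the number of components still containing alive points. This gives splitting probability $O(\log l)\,d(u,w)/s$ at every scale.

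Merges only coarsen this partition. They are accepted only if $r_1+r_2+d(c_1,c_2)\le s$ for all pairs of components involved, which also covers future arrivals, and they are re-checked after departures. So a refusal never costs more than the base partition already does.

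\textbf{Your Step~2 summation is also incorrect.} With $r_i=2^i/(cl)$, your per-level estimate $O(\log l)\,d/r_i$ times $2^{i+1}$ is already $O(d\,l\log l)$ for each level. You never bound the number of levels over which this is summed, and a priori it is $\Theta(\log\Delta)$.

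In the paper, the factor $l$ comes from counting scales, not from shrinking them. At a scale $s$ where no alive pair has distance in $[s/4,s]$, all attempted merges succeed and the partition is $0$-smooth there. By \cref{lemma: relevant scales} applied to the at most $l$ alive points, only $O(l)$ scales are relevant at any time. Hence $\sum_j\delta_t^{(j)}=O(l\log l)$, and \cref{lemma: hierarchy} with $\gamma=1$ gives the bound with no knowledge of $l$ and no doubling.

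The normed-space case is the same argument with an $O(D)$-smooth randomly shifted grid as the base partition.
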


We complement this result with an almost matching lower bound of $\Omega(l)$. Our lower bound is constructed on the line and holds even offline. The proof is short but non-trivial, and may be of interest for future research aimed at closing the gap in the randomized competitive ratio for the $k$-server problem.

\begin{restatable}[Dynamic Lower Bound]{theorem}{DynamicLowerRand}
\label{thm: lower bound dynamic rand}
For every $l\in\mathbb N$, there exists an update sequence of width $l$ on the line such that any probabilistic monotone embedding of the sequence into HSTs incurs distortion $\Omega(l)$, even if the entire sequence is known in advance.
\end{restatable}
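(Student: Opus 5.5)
The plan is a Yao-style averaging argument on a concrete update sequence on the integer line. Two structural facts drive it. The first is the \emph{ultrametric chain property}: if the alive points are $x_1<\dots<x_m$ on the line and $d_T$ is an HST metric, then $d_T(x_1,x_m)\le \max_i d_T(x_i,x_{i+1})$. Hence if the embedding is non-contractive, at every time some \emph{adjacent} alive pair has tree distance at least the diameter $x_m-x_1$ of the alive set. The second is \emph{monotonicity}: for a fixed pair $\{u,v\}$, the tree distance is non-increasing over the pair's joint lifetime. So the event ``$d_T(u,v)\ge D$'' holds on a (random) \emph{prefix} of that lifetime. I would write $p_{uv}(t)$ for the probability that $\{u,v\}$ is still ``long'' (distance $\ge D$) at time $t$. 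This function is non-increasing in $t$, and distortion $\alpha$ forces $p_{uv}(t)\cdot D\le \alpha\,|u-v|$ at every time.

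The first thing I would check is what the naive sliding window gives. Alive points occupy $l$ consecutive integers, and one step inserts the next integer and deletes the oldest. Every window needs a long adjacent pair, with $D=l-1$. However, the embedder can always make the \emph{newest} pair long, so covering constraints only yield $\sum_{\text{pairs in window}} p(t)\ge 1$ together with $p\le \alpha/(l-1)$. That gives nothing beyond $\alpha\ge 1$. The real instance must therefore exploit the prefix property: it must force the embedder to commit to long pairs \emph{before} it knows which pair will be needed later.

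To do this, I would make the window move back and forth rather than only forward. The window sits on positions $\{1,\dots,l\}$, and the adversary repeatedly deletes a point at some interior position $j$ and reinserts a fresh point at (essentially) the same position. This creates two new adjacent pairs and temporarily shrinks the diameter. It then deletes an endpoint and reinserts it so that the diameter becomes $l-1$ again. The schedule of which positions are refreshed is chosen so that an adjacent pair which was long and later shrunk can never become long again. The schedule should also ensure that, at a time when the diameter must be covered, the candidate pairs are exactly those whose ``long'' prefixes had to begin at an earlier phase. With the order arranged like a binary or round-robin counter over the $l-1$ gaps, I would aim for a phase structure with $\Theta(l)$ disjoint times. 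At each such time the covering pair must be one of the pairs that has survived since the start of the phase. Because the probabilities $p$ are non-increasing, the total long-probability mass carried at the start of the phase must be $\Omega(l)$ spread over $O(l)$ unit-length pairs. Some pair then has $p\ge\Omega(1)$ while $D=l-1$, which gives $\alpha=\Omega(l)$.

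The main obstacle is designing this refresh schedule so that the prefix constraint truly accumulates. Any pair the embedder can create \emph{fresh} exactly when it is needed lets it avoid paying, as the sliding-window example shows. My first attempt would be a schedule in which each refresh destroys the long pair the embedder most plausibly used. Because the sequence is fixed in advance, this has to be done in expectation: I would choose the refreshed gap uniformly at random (a distribution over sequences) and apply Yao's principle. That reduces the argument to showing that a deterministic embedder against a random refresh order loses a constant fraction of its long mass per refresh. Since the bound is claimed to hold even offline, nothing about online knowledge should be used. Only monotonicity and the chain property should enter.
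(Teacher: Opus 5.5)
\textbf{There is a genuine gap: your instance cannot force $\Omega(l)$, whatever refresh schedule you choose.}

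Every point you insert lies at one of the positions $1,\dots,l$ (or within a tiny perturbation of them), and simultaneously alive points are at distance at least $1$. Every such sequence admits an $O(\log l)$-distortion embedding that never changes a distance. Sample once a random HST of the $l$ positions with $O(\log l)$ expected stretch: FRT, or a randomly shifted dyadic hierarchy of $[0,2l]$ with $O(\log l)$ levels if positions are perturbed. At each time $t$, output its restriction to $L_t$, which is again an HST. This is non-contractive and trivially monotone, even strict.

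Under this embedding, each unit pair has tree distance at least $l-1$ with probability $O(\log l/l)$. So the total ``long mass'' over the $l-1$ adjacent pairs is $O(\log l)$ at every time. It satisfies both your covering constraint and your prefix constraint, yet the per-phase $\Omega(l)$ accumulation you aim for never occurs. With only $O(\log l)$ distance scales present, the forced stretch is $O(\log l)$. An $\Omega(l)$ bound needs $\Theta(l)$ geometrically separated scales, i.e.\ aspect ratio $2^{\Theta(l)}$, while keeping the width $O(l)$.

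There is also a quantifier problem. Randomizing the refresh order and applying Yao over sequences gives at best an oblivious-online bound. An embedder that knows the sequence in advance knows which gap will be refreshed, so this cannot give the stated offline bound.

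\textbf{How the paper does it.} It uses one fixed sequence: a unit pair $(x,x+1)$ sweeps forward from $0$ to $m=2^l$, while the $O(l)$ dyadic endpoints $E(x)$ are kept alive. Yao's principle is applied to the query, namely the uniform distribution over the triples $(x,x+1,t_{x+1})$.

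The key step is a first-crossing argument. Non-contraction gives $d_{t_m}(0,m)\ge m$. Take the largest $p<m$ with $d_{t_p}(0,p)<m$. Monotonicity gives $d_{t_{p+1}}(0,p)<m$, and the ultrametric inequality then forces $d_{t_{p+1}}(p,p+1)\ge m$. This pair is drawn with probability $2^{-l}$, so it contributes $1$ to the expectation.

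The remaining unit pairs split into dyadic blocks whose lengths are the powers of two in the binary expansions of $p$ and of $2^l-1-p$. Together these contain each $2^j$, $j<l$, exactly once. The block endpoints persist as encompassing points, so each block is a scaled copy of the instance. Induction then gives $f(l)\ge 1+\sum_{j<l}2^{j-l}f(j)\ge l/2$.

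So every scale contributes $\Omega(1)$ to the expected stretch of a random unit pair. This recursion over exponentially separated scales is the idea your plan is missing.
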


\subsubsection{Applications}

Finally, we demonstrate some applications of our embeddings for online algorithms. To give a systematic meta-theorem, we consider online algorithms whose analysis is based on a potential function. Note that the competitiveness of an online algorithm is \emph{equivalent} to the existence of a corresponding potential function~\cite{BenDavidBKTW94}.\footnote{However, it can be difficult to find an explicit expression for the potential.} Intuitively, the potential value quantifies the amount of disadvantage of the online algorithm's configuration. A natural property of potential functions, which is typically satisfied (cf. Section~\ref{sec: monotone potential}), is to be \emph{non-decreasing} in distances of the metric space. The following theorem, stated more formally in Theorem~\ref{thm: reduction}, shows that this condition suffices in order for an online algorithm to be compatible with our monotone embeddings.

\begin{theorem}[Application, Informal]\label{thm:informalApplication}
Consider an online problem $\Pi$ on a metric $N$. If there exists an online monotone embedding from $N$ into a family of metrics $\mathcal{M}$ with distortion $\lambda$, and a $\rho$-competitive algorithm for $\Pi$ on every metric $M \in \mathcal{M}$ using a potential function that is monotone non-decreasing in distances, then there is a $\rho \lambda$-competitive algorithm for $\Pi$ on $N$.
\end{theorem}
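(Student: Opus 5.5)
The plan is a simulation argument with a potential-based amortized analysis. Given a request sequence for $\Pi$ on $N$, whenever a new point of $N$ becomes relevant (it appears in a request), we feed it to the online monotone embedding. This maintains, at each time $t$, a metric $M_t\in\mathcal M$ and a map $f_t$ from the current points into $M_t$ with the following properties:
\begin{enumerate}
\item \emph{Non-contraction:} $d_{M_t}(f_t(u),f_t(v))\ge d_N(u,v)$.
\item \emph{Monotonicity:} $d_{M_{t+1}}(f_{t+1}(u),f_{t+1}(v))\le d_{M_t}(f_t(u),f_t(v))$ for already embedded $u,v$.
\item \emph{Distortion:} $d_{M_t}(f_t(u),f_t(v))\le\lambda\, d_N(u,v)$, in expectation in the probabilistic case.
\end{enumerate}
We run the $\rho$-competitive algorithm $A$ on the image requests in $M_t$, and our algorithm on $N$ copies $A$'s configuration through $f_t^{-1}$. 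Its cost on $N$ is at most $A$'s cost in $M_t$ by non-contraction. One point needs care: a configuration point of $A$ may be a point of $M_t$ outside the image. If so, we either map it to a nearest image point or assume (as in the formal version) that $A$ stays on requested points, losing at most a constant.

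For the analysis, the potential satisfies, at each step on the current tree $M_t$,
\[
\mathrm{cost}_A + \Delta\Phi \le \rho\,\mathrm{cost}_{\mathrm{OPT}_{M_t}},
\]
where $\Phi$ depends on $A$'s and OPT's configurations and on distances among them. The key step is the \emph{re-embedding event}, when $M_t$ changes to $M_{t+1}$. Both configurations stay on the same original points, and every pairwise distance among embedded points weakly decreases by monotonicity. Since $\Phi$ is non-decreasing in distances, $\Phi$ does not increase at such an event. Summing over all steps, and using $\Phi\ge0$ with a bounded initial value, gives
\[
\mathrm{cost}_A \le \rho \sum_t \mathrm{cost}_{\mathrm{OPT}}^{(t)} + O(1).
\]

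It remains to compare with the true offline optimum on $N$. We let OPT in the tree metrics simply follow OPT$_N$'s moves, mapped through $f_t$. Each move $u\to v$ at time $t$ costs $d_{M_t}(f_t(u),f_t(v))$. This is at most the distance in the embedding at the time both points first coexisted, by monotonicity, and that in turn is at most $\lambda d_N(u,v)$ in expectation. This requires that OPT$_N$ only uses points that have been embedded; we arrange this by embedding points upon request, before serving. Taking expectations yields $\E[\mathrm{cost}]\le\rho\lambda\,\mathrm{OPT}_N+O(1)$.

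I expect the main obstacle to be formalizing the potential-monotonicity step. Specifically, $\Phi$ must be well-defined across the changing metrics $M_t$, meaning a function of the configurations and the distance matrix on the alive embedded points. We also need that re-embedding does not force $A$ to pay, since $A$'s configuration is a set of original points whose images are merely relocated. I would first state the formal hypothesis of Theorem~\ref{thm: reduction} precisely in this form, namely $\Phi$ as a non-decreasing function of the pairwise distance vector, and then verify the two inequalities above.
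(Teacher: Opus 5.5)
Your proposal is correct and follows essentially the same three-step argument as the paper's proof of Theorem~\ref{thm: reduction}. Per realization of the embedding, you sum the potential inequality on the current metric $M_t$, using that $\Phi$ cannot increase when distances shrink; you bound the online cost on $N$ by the cost in $M_t$ via non-contraction; and you bound the expected cost of the fixed offline answer sequence in $M_t$ by $\lambda$ times its cost in $N$ via the distortion guarantee.

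One minor remark: your hedge about $A$ occupying points of $M_t$ outside the image is unnecessary, since $M_t=(L_t,d_t)$ is a metric on exactly the embedded points. A constant-factor loss there would in any case be incompatible with the exact $\rho\lambda$ bound.
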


As examples, we recover a result for $k$-server in~\cite{Bartal2020}, and give new applications to the $k$-taxi problem.

\begin{restatable}[$k$-Server]{theorem}{kserver}
\label{thm: k-server}
    There is an $O(\log^2 k \log^2 n)$-competitive algorithm for the $k$-server problem on general metrics and an $O(\log^2 k \log n)$-competitive algorithm on $O(1)$-dimensional normed spaces, where $n$ is the number of requested locations.
\end{restatable}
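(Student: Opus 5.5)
We obtain \cref{thm: k-server} by combining the embedding of \cref{thm: general} with the meta-reduction of \cref{thm:informalApplication} (formally Theorem~\ref{thm: reduction}). The HST algorithm we use is the $O(\log^2 k)$-competitive randomized $k$-server algorithm for HSTs of Bubeck et al.~\cite{Bubeck18}. Its analysis rests on a potential built from weighted optimal transport distances between the fractional server configuration and the offline configuration on the tree. This potential is monotone non-decreasing in the edge lengths of the HST.

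\textbf{Step 1: embed only the requested points.} Only the requested points need to be embedded. Each new request location is fed to the probabilistic online monotone embedding of \cref{thm: general}. Over at most $n$ points this has distortion $O(\log^2 n)$ on general metrics and $O(\log n)$ on $O(1)$-dimensional normed spaces. The $k$ initial server positions are added as the first arriving points, which changes $n$ only to $n+k$. This is harmless when $k\le n$; otherwise the bound $\log(n+k)=O(\log k)$ is absorbed by the $\log k$ factors after slight care. If $n$ is unknown, we use guess-and-doubling: whenever the number of points exceeds the current guess $2^j$, we restart the embedding with guess $2^{2^{j}}$, or alternatively $2^{j+1}$. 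The resulting geometric series of costs over phases stays within a constant factor, since each restart is charged to the phase's offline cost plus a $k$-server reconfiguration bounded by the metric.

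\textbf{Step 2: verify the hypotheses of Theorem~\ref{thm: reduction}.} Three points must be checked.
\begin{enumerate}
\item Movement cost in the HST dominates real cost, because the embedding is non-contractive. The same holds after monotone updates, which only shrink HST distances but never below the true distances.
\item Offline cost in the HST is at most $\lambda$ times the real offline cost, by the expected distortion.
\item The HST algorithm tolerates mid-run shrinking of edge weights and re-placement of points, because its potential does not increase under such updates.
\end{enumerate}

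\textbf{Step 3 (the main obstacle).} The delicate part is monotonicity of the potential under the structural changes the embedding makes. These are not merely weight decreases: points are repositioned in the tree, for instance by merging subtrees. I would model every monotone update as a sequence of operations that each shrink some distance, namely edge-weight reductions and contractions. I would then argue that the optimal-transport potential is $1$-Lipschitz-monotone in the tree metric. This follows because transport costs are sums of distances with nonnegative coefficients, and any entropic or regularization terms depend only on the tree's level structure, which must be shown to be preserved or to only decrease. If the potential of \cite{Bubeck18} is not directly monotone under re-parenting, I would instead recompute the algorithm's fractional state on the new tree by projection. The cost of that projection is bounded by the decrease in distances, so it can be paid by the potential drop. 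Combining everything gives competitive ratio $\rho\lambda = O(\log^2 k)\cdot O(\log^2 n)$ on general metrics and $O(\log^2 k \log n)$ on $O(1)$-dimensional normed spaces. Finally, rounding the fractional algorithm on HSTs to a randomized integral one loses only an $O(1)$ factor.
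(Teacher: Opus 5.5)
Your top-level architecture matches the paper's: the embedding of \cref{thm: general}, the HST algorithm of \cite{Bubeck18}, a reduction as in \cref{thm: reduction}, then rounding. But Step~3, which you rightly flag as the crux, is the actual content of the paper's proof, and your plan for it rests on a wrong picture of what happens. The embedding never shrinks edge weights, since the level weights $2^j$ are fixed. Apart from inserting new leaves, its only update is fusing two sibling vertices $u,u'$ into one vertex $u^*$ when clusters merge. The fractional state of \cite{Bubeck18} is indexed by tree vertices (the anti-server variables $x_{u,i}$), so you must say how $x$ and the offline $y$ are carried over to the new tree, and then prove the potential does not increase. Moreover, that potential is not an optimal-transport cost. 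It is $C_0D(y,x)-H(x)$, where $D$ is a weighted $\delta$-shifted relative entropy and $H$ is a weighted entropy-type term entering with a minus sign, so the coefficient of a weight $w_u$ can be negative. Hence the ``nonnegative coefficients on distances'' argument does not apply. The paper concatenates and sorts the lists $\{x_{u,i}\}$ and $\{x_{u',i}\}$, giving $z_{u^*}=z_u+z_{u'}$ and staying in the assignment polytope, and does the same for $y$. It then shows $D$ does not increase by a rearrangement argument: $(a,b)\mapsto b\log(b/a)$ is submodular, so sorted pairing is optimal. It shows $-H$ does not increase because the terms $z_u\log(z_{p(u)}+\varepsilon)$ are unchanged and $(a+b+(1+c)\varepsilon)\log(a+b+\varepsilon)\ge(a+(1+c)\varepsilon)\log(a+\varepsilon)+(b+(1+c)\varepsilon)\log(b+\varepsilon)$ whenever $(1+c)\varepsilon\le 1$. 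Your fallback (project the state and pay with the potential drop) is circular: without monotonicity there is no drop to pay with.

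You also treat the rounding of \cite{bansal15polylog} as a black box applied at the end. In fact it is an online procedure whose invariant lives on a fixed tree: the random integral configuration satisfies $z'_u\in\{\lfloor z_u\rfloor,\lceil z_u\rceil\}$ at every vertex. After fusing $u,u'$ this can fail at $u^*$, e.g.\ $\lfloor z_u\rfloor+\lfloor z_{u'}\rfloor=\lfloor z_u+z_{u'}\rfloor-1$ when the fractional parts sum to more than $1$. One must therefore re-couple $(z'_u,z'_{u'})$, preserving both marginals, so that the sum is balanced. This is \cref{lem:rounding-merge}, and only with it does the $O(1)$ rounding loss carry over.

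There are two smaller slips. First, doubling the guess for $n$ (your ``alternatively $2^{j+1}$'') only bounds phase $j$ by $O(\log^2 k\, j^2)\,\mathrm{opt}_{t_j}$, and summing gives up to $\Theta(\log^3 n)$, which is not a geometric series. Only squaring the guess, or doubling a guess for the cost as in the paper's $k$-taxi proof, works. Second, for $k>n$ a factor $\log^2(n+k)$ is not absorbed by the $\log^2 k$ already present; it yields $O(\log^4 k)$. The paper sidesteps this by counting initial server locations among the $n$ points.
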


The $O(\log^2 k\log^2 n)$ bound matches the guarantee in~\cite{Bartal2020}, where it was obtained by combining an $O(\log^2k)$-competitive HST algorithm with an online embedding of distortion $O(\log n\log\Delta)$ paired with algorithm combination techniques (applicable to problems that admit a so-called min-operator) and an additional ``baseline'' algorithm, which allows to replace $\Delta$ by $n$ in the overall competitiveness. Our algorithm offers an alternative perspective on this result, where the $O(\log^2 n)$ factor in the competitiveness comes directly from the embedding, without requiring a min-operator or a baseline algorithm.

Similarly, our embedding yields an alternative, more direct method to recover an $O(\log^2 n)$-competitive algorithm (\cref{thm: constraint forest}) for the subadditive constrained forest problem~\cite{KP95Constrained, Bartal2020}.

For the $k$-taxi problem, we obtain the following new result.

\begin{restatable}[$k$-Taxi]{theorem}{ktaxi}
\label{thm: k-taxi}
There is an $O(2^k \log^{2} n)$-competitive algorithm for the $k$-taxi problem on general metrics, and an $O(2^k \log n)$-competitive algorithm for $O(1)$-dimensional normed spaces, where $n$ is the number of requested locations.
\end{restatable}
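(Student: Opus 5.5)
We obtain the $k$-taxi bounds by applying the reduction of \cref{thm:informalApplication} (formally \cref{thm: reduction}) to the probabilistic embeddings of \cref{thm: general}.

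\textbf{Step 1 (HST algorithm).} We need a $\rho$-competitive $k$-taxi algorithm on HSTs whose analysis uses a potential that is non-decreasing in the tree's edge lengths. The natural candidate is the known $O(2^k)$-competitive algorithm for $k$-taxi on HSTs (Coester--Koutsoupias). Its analysis is a potential argument on the tree. I would re-examine it to confirm that the potential is a nonnegative combination of distances between server positions and tree nodes, weighted by edge lengths, so that contracting edge lengths can only decrease it. If the potential also contains terms with negative coefficients, I would regroup them with the cost terms they cancel, or argue per edge that the combined coefficient of each edge length is nonnegative. This monotonicity check is the main obstacle; everything else is plumbing.

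\textbf{Step 2 (embedding).} We embed the requested locations online. A $k$-taxi request $(s,t)$ introduces up to two new points, together with the $k$ initial server positions. So the number of embedded points is $O(n+k)$, and we may assume $k\le n$ by counting initial positions as locations. \cref{thm: general} then gives expected distortion $O(\log^2 n)$ in general and $O(\log n)$ in $O(1)$-dimensional normed spaces.

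\textbf{Step 3 (combining).} Invoking \cref{thm: reduction} gives the following accounting.
\begin{itemize}
\item Because the embedding is non-contractive, costs paid by the simulated HST algorithm upper bound real costs, with servers mapped to the preimages of their tree positions.
\item When distances shrink after a monotone update, the potential does not increase, so the amortized inequality survives.
\item Relocations in the HST cost the HST distance, which is at least the true distance.
\item In expectation, OPT's cost in the HST is at most the distortion times OPT's true cost. The taxi-ride cost $d(s,t)$ is common to both sides and scales the same way.
\end{itemize}
The result is $O(2^k)\cdot O(\log^2 n)$ in general and $O(2^k\log n)$ in normed spaces.

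\textbf{Step 4 (unknown $n$).} If $n$ is unknown, we use guess-and-doubling on $n$. In phase $i$ we run a fresh embedding for $2^i$ points. Phase costs form a geometric series, and restarting costs at most $k$ times the current aspect-bounded moves charged to OPT over the phase, so only constant factors are lost.
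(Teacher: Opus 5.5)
\textbf{Steps 1--3 are fine; Step 4 has a genuine gap.} Steps 1--3 follow the paper's route and give the bound when $n$ is known in advance. The Coester--Koutsoupias potential is $(2^k-1)$ times the minimum-cost matching between online and offline taxi positions. It is therefore plainly non-decreasing in distances, so your contingency about negative terms is moot. The gap is Step 4, which is exactly the part the paper's proof is devoted to.

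\textbf{Why Step 4 fails.} Guessing $n$ and doubling does not make the phase costs geometric. The guarantee for phase $i$ is $O(2^k\log^2 m_i)$ times an offline cost, and nothing forces that offline cost to grow across phases. New locations can keep arriving while the optimum pays nothing, e.g.\ requests $(x,y),(y,z),\dots$ chained off an offline taxi. If each phase's algorithm is simulated from the start, then with $m_i=2^i$ there are $\log n$ phases. The most you can conclude is $\sum_{i\le\log n}O(2^k i^2)\cdot\mathrm{opt}=O(2^k\log^3 n)\cdot\mathrm{opt}$, a logarithmic factor too much.

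If instead each phase restarts the algorithm from the current online configuration, it is worse. The new algorithm is then only competitive against an offline solution starting from that configuration. That offline cost can exceed the true optimum by the matching distance between the online and offline configurations. This distance is controlled only by the cumulative costs so far, so the losses compound from phase to phase. Your claim that restarts cost ``$k$ times the aspect-bounded moves charged to OPT over the phase'' has no basis: there is no aspect-ratio bound available here.

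\textbf{What is needed.} First, simulate each phase's algorithm $\Alg_i$ on the whole prefix from the true initial configuration, so that it is compared against the true $\mathrm{opt}_t$. Second, pay for the switch at the end $t_i$ of phase $i$ by
$\mathrm{cost}_{\Alg_i}(\seq r_{[0,t_i]})+\mathrm{cost}_{\Alg_{i+1}}(\seq r_{[0,t_i]})$.
This holds because the matching distance between two algorithms' configurations after the same requests is at most the sum of their costs, by an easy induction over requests. Third, choose the phases so that the per-phase bounds grow geometrically.

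The paper does the third step by doubling a guess $\zeta_i$ for $\log^2 n\cdot\mathrm{opt}$. Phase $i$ ends once $\log^2 n_t\cdot\mathrm{opt}_t>\zeta_i$. At each phase start it resets $m_i=n_t^2$, so that $\sqrt{m_i}\le n_t\le m_i$ throughout the phase. Then each phase costs $O(2^k\zeta_i)$, and $\sum_i\zeta_i\le 2\zeta_p=O(\log^2 n\cdot\mathrm{opt})$. Alternatively, squaring rather than doubling your guess for $n$, i.e.\ $m_i=2^{2^i}$, together with simulation from scratch also works. Then $\log^2 m_i$ grows by a factor $4$ per phase, and the sum is dominated by its last term, which is $O(\log^2 n)$.
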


We note that in previous guarantees for the $k$-taxi problem~\cite{coester2019online,BubeckBCS21,BuchbinderCN23,GuptaKP24}, $n$ refers instead to the total number of points in the underlying metric space (including those that are never requested), which is infinite even for simple metrics like the line; see \cref{sec: related} for details on previous results.

Our dynamic embedding with distortion $O(l \log l)$ also suggests a potential avenue for addressing the long-standing open question of whether there exists an algorithm for the $k$-taxi problem whose competitive ratio depends only on $k$. In particular, \cref{thm: k-taxi equivalent} demonstrates that it suffices to maintain online a set of $g(k)$ points such that the optimal \emph{offline} algorithm restricted to these points achieves an $h(k)$-approximation.

\subsection{Related Work}
\label{sec: related}
\paragraph{Embeddings into HSTs.}
Offline, any metric of $n$ points can be embedded into an HST with $O(\log n)$ distortion \cite{FRT03}. For the online model, \cite{Indyk10online} adapted the offline algorithm of \cite{Bartal96} to achieve $O(\log n \log \Delta)$ distortion, and \cite{Bartal2020} provides an almost matching lower bound of  $\widetilde{\Omega}(\log n \log \Delta)$ on probabilistic embeddings into HSTs. For a distortion allowed to depend on $n$ only, it becomes $\Omega(n)$~\cite{Indyk10online}. For metrics with doubling dimension $\textsc{ddim}$, \cite{Bhore24duet} achieves $O(\textsc{ddim}\,\log \Delta)$-distortion online embeddings into HSTs.

\paragraph{Online Problems and Algorithms.} 
The $k$-server problem is one of the most prominent problems in the field of online algorithms. Deterministically, the competitive ratio is $\Theta(k)$ \cite{MMS88, KP95}. The paper \cite{Bubeck18} achieved a randomized competitive ratio of $O(\log^2 k)$ on HSTs, which implies an $O(\log^2k \log n)$-competitive algorithm on any $n$-point metric. The reader is referred to \cite{KOUTSOUPIAS09kserver} for a survey on the $k$-server problem, and~\cite{BubeckCR23} for more recent results. The latter~\cite{BubeckCR23} also shows that the approach via HST embeddings is in fact optimal for the related metrical task systems problems on general metrics, despite the distortion loss.

The $k$-taxi problem is a generalization of the $k$-server problem \cite{fiat1990}. Each request in the $k$-taxi problem is a pair of points $(s, t)$, requesting a taxi to first come to $s$ and then to $t$. In the hard version of the problem, the cost is the total distance taxis travel without a customer. This hard version has an $\Omega(2^k)$ lower bound on the competitive ratio for randomized algorithms against adaptive adversaries~\cite{coester2019online} and a few (incomparable) upper bounds: $O(2^k \log n)$, $O((n \log k)^2 \log n)$, $2^{O(\sqrt{\log k \log \Delta})} \log_\Delta n$, and $O(\log^3 \Delta \log^2(nk\Delta))$~\cite{coester2019online,BubeckBCS21,BuchbinderCN23,GuptaKP24}, all of which use $n$ for the number of points in the entire metric space. On general metrics (with infinitely many points), competitive algorithms are known only for $k\le 3$~\cite{CoesterP26}.

When applying online embeddings with online algorithms, \cite{Bartal2020} explains a framework to bypass the dependence on the aspect ratio $\Delta$ at the expense of another $O(\log n)$ factor for problems that admit min operators and belong to a class they call \textit{abstract network design problem}. Our results apply to a class of \emph{metrical request-answer games} that includes abstract network design and other metrical request-answer games. 

\paragraph{Online Algorithms with Recourse.} The idea of allowing some changes to previous actions is related to the notion of recourse in online and dynamic algorithms, with the focus on bounding the \emph{number} or \emph{cost} of changes that an algorithm makes~\cite{Bhattacharya16new, Bernstein2021DynamicMatching, Nilsen17fully, Bhore24duet}. Unlike typical models with recourse, where allowing changes introduces a tradeoff or cost, our monotone updates only decrease distances and hence can be viewed as benign flexibility rather than true recourse. We therefore view this as a free monotone relaxation rather than a standard recourse model.

\paragraph{Dynamic Algorithms.} The field of \emph{dynamic algorithms} also studies problems whose input is a sequence of arrivals and departures (e.g., of nodes or edges in a graph) while a solution to the current instance has to be maintained \cite{Bhattacharya16new, Bernstein2021DynamicMatching,Nilsen17fully,Eppstein97, frederickson91}. The goal in these problems is typically to minimize update time when computing new solutions, without restrictions on the type of changes allowed when updating the solution. In contrast, we allow only monotone changes but do not impose any restrictions on the running time, as is common in the field of online algorithms.

\subsection{Organization}
\cref{sec: prelim} defines the notions and the models and provides some useful lemmas. \cref{sec: overview} explains the main technical ideas and a general framework for our algorithms. \cref{sec: online} proves the main results for the incremental setting (distortion as a function of $n$), and \cref{sec: dynamic} for the fully dynamic setting (distortion as a function of $l$). \cref{sec: application} introduces the metrical request-answer game and proves applications for our embedding. \cref{sec: conclusion} contains concluding remarks and highlights future directions. Omitted proof details, embeddings from normed spaces, and discussions for deterministic monotone embeddings are presented in the appendix.

\section{Preliminaries}
\label{sec: prelim}

\subsection{Basic Notions}
\label{sec: notion}

\newcommand{\D}{\mathcal D}

For a set $S$, let $\D(S)$ be the set of probability distributions over $S$. We use $\P(\cdot)$ to denote probabilities and $\E[\cdot]$ for expectations. For a sequence $a \in A^*$, the notation $a_{[i,j]}$ denotes the subsequence $a_i, \ldots, a_j$.

\begin{definition}[HSTs]
    For $\mu \ge 1$, a $\mu$-\emph{hierarchically well-separated tree} (HST) is a metric space whose points are the leaves of a rooted tree $T$. Each node\footnote{The original definition in \cite{Bartal96} assigns weights to edges instead of nodes; both formulations are equivalent up to a constant factor (see, e.g., \cite{bartal2021advances, Bartal2020}).} $v$ of $T$ has a weight $\varphi(v) \geq 0$, with $\varphi(v) = 0$ if and only if $v$ is a leaf, and if $v$ is a child of $u$, then $\varphi(v) \le \varphi(u)/\mu$. The distance between two leaves $u$ and $v$ is given by $d_T(u, v) = \varphi(\lca(u, v))$, where $\lca(u, v)$ denotes the least common ancestor of $u$ and $v$.
\end{definition}

Throughout the paper, we refer to an original metric space $(X, d_X)$ and consider embeddings of finite subsets $V \subseteq X$ into HSTs. Our probabilistic embedding will fix $\mu=2$. Unless specified otherwise, $d(u,v)$ (without a subscript) refers to the distance in $X$. 
We write $d_{\max}(V) = \max_{u,v \in V} d(u,v)$ for the diameter of $V$, and $d_{\min}(V) = \min_{u \neq v \in V} d(u,v)$ for the smallest nonzero distance in $V$. The \emph{aspect ratio} of $V$ is $\Delta(V) = \frac{d_{\max}(V)}{d_{\min}(V)}$. We may omit the argument $V$ and write $\Delta$ when it is clear from context. For two metric spaces $M_1 = (V_1, d_1)$ and $M_2 = (V_2, d_2)$, we say $M_1$ \emph{dominates} $M_2$ if for all $u, v \in V_1\cap V_2$, $d_1(u, v) \geq d_2(u, v)$.

\subsection{Monotone Embeddings}

We first define an \emph{update sequence}, which captures both arrivals and departures of points:

\begin{definition}[Update Sequence]
\label{def: update sequence}
    An update sequence $\sigma$ on a metric $(X, d_X)$ is a sequence of pairs $(v_t, o_t) \in X \times \{+, -\}$. We say the point $v_t$ \emph{arrives} at time $t$ if $o_t = +$, and it \emph{leaves} at time $t$ if $o_t=-$. Let $L_t$ be the set of \emph{alive} points at time $t$: We set $L_0 = \varnothing$, and then for each $t \ge 1$:
    \[
    L_t = 
    \begin{cases}
        L_{t-1} \cup \{v_t\}, & \text{if } o_t = +,\\
        L_{t-1} \setminus \{v_t\}, & \text{if } o_t = -.
    \end{cases}
    \]
    We let $n$ denote the length of the sequence and $l$ denote its width, i.e., $\max_{t} |L_t|$. We write $V_t = \{v_j: j \le t, o_j = +\}$ for the set of all points introduced up to time $t$.
\end{definition}

Traditional metric embeddings map to a single fixed target metric. In our framework, we allow the target metric to change over time, provided distances between previously embedded points never increase. Thus, rather than mapping to one fixed metric, we map to a \emph{family} of metrics (in our case, HSTs) in an online manner. Similar to~\cite{Bartal2020}, we focus on non-contractive embeddings, so we omit the term ``non-contractive'' here, and just call them embeddings. 

\begin{definition}[Online Monotone Embedding]
\label{def: embedding}
Let $(X, d_X)$ be a metric space and let $\mathcal{M}$ be a family of metric spaces. A \emph{(deterministic) online monotone embedding} from $(X, d_X)$ into $\mathcal{M}$ takes inputs from an update sequence $\sigma$ of length $n$ one by one, and upon receiving $\sigma_t$, outputs a metric $d_t$ satisfying the following conditions:
\begin{enumerate}[itemsep=0em]
    \item $M_t = (L_t, d_t) \in \mathcal{M}$;
    \item $d_t$ dominates $d_X$ on $L_t$: for all $u, v \in L_t$, $d_X(u, v) \le d_t(u, v)$\label{it:noncontractive};
    \item $d_{t}$ is dominated by $d_{t-1}$ on $L_{t-1} \cap L_t$: for all $u,v \in L_{t-1} \cap L_t$, 
    $d_{t}(u,v) \le d_{t-1}(u,v)$.
\end{enumerate}

A probabilistic online monotone embedding is a probability distribution over deterministic ones. Such an embedding has \emph{distortion} $\lambda$ if, for every update sequence $\sigma$, every $t$, and $u, v \in L_t$,
\[
\E[d_t(u, v)] \le \lambda d_X(u, v),
\]
where the expectation is taken over the internal randomness of the embedding.
\end{definition}
For simplicity, the embeddings defined here are non-contractive (by property \ref{it:noncontractive}). In Appendix~\ref{app:contractions} we consider a slightly more general definition that drops this requirement. For probabilistic embeddings, we consider the \emph{oblivious adversary} model, i.e., the update sequence does not depend on the outcome of random choices made by the algorithm.

\subsection{From Partitions to HST Embeddings}
Instead of constructing an HST directly, we follow the established approach in \cite{Bartal96, Bartal2020} of building it implicitly via partitions.

\begin{definition}[Partition]
    A \emph{partition} $P$ of a set of points $V$ is a collection of disjoint subsets (called \emph{clusters}) whose union equals $V$. For a point $v \in V$, we denote by $P(v)$ the cluster containing $v$. A partition of a metric space is \emph{$s$-bounded} if each cluster has a diameter at most $s$. A probabilistic $s$-bounded partition is a distribution over $s$-bounded partitions.
\end{definition}

\begin{definition}[Smoothness Parameters]
    A probabilistic $s$-bounded partition $P$ of $V$ is said to be $(\delta,\varepsilon,\gamma)$-\emph{smooth} if, for all $u,v \in V$,
    \begin{align}
    d(u,v) \le s &\implies
    \P[P(u) \neq P(v)] \le \frac{d(u,v)}{s}\delta,\label{eq:smooth}\\
    d(u,v) \le \varepsilon s &\implies 
    \P[P(u) \neq P(v)] \le \frac{d(u,v)}{s}\delta\gamma.\label{eq:enforcing}
    \end{align}
\end{definition}

Condition \eqref{eq:smooth} resembles the padding parameter $\delta$ used in prior literature \cite{abraham2006advances, Bartal2020} but is slightly more relaxed. Condition \eqref{eq:enforcing} generalizes the notion of $\varepsilon$-enforcing from \cite{Bartal96}, which required no splitting at all when $d(u,v)\le \varepsilon s$ (i.e.\ $\gamma=0$). In the online setting, we cannot strictly enforce this; however, we will ensure a small $\gamma$ so that the distortion remains unaffected up to constant factors. We write $\delta$-smooth as an abbreviation of $(\delta, 1,1)$-smooth, in which case we only make use of property \eqref{eq:smooth}.

\begin{definition}[Online Monotone Partitions]
\label{def: online partition with recourse}
    An \emph{online monotone partition} on $(X,d_X)$ takes an update sequence $\sigma$ of length $n$ and, for each $1 \le t \le n$, generates a partition $P_t$ over $L_t$ such that:
    \begin{enumerate}
        \item $P_t$ does not depend on $\sigma_{[t+1,n]}$.
        \item For any $u,w \in L_t \cap L_{t+1}$, if $P_t(u) = P_t(w)$, then $P_{t+1}(u) = P_{t+1}(w)$.
    \end{enumerate}
    This ensures that clusters cannot be split over time: points in the cluster at time $t$ remain in the same cluster at time $t+1$. (Note that two clusters may merge, and new arrivals can join existing clusters, but splitting a previously formed cluster is disallowed.)
\end{definition}

The following lemma relates the distortion of the embedding to the smoothness of the partitions. A similar lemma that does not allow updates and $\gamma=0$ was proved in \cite{Bartal96} (see also \cite{Bartal2020}), so we provide a sketch here and defer the complete proof to \cref{appendix: tree building}. With this lemma, our main focus will be on building online probabilistic partitions with desired smoothness parameters.

\begin{restatable}[HST Construction]{lemma}{Construction}
\label{lemma: hierarchy}
    Suppose that for every integer $j$, there is a probabilistic online $2^j$-bounded monotone partition on $(X,d_X)$ for an update sequence $\sigma$ of length $n$ and that for each time $t \le n$, the probabilistic partition $C_t$ is $(\delta_t^{(j)}, \varepsilon, \gamma)$-smooth. Then there is an online monotone embedding from $(X,d_X)$ into HSTs that, on the same input, achieves a distortion of 
    \[
    O\left(\max_{t \le n}\left(\max_{j \in \mathbb{Z}} \delta_t^{(j)} \log \varepsilon^{-1} + \gamma \sum_{j \in \mathbb{Z}} \delta_t^{(j)}\right)\right).
    \]
\end{restatable}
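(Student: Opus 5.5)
The plan is to build the HST from a hierarchy of refined partitions, following \cite{Bartal96}. For each scale $j$ let $C^{(j)}_t$ be the given $2^j$-bounded online monotone partition. Define the level-$j$ partition $Q^{(j)}_t$ as the common refinement of $C^{(i)}_t$ over all $i\le j$. Since each $C^{(i)}$ never splits clusters, $Q^{(j)}_t$ is again an online monotone partition (if $u,w$ are together in every $C^{(i)}_t$, $i \le j$, they stay together at $t+1$). It is still $2^j$-bounded, and the levels are nested: $Q^{(j)}$ refines $Q^{(j+1)}$. Only finitely many scales matter at any time. Below $\log d_{\min}(L_t)$ every cluster is a singleton, and above $\log d_{\max}(L_t)$ we may treat clusters as one (or simply cap the levels). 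So the clusters of all $Q^{(j)}_t$ form a rooted tree $T_t$ whose leaves are the points of $L_t$. We give a level-$j$ node weight $2^{j+1}$, or more simply $d_t(u,v)=2^{j^*+1}$, where $j^*$ is the smallest $j$ with $Q^{(j)}_t(u)=Q^{(j)}_t(v)$. This is a $2$-HST.

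Next I check the embedding properties. Non-contractivity holds because $u,v$ lie in a common $2^{j^*}$-bounded cluster, so $d(u,v)\le 2^{j^*}$. Monotonicity holds because clusters never split: if $u,v$ share a level-$j$ cluster at time $t$, they do at $t+1$, so $j^*$ can only decrease. The output at time $t$ depends only on $\sigma_{[1,t]}$. To make the lemma literally produce an HST family I would argue that the weight function is well defined at each time. Randomness is just the product of the given partitions; they need not be independent.

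The distortion is the main step. Fix $u,v$, let $D=d(u,v)$, and let $k$ be such that $2^{k-1}< D\le 2^k$. We have
\[
\E[d_t(u,v)]\le 2^{k+1}+\sum_{j\ge k}2^{j+2}\,\P[Q^{(j)}_t(u)\neq Q^{(j)}_t(v)],
\]
and by a union bound $\P[Q^{(j)}(u)\ne Q^{(j)}(v)]\le\sum_{i\le j}\P[C^{(i)}(u)\neq C^{(i)}(v)]$. Swapping sums, each scale $i$ contributes $\P[C^{(i)}\text{ splits}]\cdot\sum_{j\ge \max(i,k)}2^{j+2}\approx 2^{\max(i,k)+3}\cdot\P[\cdot]$. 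For $i\le k$ the factor is $O(2^k)=O(D)$, and the probabilities are bounded trivially by $1$.

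That naive bound is where the argument fails. Summing $O(D)$ over all $i<k$ diverges. This is exactly why the truncation via $\varepsilon$ is needed, and I expect it to be the main obstacle. The fix is to not refine with scales far below the current one. I would define $Q^{(j)}$ as the refinement of $C^{(i)}$ only for $j-\log\varepsilon^{-1}\le i\le j$, giving a window of $O(\log\varepsilon^{-1})$ scales, and then restore nestedness by intersecting with $Q^{(j-1)}$. Alternatively one keeps the full refinement but splits the analysis into three ranges of $i$:
\begin{itemize}
\item For $i$ with $2^i\ge D/\varepsilon$, use \eqref{eq:enforcing}: the contribution is $2^{i}\cdot\frac{D}{2^i}\delta^{(i)}\gamma$, which sums to $\gamma D\sum_i\delta^{(i)}$.
\item For $D\le 2^i< D/\varepsilon$, use \eqref{eq:smooth}: each of these $O(\log\varepsilon^{-1})$ scales contributes $O(D\max_i\delta^{(i)})$.
\item For $2^i<D$, the relevant question is only whether $u,v$ are separated at levels $j\ge k$. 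At level $j$ separation already forces separation in some $C^{(i')}$ with $i'\ge k$, provided the window construction ignores scales below $j-\log\varepsilon^{-1}$. This leaves only the $O(\log \varepsilon^{-1})$ window scales, each charged $O(D\max\delta)$, or handled by \eqref{eq:smooth} at scale $\ge k$.
\end{itemize}
Verifying that the windowed construction keeps monotonicity and the bounded diameter, while only incurring these costs, is the delicate point I would work out carefully. Summing the three ranges gives the stated $O(\max_j\delta^{(j)}_t\log\varepsilon^{-1}+\gamma\sum_j\delta^{(j)}_t)$, and the maximum over $t$ follows.
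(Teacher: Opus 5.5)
There is a genuine gap in the very first step: you refine in the wrong direction. You define $Q^{(j)}_t$ as the common refinement of $C^{(i)}_t$ over $i\le j$, that is, with the \emph{finer} scales. But $2^i$-boundedness forces $C^{(i)}_t$ to separate $u,v$ whenever $2^i<d(u,v)$. In particular, as you note yourself, every $C^{(i)}_t$ with $2^i<d_{\min}(L_t)$ is the all-singletons partition. So every $Q^{(j)}_t$ is the all-singletons partition, no $j^*$ exists, and $d_t$ is undefined.

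This also reverses the nesting: $Q^{(j+1)}$ refines $Q^{(j)}$, not the other way round. And the series $\sum_{j\ge\max(i,k)}2^{j+2}$ in your swap of sums diverges; it is not $\approx 2^{\max(i,k)+3}$. The ``main obstacle'' you then hit (the range $2^i<D$ blowing up) is an artifact of this choice.

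The windowed repair does not remove it. Whenever $2^j$ is below roughly $d(u,v)/\varepsilon$, the window $[j-\log\varepsilon^{-1},j]$ contains a scale $i$ with $2^i<d(u,v)$, which separates $u,v$ deterministically. So $d_t(u,v)$ is of order at least $d(u,v)/\varepsilon$ in every realization, an expansion of $n^6$ in the application. ``Intersecting with $Q^{(j-1)}$'' to restore nestedness re-imports all lower scales recursively and brings you back to singletons.

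The missing idea is the top-down hierarchy the paper uses (following Bartal): refine each $C^{(j)}_t$ with the \emph{coarser} partitions. Then $u,v$ share a level-$j$ node iff $C^{(i)}_t(u)=C^{(i)}_t(v)$ for all $i\ge j$. This hierarchy is nested the right way. It is monotone, because each $C^{(i)}$ never splits. It is non-contractive, because a level-$j$ cluster lies inside a $2^j$-bounded cluster. Moreover $d_t(u,v)=2^{i^*+1}$, where $i^*$ is the largest scale whose partition separates $u$ and $v$. Hence
\[
\E[d_t(u,v)]\le 2\sum_{i\in\mathbb Z}2^i\,\P\bigl[C^{(i)}_t(u)\ne C^{(i)}_t(v)\bigr],
\]
which uses only the marginals of the partitions.

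Now a separation at a scale with $2^i<D$ costs only $O(2^i)$, so that range is a geometric sum of size $O(D)$ and needs no smoothness at all. Your third bullet was groping toward exactly this: separation at a level $j\ge k$ forces separation by some $C^{(i')}$ with $i'\ge k$. That property holds for the top-down refinement, not for a window of lower scales.

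Your other two ranges then go through as you wrote them: \eqref{eq:smooth} on the $O(\log\varepsilon^{-1})$ scales with $D\le 2^i\le D/\varepsilon$, and \eqref{eq:enforcing} above that. Together they give the claimed bound. The paper phrases the same computation as the recursion $E^j\le \delta^{(j)}_t\, d(u,w)+E^{j-1}$, unrolled from the top level down.
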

\begin{proof}[Proof Sketch]
    The HST is constructed by constructing the partition for each scale. Each cluster\footnote{More precisely, clusters of a refined partition where higher-level partitions subdivide the lower-level clusters.} at scale $2^j$ corresponds to a node of the HST at the same scale. Because no cluster can split over time, the lowest common ancestor between two points will not move to high scales; therefore, the distances between points in the same cluster will not increase and satisfy monotonicity.

    We now bound the distortion. Fix any pair $u, v$. At a level $j$ such that $\varepsilon 2^j \le d(u,v) \le 2^j$, the probability that $u,v$ are split at level $j$ is $\delta_t^{(j)}d(u,v)/2^j$. Splitting incurs a distance of $2^j$ in the HST, contributing an expected distance of $\delta_t^{(j)} d(u,v)$. There are only $\log (\varepsilon^{-1})$ such levels, which contributes the $O(\max_j \delta_t^{(j)} \log \varepsilon^{-1})$. Remaining levels either have $s_j < d(u, v)$ and contribute a constant factor or have $d(u, v) \leq \varepsilon 2^j$ and contribute $\gamma \delta_t^{(j)} \cdot d(u, v)$. 
\end{proof}

Finally, we introduce the notion of \emph{relevant} scales. We say that a scale (or level) $j$ is \emph{relevant} at time $t$ if the partition $P_t$ is not $0$-smooth at scale $2^j$. We will use this to provide an upper bound of $\sum \delta_t^{(j)}$. We note a technical lemma for bounding the number of relevant scales, whose proof is also in \cref{appendix: tree building}.
\begin{lemma}
\label{lemma: relevant scales}
    Let $V$ be a set of $n$ points in a metric space $(X, d)$. Fix $0 < \varepsilon < 1$ and define 
    \[
    S = \{j \in \mathbb{Z} \mid \exists u,v \in V: d(u,v)\in[\varepsilon2^{j-1},2^j)\}.
    \]
    Then $|S| = O(n\log(1/\varepsilon))$.
\end{lemma}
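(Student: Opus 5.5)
The plan is to charge every scale in $S$ to edges of a minimum spanning tree $T$ of $V$. There are two kinds of charges: a direct charge when some tree edge has weight in the scale's window, and a fractional, geometrically decaying charge otherwise. Both totals will be $O(n\log(1/\varepsilon))$.

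\emph{Setup.} Let $T$ be a minimum spanning tree of the complete graph on $V$ with weights $d(u,v)$. I may assume all points are distinct, since coincident points only add zero distances, which never witness any $j$. Then $T$ has $n-1$ edges, all of positive weight. I will use the standard minimax property: for any $u,v\in V$, every edge on the $T$-path $\pi(u,v)$ has weight at most $d(u,v)$. Otherwise, removing a heavier edge from $T$ and adding $\{u,v\}$ would give a cheaper spanning tree. By the triangle inequality we also have $\sum_{e\in\pi(u,v)} w_e \ge d(u,v)$.

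\emph{Covered scales.} Call $j\in S$ covered if some tree edge $e$ satisfies $w_e\in[\varepsilon 2^{j-1},2^j)$. For a fixed weight $w$, the condition $\varepsilon 2^{j-1}\le w<2^j$ means $\log_2 w<j\le \log_2 w+1+\log_2(1/\varepsilon)$. So each edge covers at most $\log_2(1/\varepsilon)+2$ scales. Hence the number of covered scales is at most $(n-1)(\log_2(1/\varepsilon)+2)$.

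\emph{Uncovered scales.} Let $j\in S$ be uncovered, with witness pair $u,v$ and $D=d(u,v)\in[\varepsilon 2^{j-1},2^j)$.
\begin{itemize}
\item By minimaxity, every edge on $\pi(u,v)$ has weight at most $D<2^j$.
\item Since $j$ is uncovered, no tree edge has weight in $[\varepsilon2^{j-1},2^j)$, so every edge on the path has weight strictly below $\varepsilon 2^{j-1}$.
\end{itemize}
Now define, for each edge $e$ and each scale $j$ with $w_e<\varepsilon 2^{j-1}$, the fractional charge $c(e,j)=w_e/(\varepsilon 2^{j-1})$. For an uncovered $j$ we get
\[
\sum_{e:\, w_e<\varepsilon 2^{j-1}} c(e,j)\;\ge\;\sum_{e\in\pi(u,v)}\frac{w_e}{\varepsilon 2^{j-1}}\;\ge\;\frac{D}{\varepsilon2^{j-1}}\;\ge\;1 .
\]
So each uncovered scale receives total charge at least $1$. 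Conversely, for a fixed edge $e$, the charges $c(e,j)$ over all $j$ with $\varepsilon2^{j-1}>w_e$ form a geometric series with ratio $1/2$ whose largest term is below $1$. Each edge therefore pays at most $2$ in total. Summing over edges, the number of uncovered scales is at most $2(n-1)$.

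\emph{Conclusion and main obstacle.} Adding the two parts gives $|S|\le (n-1)(\log_2(1/\varepsilon)+4)=O(n\log(1/\varepsilon))$.

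The step I expect to be the crux is the uncovered case. The naive approach picks the single heaviest path edge, which has weight at least $D/(n-1)$, and charges the scale to it. That loses a $\log n$ factor, because an edge can then be charged by about $\log(n/\varepsilon)$ scales. The fractional charging avoids this loss: it spreads each scale's unit of demand over all path edges in proportion to their weights. This uses that, at an uncovered scale, all path edges lie strictly below the window, so the charges decay geometrically in $j$ and sum to a constant per edge.
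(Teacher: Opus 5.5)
Your proof is correct, and it takes a genuinely different route from the paper's.

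\textbf{The paper's route.} The paper first removes $\varepsilon$. For a pair $u,v$ with MST path weights $e_0\le\dots\le e_{l-1}$, it uses $e_{l-1}\le d(u,v)\le\sum_i e_i$. It then observes that the sums of the largest $i$ path weights at most double from one $i$ to the next, so some subset sum of MST weights lies in the same dyadic interval $[2^{k-1},2^k)$ as $d(u,v)$. Next it invokes a separate combinatorial lemma, taken from the literature and proved by induction after splitting the sorted sequence at the first index $j$ with $a_j>a_0+\sum_{i<j}a_i$: nonempty subset sums of $n-1$ reals meet at most $2(n-1)$ dyadic intervals. Finally it reintroduces $\varepsilon$ multiplicatively, since each such $k$ accounts only for the scales $j\in[k,k+1+\log_2(1/\varepsilon))$. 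This gives a leading term of about $2n\log_2(1/\varepsilon)$.

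\textbf{Your route.} You work with the window $[\varepsilon 2^{j-1},2^j)$ directly. Scales hit by some MST edge cost at most $\log_2(1/\varepsilon)+2$ per edge. At every other scale, minimaxity forces the whole witness path strictly below the window. Spreading a unit of demand over the path edges in proportion to their weights then produces charges that halve with each increase in $j$, so each edge pays less than $2$ in total.

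\textbf{Comparison.} Your argument is self-contained: it needs no subset-sum lemma and no induction. It also gives the slightly sharper additive bound $(n-1)(\log_2(1/\varepsilon)+4)$. The paper's route instead isolates the cleaner structural fact that, without $\varepsilon$, MST subset sums and hence pairwise distances hit only $2(n-1)$ dyadic scales. The $\varepsilon$-dependence is then attached as a multiplicative factor.

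As in the paper, what you actually prove is $O\bigl(n(1+\log(1/\varepsilon))\bigr)$. This is the intended reading of the lemma, since the literal bound $O(n\log(1/\varepsilon))$ fails as $\varepsilon\to 1$.
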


\section{Technical Overview}
\label{sec: overview}

\paragraph{Starting Observations.} Equipped with \cref{lemma: hierarchy}, our goal is to maintain probabilistic online $s$-bounded smooth partitions at a given scale $s$. We begin by recalling some techniques used in offline embedding for $(\delta, \varepsilon,\gamma)$-smooth partitions that we make use of, and explaining the challenges in the online setting. Obtaining a reasonable $\delta$-value does not require updates; for instance, the algorithms in \cite{Bartal96, Bartal2020} guarantee $\delta = O(\log n)$ and can be simulated online. With $O(\log \Delta)$ relevant scales, this gives a distortion of $O(\log n \log \Delta)$.

Offline, one can eliminate the dependence on $\Delta$ by constructing $(O(\log n), O(1/n), 0)$-smooth partitions~\cite{Bartal96,abraham2006advances}. The idea is to first contract all pairs of points at distance $\le \varepsilon s$ into single points\footnote{Any chain of points where each adjacent pair is at distance $\le \varepsilon$ is contracted to a single point.} before partitioning, and expand them back after the partition. This inflates each cluster's diameter by at most $n \varepsilon s$, which contributes an acceptable constant multiplicative factor with $\varepsilon = O(1/n)$.

However, in the online setting, the algorithm may discover only later that two points should have been contracted, yet they were placed in separate clusters at an earlier step. Existing online embeddings thus end up with a distortion either dependent on the aspect ratio or polynomial in $n$. The aim here is to leverage \emph{monotonicity} to overcome these limitations. 

\paragraph{Incremental Setting.}
Monotone embeddings permit us to correct earlier mistakes by merging clusters containing points at a close distance $\le \varepsilon s$ that should have been initially contracted. However, repeatedly merging clusters is problematic, as it can increase cluster diameters and violate the $s$-boundedness property. Therefore, we ultimately might still split some close pairs, contributing to the positive $\gamma$ in the smoothness.

We will use the partitioning algorithm from~\cite{Bartal2020,Bartal96} as the base partition, which we detail in \cref{sec: algorithm outline}, and specify some merging strategy to ensure a small $\gamma$. The algorithm~\cite{Bartal2020,Bartal96} proceeds by constructing balls of random radii around points, which we call \emph{components}. Each point belongs to the first component that includes it. A carefully chosen distribution of the radii ensures $\max \delta^{(j)} = O(\log n)$ and $\sum \delta^{(j)} \le n \max \delta^{(j)}$. Recall from \cref{lemma: hierarchy} that our final distortion is \[
    O\left(\max_{j \in \mathbb{Z}} \delta_t^{(j)} \log \frac{1}{\varepsilon} + \gamma \sum_{j \in \mathbb{Z}} \delta_t^{(j)}\right) = O(\log n \log \varepsilon^{-1} + \gamma n \log n).
    \] Therefore, if we could achieve $\gamma \le O(1/n)$ using $\varepsilon = 1/\poly(n)$, the distortion would be bounded by $O(\log^2 n)$.

Naturally, we attempt to merge two components whenever they contain points forming a pair with distance at most $\varepsilon s$. Crucially, while the multiple merge attempts by a single component are not independent events, the events that \emph{distinct} components simultaneously split close pairs remain independent. Thus, the following surprisingly simple strategy is sufficient, illustrated in \cref{fig: merging}: 

\medskip
\emph{Attempt to merge any pair of components containing points within distance $\varepsilon s$. Allow the first component that splits a close pair to merge freely with any adjacent component, and reject all other merge attempts not involving this component}.
\medskip

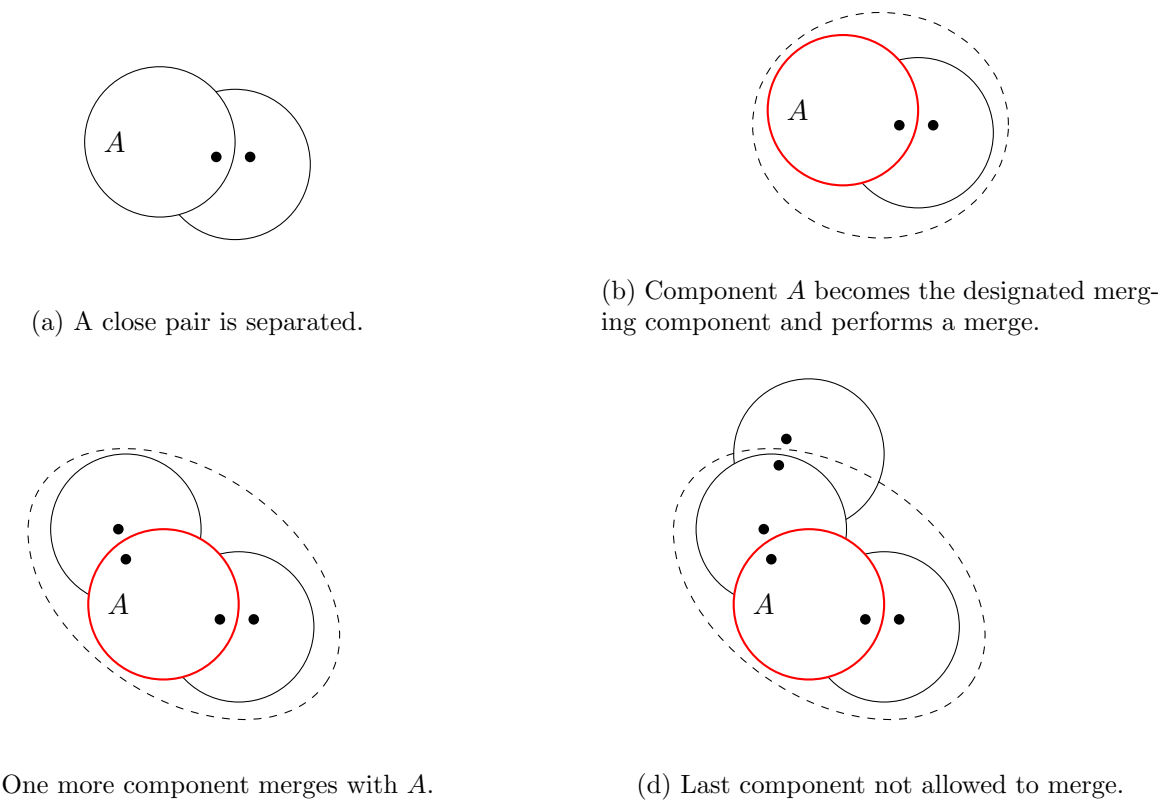
\begin{figure}[t]
  \centering

\begin{subfigure}{0.45\textwidth}
\centering
\begin{tikzpicture}

  \begin{scope}
    \clip (-2,-2) rectangle (3,1.5);
    \clip (-3,-2) rectangle (3,1.5)
          (0,0) circle(1);
    \draw (1,-0.3) circle(1);
  \end{scope}

  \draw (0,0) circle(1);
  \fill (1.2,-0.2) circle(2pt);
  \fill (0.75,-0.2) circle(2pt);

  \node at (-0.6,0) {$A$};
\end{tikzpicture}
\caption{A close pair is separated.}
\end{subfigure}
\hfill
\begin{subfigure}{0.45\textwidth}
\centering
\begin{tikzpicture}

  \draw[dashed] (0.5,-0.2) ellipse (1.7 and 1.5);

  \begin{scope}
    \clip (-2,-2) rectangle (3,1.5);
    \clip (-3,-2) rectangle (3,1.5)
          (0,0) circle(1);
    \draw (1,-0.3) circle(1);
  \end{scope}
  
  \draw[red,thick] (0,0) circle(1);

  \fill (1.2,-0.2) circle(2pt);
  \fill (0.75,-0.2) circle(2pt);
  
  \node at (-0.6,0) {$A$};
\end{tikzpicture}
\caption{Component $A$ becomes the designated merging component and performs a merge.}
\end{subfigure}

\vspace{1em}

\begin{subfigure}{0.45\textwidth}
\centering
\begin{tikzpicture}

  \draw[rotate around={-35:(0.27,0.27)},dashed] (0.27,0.27) ellipse (2.3 and 1.5);

  \begin{scope}
    \clip (-2,-2) rectangle (3,3);
    \clip (-3,-3) rectangle (3,3)
          (0,0) circle(1);
    \draw (1,-0.3) circle(1);
  \end{scope}

  \begin{scope}
    \clip (-2,-2) rectangle (3,3);
    \clip (-3,-3) rectangle (3,3)
          (0,0) circle(1);
    \draw (-0.5,1) circle(1);
  \end{scope}

  \draw[red,thick] (0,0) circle(1);

  \fill (1.2,-0.2) circle(2pt);
  \fill (0.75,-0.2) circle(2pt);
  
  \fill (-0.5,0.6) circle(2pt);  
  \fill (-0.6,1) circle(2pt);  

  \node at (-0.6,0) {$A$};
\end{tikzpicture}
\caption{One more component merges with $A$.}
\end{subfigure}
\hfill
\begin{subfigure}{0.45\textwidth}
\centering
\begin{tikzpicture}

  \draw[rotate around={-35:(0.27,0.27)},dashed] (0.27,0.27) ellipse (2.3 and 1.5);

  \begin{scope}
    \clip (-2,-2) rectangle (3,3);
    \clip (-3,-3) rectangle (3,3)
          (0,0) circle(1);
    \draw (1,-0.3) circle(1);
  \end{scope}

  \begin{scope}
    \clip (-2,-2) rectangle (3,3);
    \clip (-3,-3) rectangle (3,3)
          (0,0) circle(1);
    \draw (-0.5,1) circle(1);
  \end{scope}

  \begin{scope}
    \clip (-2,-2) rectangle (4,3.1);
    \clip (-3,-3) rectangle (4,3.1)
          (-0.5,1) circle(1);
    \draw (0,2) circle(1);
  \end{scope}

  \draw[red,thick] (0,0) circle(1);

  \fill (1.2,-0.2) circle(2pt);
  \fill (0.75,-0.2) circle(2pt);
  
  \fill (-0.5,0.6) circle(2pt);  
  \fill (-0.6,1) circle(2pt);  
  
  \fill (-0.3,2.2) circle(2pt);  
  \fill (-0.4,1.85) circle(2pt);  

  \node at (-0.6,0) {$A$};
\end{tikzpicture}
\caption{Last component not allowed to merge.}
\end{subfigure}
\caption{The merging strategy.}
\label{fig: merging}
\end{figure}

We call the first component that splits a close pair and is allowed to merge the \emph{designated merging component}. Thus, two close points $u, w$ with $d(u, w) \le \varepsilon s$ remain eventually separated only if
\begin{enumerate}[itemsep=0em]
    \item The boundary of a component $A$ splits $u$ and $w$. The probability that a fixed component splits pair $u, w$ is bounded by $O(\log n) \cdot d(u, w)/s$; and
    \item A distinct component $B$ is appointed the designated merging component. A component is the designated merging component only if its boundary splits a close pair $u', w'$ with $d(u', w') \le \varepsilon s$, which for each pair happens with probability $\le O(\varepsilon \log n)$. Thus, the overall probability is at most $O(\varepsilon n^2 \log n) = O(\varepsilon \poly(n))$.
\end{enumerate}
Importantly, these two events are independent after fixing the two components since they only rely on independently sampled radii. A union bound over all pairs then gives $\gamma = O(\poly(n) \varepsilon)$. Taking $\varepsilon = n^{-c}$ for a sufficiently large constant $c$ (e.g., $c = 6$ suffices) achieves that $\gamma = O(1/n)$, as desired.

\medskip
\textit{Handling Unknown \( n \).}  
The embedding algorithm presented above requires prior knowledge of \( n \) to select the parameter \(\varepsilon\). We also provide a modified approach that adapts $\varepsilon$ dynamically to handle unknown \( n \), incurring an additional \(O(\log \log n)\) factor in the contraction. The modified algorithm operates in phases \( i = 1, 2, \dots \), each associated with a guess for the total number of points \( m_i = 2^{2^i} \). Phase \( i \) operates when \(t \in [m_{i-1}+1, m_i]\), hence there are at most $\log \log n$ phases. Instead of permitting only one designated merging component globally, we permit one merging component \emph{per phase}, setting the merging threshold \(\varepsilon_i = 1/\poly(m_i)\) during phase $i$. This modification ensures the partition retains the same smoothness properties but becomes \( O((\log \log n)\cdot s) \)-bounded, thus incurring an extra \( O(\log \log n) \) factor in contraction. A detailed description of this algorithm appears in \cref{appendix: unknown n}.

\paragraph{Fully Dynamic Setting.}
In the fully dynamic scenario (where points arrive and depart), the monotone updates play a distinct yet crucial role: they allow us to bound the number of relevant scales in terms of the width $l$. In this setting, we do not make use of a small $\gamma$ (our $\gamma = 1$). Instead, we only want to bound $\sum \delta^{(j)}$ by controlling the number of relevant scales in which our partitions are not $0$-smooth. For strict embeddings, the number of such scales could be as large as $\Omega(n)$: previously arrived (but now departed) points may have caused undesirable separations. 

We address this problem via merges. The single-merge strategy used in the incremental setting fails here because we no longer have a sufficiently small probability bound (as a function of $l$) on any particular merge attempt. Thus, we employ the following simple merging strategy:

\medskip
\emph{Attempt to merge any pair of components containing points within distance $s/4$. Allow a merge attempt if the merge cannot potentially create a cluster of diameter larger than $s$.}
\medskip

Consequently, when alive points in $L_t$ are either within $s/4$ or further than $s$ apart, close pairs attempt merges, and these attempts succeed because distant points cannot interfere with such merges. This approach ensures that at most $O(l)$ scales with non-zero smoothness $\delta$ exist at any time, resulting in an overall distortion of $O(l \log l)$.

\medskip The overview above describes the algorithms for embedding general metrics. Embedding normed spaces follows an analogous framework with a different base partition that achieves $\delta = O(1)$, detailed in \cref{appendix: normed}.

\subsection{Algorithm Outline}
\label{sec: algorithm outline}
For each scale $s$, we maintain an \emph{online monotone partition} $P_t$ of the set $L_t$ of alive points at time $t$. We construct $P_t$ in two steps: first, we maintain a set $C_t$ of \emph{components}, and second, we maintain a partition $G_t$ over the set $C_t$. Each point $u \in L_t$ is assigned to a component $C_t(u) \in C_t$, and the final partition $P_t$ is given by: $P_t(u) = P_t(w)$ if and only if $G_t(C_t(u)) = G_t(C_t(w))$.

For every component $c$, we ensure that the diameter of the set of points (potentially not alive) that \emph{may} belong to $c$ is bounded by $s/4$. It is helpful to think of $C_t$ as a collection of (incomplete) balls and $C_t(u)$ as the ball containing $u$.

\medskip
To establish $(\delta,\varepsilon,\gamma)$-smoothness, we show:
\begin{enumerate}[itemsep=0em]
    \item If $d(u, w) \le s$, then $\P(C_t(u) \neq C_t(w)) \leq \frac{d(u, w)}{s}\delta $. 
    \item If $d(u, w) \le \varepsilon s$, then $\P(G_t(C_t(u)) \ne G_t(C_t(w))) \le\frac{d(u, w)}{s}  \delta\gamma$.
\end{enumerate}

Intuitively, the set $C_t$ forms our primary partition, and $G_t$ tracks merges among components. We maintain $C_t$ so that $\delta = O(\log n)$. Initially, every component forms a singleton set in $G_t$, and merges are executed when necessary. Formally, merging two components $C_t(u)$ and $C_t(w)$ means replacing $G_t(C_t(u))$ and $G_t(C_t(w))$ in $G_t$ with their union $G_t(C_t(u)) \cup G_t(C_t(w))$. Our algorithms for the incremental and dynamic cases will share the following construction of $C_t$, but have different merging strategies.

\paragraph{Smooth Probabilistic Partition $\boldsymbol{C_t}$.} 
We construct our components $C_t$ by adapting the probabilistic partitions of~\cite{Bartal96, Bartal2020} to accommodate deletions.

Each component in $C_t$ is represented by a triple $(c, r, b)$, where $c \in X$ is the center, $r \in [0, s/8]$ is the radius, and $b \in \mathbb N$ is the birth time of the component. We maintain pairwise distinct birth times for components, ensuring uniqueness. A point $u \in X$ \emph{belongs} to the component identified by $(c, r, b)$ if $d(c,u) \le r$ and $b$ is minimal among all components in $C_t$ containing $u$ within their radius. Note that it is possible for the center not to belong to the component. We sometimes refer to a component simply by its center: for $c \in X$, component $c$ refers to the component with center $c$, and we denote its radius by $r(c)$.

Initially, $C_0 = \varnothing$. When a new point $v \in X$ arrives at time $t$, we first set $C_t = C_{t-1}$ and then add a new component $(v, z, t)$ to $C_t$. The radius $z \in [s/16, s/8]$ is independently sampled from the probability distribution $p(z)$ defined as follows. Letting $j = |C_{t-1}|+1$ denote the number of components after insertion and setting $\chi_j = 2j$, we have:
\[
p(z) = \frac{32\,\chi_j^{2}\log\chi_j}{s(1-\chi_j^{-2})} \exp\left(-\frac{32 z \log \chi_j}{s}\right), \qquad z \in \left[\frac{s}{16}, \frac{s}{8}\right].
\]

When a point departs, we remove from $C_t$ any component that contains no alive points.

The following claim establishes that the smoothness parameter satisfies $\delta_t = O(\log l)$. Its proof follows~\cite{Bartal96, Bartal2020}, incorporating minor adjustments to account for deletions, and is hence deferred to \cref{section: appendix general dynamic}. The technical choice $\chi_j = 2j$ ensures that $\sum_j \chi_j^{-2} \le 1$ and $\log(\chi_j) = O(\log l)$, resulting in the claimed smoothness.

\begin{claim}
\label{clm: bartal padding}
If $d(u,w) \le s$, then at any time $t$, $\P(C_t(u) \neq C_t(w)) \leq O(\log l) \cdot \frac{d(u, w)}{s}.$
\end{claim}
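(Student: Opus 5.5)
Fix $u,w$ with $d(u,w)\le s$ and a time $t$ at which both are alive. The plan follows the standard Bartal-style charging argument: identify the component that \emph{decides} the pair, and bound the probability that it cuts them.

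\textbf{Setting up the deciding component.} Order the components in $C_t$ by birth time, $c_1,\dots,c_m$, where $m=|C_t|$. Since each component contains an alive point, $m\le l$.

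The event $C_t(u)\neq C_t(w)$ requires that the first component $c_i$ (in birth order) whose ball $B(c_i,r(c_i))$ contains at least one of $u,w$ does not contain both. Write $a_i=\min(d(c_i,u),d(c_i,w))$. Then $c_i$ cuts the pair only if $r(c_i)\in[a_i,a_i+d(u,w))$, by the triangle inequality. The decider is $c_i$ only if $r(c_k)<a_k$ for all earlier $k<i$.

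A complication is that components die when they contain no alive point. Presence in $C_t$ depends on the randomness through which points the ball captures. I would handle this by conditioning on the arrival sequence and noting two facts:
\begin{itemize}
\item A removed component can never affect $u$ or $w$ later, since neither point is in its ball while alive.
\item Every component in $C_t$ that might contain $u$ or $w$ is one whose center arrived earlier. Its radius was sampled with parameter $\chi_j$, where $j\le$ the number of components alive at its birth, which is at most $l$.
\end{itemize}
Thus $\log\chi_j\le\log(2l)$ for all relevant components, even though over the whole history there may be $n\gg l$ components.

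\textbf{Bounding the probability.} The key estimate, as in \cite{Bartal96, Bartal2020}, is the memoryless-type inequality for the truncated exponential density $p$. For an interval of length $\Delta' = d(u,w)$, with $\alpha=32\log\chi_j/s$, we have
\[
\P(z\in[a,a+\Delta'))\le (1-e^{-\alpha\Delta'})\,\P(z\ge a)+\frac{\chi_j^{-2}\alpha\Delta'}{1-\chi_j^{-2}}\cdot O(1).
\]
The first term comes from exponential decay. The second accounts for truncation at $s/8$: the tail mass of the untruncated exponential beyond $s/8$ is about $e^{-4\log\chi_j}\cdot\chi_j^{2}=\chi_j^{-2}$ relative to the normalization.

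Summing over $i$, the probability that $c_i$ is the decider and cuts is at most $\alpha\Delta'\cdot\P(c_i$ reached and $r(c_i)\ge a_i)$ plus the truncation term. The events ``$c_i$ is the first to contain $u$ or $w$'' are disjoint across $i$, so the first terms sum to $O(\log l)\cdot d(u,w)/s$. The truncation terms sum to $\sum_j\chi_j^{-2}\cdot O(\log l)\,d(u,w)/s\le O(\log l)\,d(u,w)/s$, using $\sum_j\chi_j^{-2}\le1$.

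Independence of the radii lets us factor each summand as a product over earlier components, conditioned on the arrival order.

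\textbf{Main obstacle.} The delicate step is the deletion interplay. Whether a component is still alive at time $t$ depends on radii, including its own. So ``$c_i\in C_t$'' is correlated with $r(c_i)$, and the set over which we sum is random.

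I would resolve this by summing instead over \emph{all} components ever created whose birth index satisfies $\chi_j\le 2l$, namely those created when at most $l$ components existed. Such a component, if it contains $u$ or $w$ at time $t$, is automatically alive since $u,w$ are alive. This makes the summation set deterministic.

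The disjointness argument then goes through using ``first component in birth order among all ever-created ones whose ball contains $u$ or $w$.'' This is well defined because a dead component's ball contains no alive point, so it never contained $u$ or $w$. Finally, the bound $|C_{t'}|\le l$ at every creation time $t'$ gives $\log\chi_j=O(\log l)$ for every summand.
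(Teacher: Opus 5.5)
Your static argument is fine and differs from the paper's. Instead of the reverse induction with the quantities $E_j$, you use the memoryless estimate for the truncated exponential together with disjointness of the events ``$c_i$ is the first component whose ball meets $\{u,w\}$''. The gap is in the deletion step, where both assertions you rely on are false.

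(i) A component whose ball contains $u$ or $w$ need not be alive. Removal is permanent and can happen before $u$ or $w$ arrives. So the first ever-created component whose ball meets $\{u,w\}$ can be a dead one containing both points, while $u,w$ are split by a later component, and your union bound misses this. The real decider is the first component of $C_{\max(t_u,t_w)}$ whose ball meets $\{u,w\}$, and membership in that set depends on the radii.

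(ii) Once you sum over all ever-created components, birth indices repeat. Every component has index at most $l$, as you note, but $\Theta(n)$ of them can have index $2$. Your bound on the truncation terms then becomes $\Theta(n)\cdot d(u,w)/s$, since $\sum_j\chi_j^{-2}\le 1$ needs distinct indices. Your first-term estimate can be rescued, because survival of a component is monotone in its own radius; the truncation terms cannot.

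This is not a bookkeeping slip. Survival can select for radii close to $s/8$, where a component cuts a pair with conditional probability close to $1$, and then the statement itself fails for $C_t$.

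Take a tree metric with a centre $x$, leaves $y_1,\dots,y_N$ and $p$ at distance $\rho=s/8-\Delta$ from $x$, and a point $q$ attached to $p$ at distance $2\Delta$, where $\Delta\le s/32$. In round $i$, let $x$ arrive, $y_i$ arrive, $x$ depart, and $y_{i-1}$ depart (no departure of $y_0$ in round $1$).

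Until the first success, the only component alive when $x$ arrives is the one holding $y_{i-1}$. Its ball cannot reach $y_i$, since $d(y_{i-1},y_i)=2\rho>s/8$. Hence the new component centred at $x$ has index $j=2$, and $y_i$ joins it exactly when that component's radius is at least $\rho$. This happens with probability at least $c\,\Delta/s$ for an absolute constant $c>0$. After the first success, this component captures every later $y_i$ and never dies.

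Finally let $u=p$ and $w=q$ arrive. After a success, $u$ joins the surviving component, while $w$, at distance $s/8+\Delta$ from $x$, joins the component born at $p$. The width is $3$, yet $\P(C_t(u)\neq C_t(w))\ge 1-(1-c\Delta/s)^{N-1}$, which tends to $1$ as $N\to\infty$, whereas $d(u,w)/s=2\Delta/s$.

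So no argument can give an $n$-independent $O(\log l)\cdot d(u,w)/s$ bound for $C_t$ under the algorithm as described. The paper's induction conceals the same problem: it conditions on the centres of the alive components and still integrates the unconditioned density $p$ for the radius of the $j$-th one. The statement, or the algorithm, has to be modified before it can be proved.
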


\section{Incremental Setting}
\label{sec: online}
We now formalise the description in \cref{sec: overview} and prove an embedding with $O(\log^2 n)$ distortion in the incremental setting, assuming the knowledge of $n$. 

\begin{theorem}
\label{thm: incremental partition}
For every $n \in \mathbb{N}$, $s > 0$, and $\varepsilon \le 1$, given knowledge of $n$, there is a probabilistic online $s$-bounded monotone partition of up to $n$ points from any metric that is $(O(\log n), \varepsilon, O(n^5\varepsilon))$-smooth at every time $t$. Moreover, if no pair of points in $V_n$ has distance in the interval $[s/16, s]$, the partition is $0$-smooth at all times.
\end{theorem}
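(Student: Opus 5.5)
We use the two-layer construction of \cref{sec: algorithm outline}: components $C_t$ are balls with radii drawn from $p(z)$, and $G_t$ is a partition of the components that records merges. The merging rule is the single-designated-component rule from \cref{sec: overview}.

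\textbf{Merging rule.} When a new point $v$ arrives, we recompute component membership. Because points are never removed in the incremental setting, existing memberships do not change; only $v$ is assigned, to the earliest-born component whose ball contains it. We then consider every pair of alive points $u,w$ with $d(u,w)\le\varepsilon s$ that lie in different $G_t$-clusters. While no designated merging component exists, the first such event fixes one. Let the offending pair lie in components $A$ and $B$, where $A$'s ball contains $u$ but not $w$ and so separates the pair; $A$ becomes designated and the two clusters are merged. Once $A$ is designated, a close-pair merge request between clusters is granted only if one of the two involved components is $A$. All other requests are refused.

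\textbf{Monotonicity and boundedness.} Clusters of $P_t$ are only ever unions of earlier clusters plus new points, so they never split and \cref{def: online partition with recourse} holds.

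For boundedness, first note that every component lies inside a ball of radius $s/8$, so its diameter is at most $s/4$. Every merge involves $A$, so each final cluster is either a single component or $A$ joined to further components. Each such component $B$ contains a point within $\varepsilon s$ of a point of $A$. The diameter of such a cluster is therefore at most $s/4+2(s/4+\varepsilon s)$. This exceeds $s$ if $\varepsilon$ is near $1$, so we must shrink the radii: all radii are scaled by a constant, for instance $z\in[s/32,s/16]$. With that scaling the diameter is at most $s$ for every $\varepsilon\le 1$. The smoothness analysis is unaffected up to constants.

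\textbf{Smoothness.} Condition \eqref{eq:smooth} with $\delta=O(\log n)$ follows directly from \cref{clm: bartal padding}, because merges only coarsen the partition.

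For \eqref{eq:enforcing}, fix $u,w$ with $d(u,w)\le\varepsilon s$ and suppose they end up separated at time $t$. Then some component $B$ separates $u,w$, and their merge request was refused. A refusal means a designated component $A\neq B$ was fixed, and $A$ itself separated some close pair $u',w'$. There are two sub-cases:
\begin{itemize}
\item If $B$ was the first separator, it would have become designated and the request would have been granted.
\item If the pair $u,w$ was separated before $A$ was designated, we need an ordering argument. The cleanest fix is to define designation per separating event and bound the bad case as ``two distinct components each separate some close pair.''
\end{itemize}

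So the bad event is contained in the union over ordered pairs of distinct components $A\ne B$ (at most $n^2$ pairs) and over close pairs $(u',w')$ (at most $n^2$ pairs) of the event that $B$ separates $u,w$ and $A$ separates $u',w'$. Given the centers and birth times, which are deterministic in the oblivious model, these two events depend on the independent radii of $A$ and $B$. We bound each separately:
\begin{itemize}
\item For $B$ separating $u,w$: the density bound $p(z)\le O(n^2\log n/s)$ on a window of width $d(u,w)$, combined with the first-component structure of \cite{Bartal96}, gives $O(\log n)\,d(u,w)/s$. More crudely, the density bound alone gives $O(n^2\log n)\,d/s$.
\item For $A$ separating $u',w'$: this is at most $\sup p\cdot\varepsilon s=O(n^2\log n\,\varepsilon)$.
\end{itemize}
Summing over the choices of $A$ and $(u',w')$ yields $O(\log n)\frac{d(u,w)}{s}\cdot O(n^4\log n\,\varepsilon)\le O(\log n)\frac{d(u,w)}{s}\cdot O(n^5\varepsilon)$, which is the claimed $\gamma=O(n^5\varepsilon)$.

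\textbf{Main obstacle.} The delicate point is bounding $B$'s separation of $u,w$ by $O(\log n)\,d/s$ while keeping it independent of $A$. The standard padding argument conditions on which component first reaches $u$, and that conditioning involves the radii of all earlier components, including $A$. I would avoid this conditioning. I would use only the weaker event ``$r(B)$ falls in $[d(c_B,u),d(c_B,w)]$,'' whose probability is at most $\sup p\cdot d=O(n^2\log n)\,d/s$. Then I would choose the sums more carefully: take the union over $A$ only (at most $n$ components) and over $u',w'$ given $A$. That still leaves slack, because the factor $n^5$ allows for $n^2$ (density of $B$) $\times\, n$ (choices of $A$) $\times\, n^2$ (density of $A$) $\times$ the number of close pairs. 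The last factor must be absorbed by summing probabilities per pair that lies inside $A$'s annulus.

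\textbf{Final statement.} If no distance in $V_n$ lies in $[s/16,s]$, the close-pair structure is trivial, and no component boundary can separate a pair at distance below $s/16$ whenever merges glue such pairs. More concretely, pairs at distance less than $s/16$ form groups of diameter less than $s/16\cdot n$. This is problematic as stated, so I would show directly that a component's ball of radius $\ge s/16$ centered at a point of a cluster of such pairs contains the whole chain. That holds because every other point is more than $s$ away and so no chain can escape. Hence no pair is ever split, which gives $0$-smoothness.
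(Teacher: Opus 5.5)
\textbf{Gap: the density bound for the radii is wrong, so $\gamma=O(n^5\varepsilon)$ is not established.} Your architecture matches the paper's: radii drawn from $p$, a single designated merging component, and a union bound over ordered pairs $A\ne B$ of distinct components with independent radii. The quantitative core fails because the density bound you use is wrong. On the support $[s/16,s/8]$ one has $\exp(-32z\log\chi_j/s)\le\chi_j^{-2}$, so the $\chi_j^2$ prefactor cancels:
\[
\sup_{z} p(z)=p(s/16)=\frac{32\log\chi_j}{s\,(1-\chi_j^{-2})}=O\Bigl(\frac{\log n}{s}\Bigr),
\]
not $O(n^2\log n/s)$. Hence, for any fixed component $c$ and pair $u',w'$, integrating the density of $r(c)$ alone over an interval of length at most $d(u',w')$ gives $\P(c\text{ cuts }\{u',w'\})\le O(\log n)\,d(u',w')/s$. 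This is the ``direct integration'' the paper uses (cf.\ \eqref{eq: single split}).

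No first-component conditioning from \cite{Bartal96} is involved, so your ``main obstacle'' does not exist: the event for $B$ depends only on $r(B)$, and the event for $A$ only on $r(A)$. With the correct density, $\P(r(A)\in X(A))\le n^2\cdot O(\log n)\,\varepsilon$. Summing over at most $n^2$ ordered pairs gives
\[
n^2\cdot O(\log n)\frac{d(u,w)}{s}\cdot O(n^2\varepsilon\log n)=O(\log n)\cdot O(n^4\varepsilon\log n)\,\frac{d(u,w)}{s},
\]
which is within the claimed $\gamma=O(n^5\varepsilon)$.

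With your density, the honest union bound is of order $n^8\log^2 n\,\varepsilon\cdot d(u,w)/s$. Your tallies also drop the sum over $B$. Absorbing the number of close pairs ``by summing probabilities per pair inside $A$'s annulus'' is not an argument, since that count genuinely enters the bound on $\P(r(A)\in X(A))$.

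No ordering argument is needed either. If $G_t(C_t(u))\ne G_t(C_t(w))$, the merge was refused, so $c^*$ is set with $c^*\notin\{c_u,c_w\}$ and $r(c^*)\in X(c^*)$, while the earlier-born of $c_u,c_w$ cuts $\{u,w\}$.

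\textbf{Boundedness repair.} Shrinking the radii does not touch the $2\varepsilon s$ term: with components of diameter $s/8$ the bound is $3s/8+2\varepsilon s$, still above $s$ for $\varepsilon$ near $1$. Lowering the minimum radius to $s/32$ also breaks the ``Moreover'' clause and your own $0$-smoothness argument, which uses radius at least $s/16$. Two points at distance $d\in(s/32,s/16)$ with $d>\varepsilon s$ can then be split with positive probability and are never merged.

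You were right that the paper's inequality $3s/4+2\varepsilon s\le s$ tacitly needs $\varepsilon\le 1/8$. The correct fix keeps radii in $[s/16,s/8]$ and merges only when $\varepsilon\le 1/8$. For $\varepsilon>1/8$ the target $\gamma=O(n^5\varepsilon)$ is at least $1$, so \eqref{eq:enforcing} follows from \eqref{eq:smooth} without any merging.

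\textbf{Chains in the $0$-smooth part.} Your worry is unnecessary. If $d(a,b),d(b,c)<s/16$, then $d(a,c)<s/8\le s$, hence $d(a,c)<s/16$ by the gap assumption. So each group has diameter below $s/16$ and lies inside the first ball centered in it.
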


\begin{proof}
Our initial component partition $C_t$ is constructed as described in \cref{sec: algorithm outline}, yielding $\delta = O(\log n)$ by \cref{clm: bartal padding}. We first prove the $0$-smooth property.

\begin{claim}
If no pair of points in $V_n$ has distance in $[s/16, s]$, the partition $C_t$ is $0$-smooth at all times $t$.
\label{clm: 0 smooth inc}
\end{claim}
\begin{proof}
In this scenario, points form natural clusters, each with diameter at most \( s/16 \), and distinct clusters are separated by distances exceeding \( s \). Hence, the first component created in each group deterministically includes all points in the group, ensuring that the partition is always \( 0 \)-smooth.
\end{proof}

We now specify the merging strategy. We permit at most one component—termed the \emph{designated merging component}—to merge freely with other components whenever they split a pair of points at a distance at most $\varepsilon s$. All other merge attempts are rejected.

Formally, we maintain (the center of) the designated component \( c^* \). Initially set to \( c^* = \textsc{nil} \). We say a component with center $c$ \emph{cuts} a pair $\{u, w\}$ if exactly one of $u, w$ lies inside the (complete) ball defined by its radius:
\[
r(c) \in [d(c, u), d(c, w)) \cup [d(c, w), d(c, u)).
\]
Note that cut is defined purely in terms of the geometry, and some component cutting a pair is a necessary but insufficient condition for them to belong to different components, since a component created earlier might include both points.

Whenever two points \( u, w \in V_t\) at distance \( d(u, w)\le \varepsilon s \) satisfy \( C_t(u) \ne C_t(w) \), let $c_u$ and $c_w$ denote the centers of $C_t(u)$ and $C_t(w)$, respectively. We proceed as follows:
\begin{enumerate}[itemsep=0em]
    \item If \( c^* = \textsc{nil} \), we assign the designated merging component to be the component among $C_t(u)$ and $C_t(w)$ that cuts $\{u, w\}$. If both components cut $\{u,w\}$, we choose arbitrarily. We then merge \( C_t(u) \) and \( C_t(w) \) by inserting $G_t(C_t(u)) \cup G_t(C_t(w))$ into $G_t$ and removing $G_t(C_t(u))$ and $G_t(C_t(w))$.

    \item If \( c^* \) is already set, we merge \( C_t(u) \) and \( C_t(w) \) only if \( c^* \in \{c_u, c_w\} \); otherwise, we reject the merge and do nothing.
\end{enumerate}

Since each component has a diameter at most $s/4$ and any merged component contains a point that is at most $\varepsilon s$ from the designated merging component, the final clusters will have a diameter at most $3s/4 + 2\varepsilon s \le s$, satisfying $s$-boundedness.

\begin{claim}
\label{clm: enforce}
For any time \( t \) and points \( u, w \in V_t \), if \( d(u, w) \le \varepsilon s \), then
\[
\P(G_t(C_t(u)) \ne G_t(C_t(w))) \le O(n^5 \varepsilon \log n)\cdot\frac{d(u, w)}{s}.
\]
\end{claim}

\begin{proof}
We bound the probability that simultaneously \( C_t(u) \ne C_t(w) \) and the designated merging component \( c^* \) is assigned elsewhere. For each \( c \in V_t \), define the radius set potentially causing \( c \) to cut a close pair and be the designated merging component:
\[
X(c) = \bigcup_{\substack{u', w' \in V_t\\ d(u', w') \le \varepsilon s}} [d(c,u'), d(c,w')).
\]

By a direct integration of the distribution $p$ (cf.~\cref{eq: single split}), we obtain for all $u', w' \in V_t$:
\begin{align*}
\P(c \text{ cuts } \{u', w'\}) &\le \frac{O(\log n)\cdot d(u', w')}{s},\\[6pt]
\P(r(c) \in X(c)) & \le  O(\varepsilon n^2 \log n).
\end{align*}

The event $G_t(C_t(u)) \ne G_t(C_t(w))$ can occur only if (i) some component \( c \) cuts $\{u,w\}$, and simultaneously, (ii) another distinct component \( c' \) is appointed to be the designated merging component $c^*$, which happens only when $r(c') \in X(c')$. Since radii are independently sampled, after fixing $c$ and $c'$, these events are independent. Thus, by union bound:
\belowdisplayskip=-12pt
\begin{align*}
\P(G_t(C_t(u)) \ne G_t(C_t(w)))
&\le \sum_{c \in V_t} \sum_{c' \ne c} \P(c\text{ cuts }\{u, w\}) \P(r(c') \in X(c')) \\
&\le n^2 \cdot \frac{O(\log n)\cdot d(u, w)}{s} \cdot O(\varepsilon n^2 \log n) \\&= O(n^5\varepsilon \log n)\cdot\frac{d(u, w)}{s}. 
\end{align*}
\end{proof}

\cref{thm: incremental partition} then follows from \cref{clm: bartal padding}, \ref{clm: 0 smooth inc}, and \ref{clm: enforce}.
\end{proof}

Applying \cref{lemma: hierarchy} with \( \varepsilon = n^{-6} \) a small enough power of $n$, we achieve the distortion $O(\log ^2 n)$. For embeddings from normed spaces, see \cref{appendix: normed}.

\begin{proof}[Proof of \cref{thm: general} (general metric)]
For each integer \( j \), we apply \cref{thm: incremental partition} with scale \( s = 2^j \) and parameter \( \varepsilon = n^{-6} \). This produces an online \( 2^j \)-bounded partition that is \( (\delta_t^{(j)}, n^{-6}, O(n^{-1})) \)-smooth at every time \( t \). By \cref{lemma: relevant scales}, there are at most \( O(n) \) scales \( j \) for which there exist points \( u,v \in V_n \) satisfying \( d(u,v)\in[2^j/16,2^j] \). \cref{thm: incremental partition} guarantees \( \delta_t^{(j)} = O(\log n) \) for these scales and \( \delta_t^{(j)} = 0 \) for all other scales. Hence, we have
\[
\max_j \delta_t^{(j)} = O(\log n),\qquad \sum_j \delta_t^{(j)} = O(n\log n).
\]

Applying \cref{lemma: hierarchy}, the total distortion is bounded by
\[
O(\log n \log(n^6)) + O(n\log n \cdot n^{-1}) = O(\log^2 n). \qedhere
\]
\end{proof}

\section{Fully Dynamic Setting}
\label{sec: dynamic}

When the width $l$ of the update sequence is bounded, it is natural to expect distortion to depend only on $l$, independently of the sequence length $n$. However, this is impossible with traditional strict embeddings: there exist update sequences with width $l=3$ that incur deterministic lower bounds of $\Omega(2^n)$ and randomized lower bounds of $\Omega(n)$. In fact, the distortion remains unbounded even for deterministic monotone embeddings.

\DynamicLowerDeter*
\begin{proof}
The proofs for strict embeddings appear in \cref{thm: online_lower} and \cref{thm: online_lower_random}. Intuitively, for strict embeddings, points departing won't help the algorithm since the adversary could just keep the most distorted pairs.

For deterministic monotone embeddings, we give an explicit update sequence with $n$ points and width $3$ that enforces $\Omega(n)$ distortion. The points lie on the real line: At times $1, 2, 3$, points arrive at coordinates $0, 1, 2$. Subsequently, whenever the current points have coordinates $(0, x, x+1)$, the point at position $x$ departs and a new point at $x+2$ arrives. This step repeats until a point appears at coordinate $n$. Intuitively, this configuration maintains one fixed point at the origin, while two others iteratively shift rightward by increments of $1$.

Without loss of generality, deterministic embeddings can be assumed non-contractive, as otherwise we could scale all distances by the maximum contraction. Let $t_x$ denote the arrival time of the point at coordinate $x$. It is easy to verify that $d_{t_n}(0, n) \le \max_{i=0}^{n-1} {d_{t_{i+1}}}(i, i+1)$ inductively due to HST being an ultrametric and monotonicity only allowing the distance to decrease. Since $d_{t_n}(0, n) \ge n$, there exists some distances expanded by $\Omega(n)$.
\end{proof}

Despite these negative results, we show that combined with randomness, monotone embeddings achieve distortion bounded solely by the width $l$.

\subsection{Upper Bound}
\label{sec: general dynamic}
\begin{theorem}
\label{thm: general partition dynamic}
For every $l \in \mathbb{N}$ and $s > 0$, there is a probabilistic online $s$-bounded monotone partition for update sequences of width $l$ that is $O(\log l)$-smooth at every time $t$. Moreover, if no pair of points $u,w \in L_t$ satisfies $d(u,w)\in[s/4,s]$, the partition $P_t$ at time $t$ is $0$-smooth. These results hold without prior knowledge of $l$.
\end{theorem}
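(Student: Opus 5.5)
We build the partition from the component structure $C_t$ of \cref{sec: algorithm outline} together with a merge structure $G_t$, and we use the merging rule sketched in the technical overview. Whenever two alive points $u,w \in L_t$ with $d(u,w) \le s/4$ have $G_t(C_t(u)) \ne G_t(C_t(w))$, we attempt to merge the two groups. The attempt is accepted only if the merged group cannot have diameter larger than $s$. Concretely, for each component we consider the ball $B(c, r(c))$ of all points (alive or not) that may ever belong to it; this ball has diameter at most $s/4$. We accept the merge if the union of these balls over all components of the two groups has diameter at most $s$. Otherwise the attempt is rejected.

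We repeat merge attempts until no further attempt is accepted. Components that lose all alive points are removed. A group whose components have all disappeared disappears with them; components that persist keep their group membership. This choice gives the required properties directly:
\begin{itemize}
\item \textbf{Monotonicity.} Groups only merge and are never split, so $P_t$ is an online monotone partition in the sense of \cref{def: online partition with recourse}. Moreover $P_t$ depends only on $\sigma_{[1,t]}$.
\item \textbf{Boundedness.} $s$-boundedness holds by construction, since every accepted merge keeps the union of the balls within diameter $s$.
\item \textbf{Independence from $l$.} The radius distribution uses $\chi_j = 2j$ with $j = |C_{t-1}|+1$, so no prior knowledge of $l$ is needed.
\end{itemize}

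For $O(\log l)$-smoothness we use that a separation in $P_t$ requires a separation in $C_t$. Hence for any $u,w$ with $d(u,w)\le s$,
\[
\P(P_t(u)\ne P_t(w)) \le \P(C_t(u)\ne C_t(w)) \le O(\log l)\cdot \frac{d(u,w)}{s}
\]
by \cref{clm: bartal padding}. One point needs checking here: the number of components alive at any time is at most $l$, because each component contains an alive point. So the index $j$, and therefore $\log\chi_j = O(\log l)$, is controlled by the width.

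The substantive part is the $0$-smooth statement. Assume no pair in $L_t$ has distance in $[s/4,s]$. Then the alive points split into natural groups $Q_1,\dots,Q_m$. Within each $Q_i$, consecutive points along a chain are within $s/4$. Points in different groups are more than $s$ apart, and since each group is a chain-closed set, each group has small diameter. We need two facts:
\begin{enumerate}
\item[(a)] Every merge attempt inside a group succeeds.
\item[(b)] No group of $G_t$ contains alive points from two different $Q_i$.
\end{enumerate}
Fact (b) follows from $s$-boundedness: a single cluster of $P_t$ has diameter at most $s$, while cross-group distances exceed $s$.

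Fact (a) is the main obstacle. The balls of components in a cluster may contain departed points, and old merges may have glued in components whose balls extend far away. The danger is that a rejection occurs even though all alive points are close. The first thing I would try is to strengthen the acceptance test so that it refers only to components currently containing alive points. With that test, if all alive points of the two clusters lie in one $Q_i$, then every relevant ball has its center within $s/8$ of an alive point of $Q_i$. I would then bound the diameter of $Q_i$ itself. Distances within $Q_i$ are below $s/4$, since no alive pair has distance in $[s/4,s]$ and pairs in the same chain-connected group cannot exceed $s$ without an intermediate pair landing in that interval. The union of balls therefore has diameter at most $s/4 + 2\cdot s/8 + 2\cdot s/8 \le s$ up to constants, and the merge is accepted. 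I expect that the constants $s/16$, $s/8$, $s/4$ might need retuning (for instance testing against $s$ with components of diameter $s/4$) to make this inequality close.

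Once (a) holds, after the merging phase each $Q_i$ lies in a single cluster, so no close alive pair is separated. Combined with (b), this gives $\P(P_t(u)\ne P_t(w)) = 0$ for all alive pairs, which is $0$-smoothness.
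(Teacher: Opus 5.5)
Your proposal is correct and follows essentially the same route as the paper. Both use the same components $C_t$, merges of groups containing alive points within distance $s/4$ guarded by a diameter-$\le s$ test and re-checked after departures, the padding bound inherited from $C_t$, and a $0$-smooth argument via $s$-boundedness (your fact (b)) together with the observation that every surviving component has its center within $s/8$ of an alive point. Your hedges are unnecessary: the ``strengthened'' acceptance test is already automatic, because empty components are removed from $C_t$ and hence from the groups, and the bound $s/4 + 4\cdot s/8 = 3s/4 \le s$ already closes with the stated constants, so no retuning is needed.
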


\begin{proof}
We follow the framework outlined in \cref{sec: algorithm outline}, constructing $C_t$ as previously described, and now specifying the merge procedure for constructing $G_t$ explicitly.

In contrast to the incremental case, where a single merging component sufficed, we allow multiple merges to maintain a bounded number of relevant scales. At each time $t$, initialize $G_t$ as a copy of $G_{t-1}$. When a new point $v_t$ arrives, we first update $C_t$ according to \cref{sec: algorithm outline}. If there exists some point $u \in L_t$ with $d(u,v_t)\le s/4$ but $C_t(u)\neq C_t(v_t)$, we attempt to merge the components containing $u$ and $v_t$. The merge succeeds if and only if the merged cluster's diameter (accounting for points potentially joining in the future due to their pre-selected radii) remains at most $s$. Formally, the condition for a successful merge is that for every $(c_1, r_1, b_1) \in G_t(C_t(u))$ and $(c_2, r_2, b_2) \in G_t(C_t(v_t))$,
\[
r_1 + r_2 + d(c_1, c_2) \le s.
\]
This condition ensures the $s$-boundedness of the partition persists despite potential future arrivals. When a point departs, we remove empty components from $C_t$, potentially allowing previously rejected merges to now succeed; we thus re-check merge conditions after each departure.

\begin{claim}
\label{lemma: zero smooth}
If no pair of alive points in $L_t$ has distance in $[s/4, s]$, then $P_t$ is $0$-smooth at time $t$.
\end{claim}
\begin{proof}
Suppose no alive points $u,w \in L_t$ have distance in $[s/4,s]$. Points in $L_t$ naturally form clusters of diameter at most $s/4$, with distances between clusters strictly exceeding $s$. For any pair $u',w'$ within distance at most $s/4$ belonging initially to different components $C_t(u')$ and $C_t(w')$, the merging condition is satisfied since no distant clusters can obstruct the merge. Thus, all pairs of points within each cluster eventually merge, ensuring $P_t$ is $0$-smooth at time $t$.
\end{proof}

This concludes the proof of \cref{thm: general partition dynamic}.
\end{proof}

Using \cref{lemma: relevant scales}, at every time $t$, we have that $\sum \delta_t^{(j)} = O(l \log l)$. Applying \cref{lemma: hierarchy} gives the general metric part in \cref{thm: general dynamic}. Note we never utilize prior knowledge of $l$, hence neither does \cref{thm: general dynamic}. 

\subsection{Lower Bound}

It might seem tempting to apply similar techniques as in the incremental setting to achieve a distortion of $O(\polylog l)$. Unfortunately, this is not possible. To illustrate why, consider the following scenario (see Figure~\ref{fig: bad case}): Suppose two points $v_1$ and $v_2$ on the line lie in separate components, and two additional close points move through from left to right. At some stage, these two moving points will cause a merge of the intervals containing $v_1$ and $v_2$. Due to the monotonicity constraint, this merge cannot be undone, even after the moving points leave, thus preventing future merges that may be necessary when other close points arrive. 

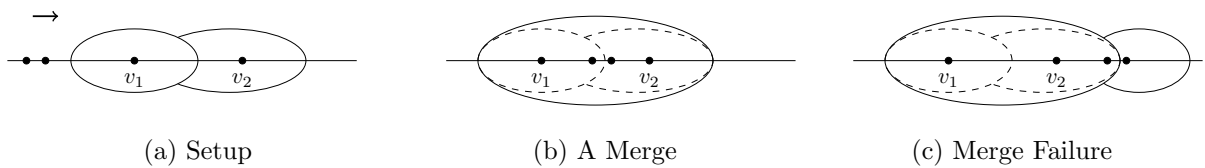
\begin{figure}[htb]
\centering
    \begin{subfigure}{0.32\linewidth}
    \centering
    \noindent\resizebox{\textwidth}{!}{
    \begin{tikzpicture}[font=\small]
        \path[use as bounding box] (-1,-0.95) rectangle (5,1.05);
        \draw (-1,0) -- (4.5,0);

        \draw[->,thick] (-0.6, 0.7) -- (-0.2, 0.7);

        \filldraw (-0.4,0) circle (1.5pt);
        \filldraw (-0.7,0) circle (1.5pt);
        \filldraw (1,0) circle (1.5pt) node[below=2pt]{$v_1$};
        \filldraw (2.7,0) circle (1.5pt) node[below=2pt]{$v_2$};

        \begin{scope}
            \clip (-1,-1) rectangle (4.5,1); 
            \clip
                (-1,-0.6) rectangle (4.5,0.6) 
                (1,0) ellipse (1 and 0.5); 
            \draw (2.5,0) ellipse (1.2 and 0.5);
        \end{scope}

        \draw (1,0) ellipse (1 and 0.5);
    \end{tikzpicture}
    }
    \caption{Setup}
\end{subfigure}
    \begin{subfigure}{0.32\linewidth}
        \centering
        \noindent\resizebox{\textwidth}{!}{
        \begin{tikzpicture}[font=\small]
        \path[use as bounding box] (-1,-0.95) rectangle (5,1.05);
        \draw (-0.5,0) -- (5,0);


        \filldraw (2.1,0) circle (1.5pt);
        \filldraw (1.8,0) circle (1.5pt);
        \filldraw (1,0) circle (1.5pt) node[below=2pt]{$v_1$};
        \filldraw (2.7,0) circle (1.5pt) node[below=2pt]{$v_2$};

        \begin{scope}
            \clip (-1,-1) rectangle (4.5,1); 
            \clip
                (-1,-0.6) rectangle (4.5,0.6) 
                (1,0) ellipse (1 and 0.5); 
            \draw[dashed] (2.5,0) ellipse (1.2 and 0.5);
        \end{scope}

        \draw[dashed] (1,0) ellipse (1 and 0.5);
        \draw (1.85,0) ellipse (1.85 and 0.7);
    \end{tikzpicture}
    }
        \caption{A Merge}
    \end{subfigure}
    \begin{subfigure}{0.32\linewidth}
        \centering
        \noindent\resizebox{\textwidth}{!}{
        \begin{tikzpicture}[font=\small]
        \path[use as bounding box] (-1,-0.95) rectangle (5,1.05);
        \draw (-0.5,0) -- (5,0);


        \filldraw (3.5,0) circle (1.5pt);
        \filldraw (3.8,0) circle (1.5pt);
        \filldraw (1,0) circle (1.5pt) node[below=2pt]{$v_1$};
        \filldraw (2.7,0) circle (1.5pt) node[below=2pt]{$v_2$};

        \begin{scope}
            \clip (-1,-1) rectangle (4.5,1); 
            \clip
                (-1,-0.6) rectangle (4.5,0.6) 
                (1,0) ellipse (1 and 0.5); 
            \draw[dashed] (2.5,0) ellipse (1.2 and 0.5);
        \end{scope}
        \begin{scope}
            \clip (-1,-1) rectangle (5,1); 
            \clip
                (-1,-0.6) rectangle (5,0.6) 
                (1.85,0) ellipse (1.85 and 0.7); 
            \draw (4,0) ellipse (0.8 and 0.5);
        \end{scope}

        \draw[dashed] (1,0) ellipse (1 and 0.5);
        \draw (1.85,0) ellipse (1.85 and 0.7);
    \end{tikzpicture}
        }
        \caption{Merge Failure}
    \end{subfigure}
    \caption{Illustration of obstacles in the dynamic setting.}
    \label{fig: bad case}
\end{figure}

In fact, we show that a linear dependence on $l$ is unavoidable. Notably, this lower bound holds even in an \emph{offline} setting, as it uses a fixed sequence.

\DynamicLowerRand*

\begin{proof}
We construct the sequence on the real line, identifying each point with its coordinate. Let $m = 2^l$. Define the following notion: for each integer $x$, the set of \emph{encompassing points} is given by
\[
E(x) = \bigcup_{j=0}^l \left\{\left\lfloor\frac{x}{2^j}\right\rfloor  2^j,\; \left(\left\lfloor\frac{x}{2^j}\right\rfloor+1\right) 2^j\right\}.
\]
Intuitively, $E(x)$ includes endpoints obtained by recursively halving the segment containing $x$ exactly $l$ times (see \cref{fig: encompassing}).

\begin{figure}[ht]
        \centering
        \resizebox{\linewidth}{!}{
        \begin{tikzpicture}[font=\small]
            \draw (-0.5,0) -- (16.5,0) ;
        
            \foreach \x/\y in {0/0,4/8,5/10,6/12,8/16,16/32} {
               \draw (\x,0.2) -- (\x,-0.2) node[below=2pt]{\y};
            }
            \filldraw (5.5,0) circle (2.5pt);
            
            
        \end{tikzpicture}
        }
        \caption{Encompassing points marked as vertical segments.}
        \label{fig: encompassing}
    \end{figure}
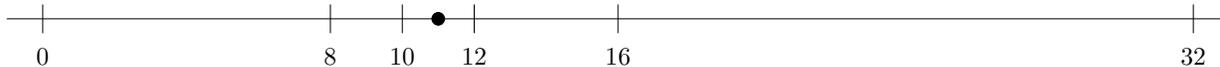

Our update sequence models the following process, depicted in \cref{fig: dynamic random lower bound example}: two moving points travel from $0$ to $m$, arriving and leaving iteratively: initially, points $0$ and $1$ arrive, then point $0$ leaves and point $2$ arrives, and so forth. At each step, when the moving points are at coordinates $(x, x+1)$, we ensure the points in $E(x)$ are present and remove any points no longer belonging to $E(x)$ (see \cref{fig: dynamic random lower bound example}). Clearly, at any given time, the number of alive points is $O(l)$.

\begin{figure}[htb]
\centering
    \begin{subfigure}{\linewidth}
        \vspace{15pt}
        \centering
        \noindent\resizebox{\textwidth}{!}{
        \begin{tikzpicture}[font=\small]
            \draw (-1,0) -- (16.5,0) ;
        
            \foreach \x in {0,4,5,6,8,16} {
                \draw[fill=white] (\x,0) circle (2.5pt);
            }
            \filldraw (4.4,0) circle (2.5pt);
            \filldraw (4.7,0) circle (2.5pt);
            \draw [thick,->] (4.6, 0.6) -- (5.2, 0.6);
        \end{tikzpicture}
        }
    \end{subfigure}
    \begin{subfigure}{\linewidth}
        \vspace{15pt}
        \centering
        \noindent\resizebox{\textwidth}{!}{
        \begin{tikzpicture}[font=\small]
            \draw (-1,0) -- (16.5,0) ;
        
            \foreach \x in {0,8,12,14,15,16} {
                \draw[fill=white] (\x,0) circle (2.5pt);
            }
            \filldraw (14.3,0) circle (2.5pt);
            \filldraw (14.6,0) circle (2.5pt);
            \draw [thick,->] (14.4, 0.6) -- (15, 0.6);
        \end{tikzpicture}
        }
    \end{subfigure}
    
    \caption{Black points are iterating to the right. White points are encompassing points.}
    \label{fig: dynamic random lower bound example}
\end{figure}
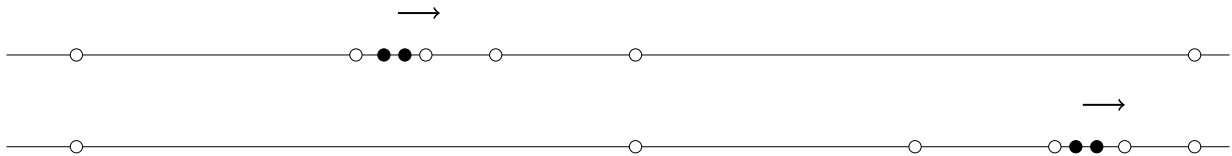

Since we use a fixed sequence, assuming the embedding is non-contractive is without loss of generality. By Yao's minimax principle, it suffices to present a distribution $\D$ over tuples $(u,v,t)$ such that, for any deterministic embedding $\{d_t\}$, we have $\E_{(u,v,t)\sim \D}[d_t(u,v)/d(u,v)] \ge \Omega(l)$. Let $t_x$ be the first time point $x$ appears as an iterating point (it may appear earlier as an encompassing point). Define the distribution $\D$ to be uniform over triples $(x, x+1, t_{x+1})$. Since $d(u, v) = 1$ for all $(u, v, t)$ in the support of $\D$, we can drop it and bound $f(l)=\E_{(u,v,t)\sim \D}[d_t(u,v)]$ the distortion of the sequence corresponding to $l$. We will show inductively that $f(l) \ge l/2$. The base case $l=2$ is trivial. 

We now perform the induction as follows. Fix $l$ and $m=2^l$ and assume $f(j) \ge j/2$ for all $j < l$. Since the embedding is non-contractive, we have $d_{t_m}(0,m)\ge m$. Let $p$ be the largest integer $p < m$ such that $d_{t_p}(0,p)<m$; thus, we must have $d_{t_{p+1}}(0,p+1) \ge m$. As an HST is an ultrametric and $d_{t_{p+1}}(0, p) \le d_{t_p}(0, p) < m$, it follows directly that $d_{t_{p+1}}(p,p+1)\ge m$. The pair $(p,p+1)$ yields a distortion of $m$ and is selected with probability $1/m$, contributing at least $1$ to the expected distortion.

Now, we consider the intervals $[1, p]$ and $[p+1, m]$. Note that we could not invoke the inductive hypothesis immediately, because our instance requires endpoints to exist when the points are iterating. Nevertheless, we can partition the intervals into smaller intervals of length powers of two, in which the encompassing points will be persistent endpoints.

\begin{figure}[ht]
        \centering
        \resizebox{\linewidth}{!}{
        \begin{tikzpicture}[font=\small]
            \draw (-0.5,0) -- (16.5,0) ;
        
           \draw (0,0.2) -- (0,-0.2) node[below=2pt]{0};
           \draw (16,0.2) -- (16,-0.2) node[below=2pt]{256};
            \filldraw (5,0) circle (1pt) node[below=2pt]{};
            \filldraw (5.2,0) circle (1pt) node[below=2pt]{};
            \draw [thick, decorate,decoration={brace,amplitude=10pt},yshift=2pt] (9,0) -- (16,0) node [black,midway,yshift=16pt] {$128$};
            \draw [thick, decorate,decoration={brace,amplitude=10pt},yshift=2pt] (6.6,0) -- (9,0) node [black,midway,yshift=16pt] {$32$};
            \draw [thick, decorate,decoration={brace,amplitude=10pt},yshift=2pt] (5.4,0) -- (6.6,0) node [black,midway,yshift=16pt] {$16$};
            \draw [thick, decorate,decoration={brace,amplitude=10pt},yshift=2pt] (5.2,0) -- (5.4,0) node [black,midway,yshift=16pt] {$1$};
            \draw [thick, decorate,decoration={brace,amplitude=10pt},yshift=2pt] (0,0) -- (3.8,0) node [black,midway,yshift=16pt] {$64$};
            \draw [thick, decorate,decoration={brace,amplitude=10pt},yshift=2pt] (3.8,0) -- (4.4,0) node [black,midway,yshift=16pt] {$8$};
            \draw [thick, decorate,decoration={brace,amplitude=10pt},yshift=2pt] (4.4,0) -- (4.775,0) node [black,midway,yshift=16pt] {$4$};
            \draw [thick, decorate,decoration={brace,amplitude=10pt},yshift=2pt] (4.775,0) -- (5,0) node [black,midway,yshift=16pt] {$2$};
            
        \end{tikzpicture}
        }
        \caption{Example of decomposition with $l=8$, $p=78$. Not drawn to scale.}
        \label{fig: example}
    \end{figure}
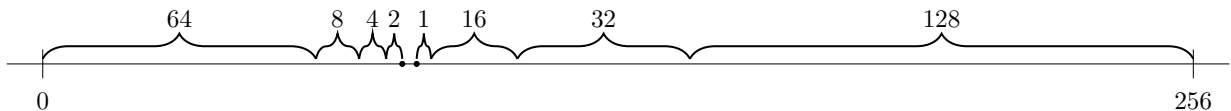

Let $p$ have binary representation $p=\sum_{j=1}^h 2^{w_j}$ for distinct integers $w_1>w_2>\dots>w_h$ and $p_i=\sum_{j=1}^i 2^{w_j}$ for $i \in [0, h]$. Then the pairs $(p_i, p_{i+1})$ will persist as encompassing pairs when $t \in [t_{p_i}, t_{p_{i+1}}]$, and the instance forms a smaller replica of size $w_{i+1}$. Similarly, we decompose the interval $[p+1, m]$. See \cref{fig: example} for an example of the decomposition.

We can now safely invoke our inductive hypothesis. For any $1\le p\le m-1$ being the split point, the binary representations of $p$ and $m-p-1$ contain each $2^0, \ldots, 2^{l-1}$ exactly once, since the binary representation of $2^l-1$ is $(111\ldots1)_2$ and thus $p$ and $2^{l}-1-p$ would not have $1$ on the same bit. For the interval of size $2^j$, the tuple $(u, v, t) \sim \D$ falls in that interval with probability $2^{j-l}$; conditioned on the pair belonging to the interval, the expected distortion is $f(j)$. Recall we also have a probability of $2^{-l}$ to pick the pair $(p, p+1)$ whose distance is $2^l$. We have 
\[
   f(l) = \sum_{j=0}^{l-1} 2^{j-l} \cdot f(j) + 2^{-l}\cdot 2^l\ge\sum_{j=0}^{l-1}\frac{j}{2^{l-j+1}}+1=\frac{l}{2}-1+\frac{1}{2^l}+1\ge\frac{l}{2}. \qedhere
\]

\end{proof}

\section{Applications}
\label{sec: application}

\subsection{Metrical Request-Answer Games}
Throughout this section, we use bold notation such as $\seq a$ for finite sequences. For a sequence $\seq{a}$ of length $m$ and any $t \le m$, we denote by $a_t$ the $t$th entry and by $\seq a_{[0, t]}$ the prefix of length $t$ of $\seq a$.

We define a general class of online problems in metric spaces that we call \emph{metrical request-answer games}. It generalizes the notion of request-answer games defined in~\cite{BenDavidBKTW94} by introducing dependence on a metric space. The request-answer games in~\cite{BenDavidBKTW94} correspond to the special case of our definition where $\alpha_t=0$.

\begin{definition}[Metrical Request-Answer Game]\label{def:requestAnswer}
    A \emph{metrical request-answer game} is defined by a request set $R$, an answer set $A$, a metric space $(X, d_X)$, and for each $t\in\mathbb N$ a cost function $c_t: R^t \times A^t \to \R_{\ge 0}\cup \br\infty$ of the form\footnote{In fact, as long as $c_t$ is non-negative, concave, and non-decreasing in $d_X$, our theorem holds.}
        \[
        c_t(\seq{r}, \seq{a}) 
        = \sum_{(u,v)\in X^2} \alpha_t(u, v, \seq{r}, \seq{a})d_X(u, v) 
          + \beta_t(\seq{r}, \seq{a}),
        \]
        for some functions $\alpha_t\colon X^2\times R^t\times A^t\to \R_{\ge 0}$ and $\beta_t\colon R^t\times A^t\to \R_{\ge 0}\cup \br\infty$.
\end{definition}

Intuitively, $\alpha_t(u, v, \seq r, \seq a)$ indicates the number of times an algorithm pays the distance from $u$ to $v$ at step $t$ if it serves the request sequence $\seq r$ with answers $\seq a$, and $\beta_t(\seq r, \seq a)$ is some metric-independent cost. In particular, $\beta_t(\seq r,\seq a)=\infty$ allows to specify infeasible answers.

\begin{definition}[Online Algorithm]
    A \emph{deterministic online algorithm} $\Alg$ for a metrical request-answer game is a sequence of functions $\Alg_t\colon R^t \to A$. Given a request sequence $\seq{r}\in R^m$, we write $\Alg(\seq{r}) = (a_1, a_2,\dots,a_m)$ for the sequence of answers selected by $\Alg$,
    where $a_t = \Alg_t(\seq r_{[0, t]})$ is the answer after the $t$-th request. The cost of $\Alg$ on $\seq{r}$ is $\mathrm{cost}_{\Alg}(\seq{r}) 
    = \sum_{t=1}^m c_t(\seq r_{[0, t]}, \Alg(\seq r_{[0, t]}))$.
    The optimal cost for the same sequence is
    $\mathrm{opt}(\seq{r}) 
    = \min_{\seq{a}\in A^m}\sum_{t=1}^m c_t(\seq r_{[0, t]}, \seq a_{[0, t]})$.

    A \emph{randomized online algorithm} $\Alg$ is a distribution over deterministic online algorithms $\Alg_x$. For any request sequence $\seq{r}$, the answer sequence $\Alg(\seq{r})$ and hence $\mathrm{cost}_{\Alg}(\seq{r})$ become random variables. Algorithm $\Alg$ is \emph{$\rho$-competitive} if there exists $\eta\ge 0$ such that for every $\seq{r}$,
    $
    \E_x[\mathrm{cost}_{\Alg_x}(\seq{r})] 
    \le \rho\cdot \mathrm{opt}(\seq{r}) + \eta
    $. We sometimes use $\mathrm {cost}_{\Alg}(\seq r)$ to refer to $\E_x[\mathrm{cost}_{{\Alg}_x}(\seq{r})]$.
\end{definition}

Countless online problems involving metric spaces can be modelled as metrical request-answer games. Some examples are given below.

\begin{example}[Metrical Task Systems (MTS) \cite{BorodinLS92}]
    A metrical task system (MTS) is defined on a metric space $(X,d)$, whose points are called \emph{states}. There is a fixed initial state $a_0\in X$. At each time step $t$, a task arrives, specified by a function $r_t: X \to \R_{\geq 0}\cup\{\infty\}$ that describes the cost of processing task $t$ in each state. In response, an algorithm has to choose a state $a_t\in X$ in which to process the task, incurring cost $d(a_{t-1},a_t)$ for movement and $r_t(a_t)$ for processing the task. 
    
    To formulate MTS as a metrical request-answer game, the request set $R$ is the set of all functions $X\to \R_{\geq 0}\cup\{\infty\}$, the answer set is $X$, and we define $\alpha_t(u, v, \seq r, \seq a) = 1$ iff $(a_{t-1}, a_t) = (u, v)$ and $0$ otherwise, and $\beta_t(\seq r, \seq a) = r_t(a_t)$.
\end{example}

We note that MTS is already a very general framework, containing many important online problems as special cases (e.g., $k$-server, caching, convex function chasing, layered graph traversal, dynamic power management, etc \cite{Vaze_2023, antoniadis2016chasing}). Consequently, all of these problems can also be modelled as metrical request-answer games.

\begin{example}[$k$-Server \cite{MMS88} and $k$-Taxi \cite{fiat1990}]
    In the $k$-taxi problem, there are $k$ taxis located at points of a metric space $(X, d_X)$. At each time $t$, a request appears, specified by a pair $(x_t,y_t)\in X\times X$ representing a passenger that wants to travel from $x_t$ to $y_t$. In response, an algorithm must move a taxi to $x_t$ and then to $y_t$ before seeing future requests. The cost is the total distance traveled \emph{without} a passenger on board (i.e., the distance from $x_t$ to $y_t$ is excluded). The $k$-server problem is the special case where $x_t=y_t$ for each request.
    
    To model $k$-taxi as a metrical request-answer game, we choose $R=X\times X$, $A=\{1,\dots,k\}$ (assigning numbers to taxis in some fixed way), $\beta_t(\seq{r},\seq{a}) = 0$, and $\alpha_t(u,v,\seq{r},\seq{a}) = 0$ or $1$ depending on if a taxi moves from $u$ to $v$ without a passenger in the corresponding step.
\end{example}

\begin{example}[Abstract Network Design]
     Metrical request-answer games subsume the class of \emph{Abstract Network Design} problems introduced in \cite{Bartal2020} in the context of metric embeddings. We omit a formal definition here. It includes problems like $k$-server, Steiner forest and variants, reordering buffer management, constrained file migration, and others.
\end{example}

\begin{example}[Metric Problems with Delay]
    Problems whose costs combine distance and delay costs, such as online service with delay~\cite{AzarGGP21} and online matching with delay~\cite{EmekKW16}, can be modeled as metrical request-answer games, charging delay costs via the $\beta_t$ functions. With each answer, the algorithm specifies when and how it would serve each pending request assuming no additional information (in the form of new requests) becomes available beforehand.
\end{example}

For a request sequence $\seq r \in R^t$, let $V(\seq{r})$ denote the set of relevant points induced by the first $t$ requests, i.e., points whose distance to some point might contribute to the cost:
\[
V(\seq{r}) 
= \left\{
  u\in X \mid
  \exists j\le t, v\in X,\ \seq{a}\in A^j:
  \alpha_j(u,v,\seq r_{[0, j]},\seq a) + \alpha_j(v,u,\seq r_{[0, j]},\seq a) > 0
\right\}.
\]
For example, in the $k$-taxi problem, $V(\seq r)$ is the set of points appearing in the initial configuration and requests of $\seq r$.

\subsection{Monotone Potential Functions}
\label{sec: monotone potential}
We have alluded to the fact that potential functions are usually monotone in distances as a central motivation for online monotone embeddings. In this section, we provide more discussion.

We first define the notion of a potential function, which generalizes the definition in~\cite{BenDavidBKTW94} to include online algorithms against oblivious adversaries. Recall that $\mathcal D(S)$ denotes the set of probability distributions over a set $S$.

\begin{definition}[Potential Function]
For a metrical request-answer game, a family
\[
\Phi = \{\Phi_t : R^t \times \D(A^t) \times A^t \to \R_{\ge 0}\}_{t\ge 0}
\]
is called a potential function for $\rho$-competitiveness if the following is true for each $t\ge 1$:

For every $\seq r\in R^{t}$ and $\gamma\in \D(A^{t-1})$, there exists $\gamma'\in\D(A^{t})$ such that the marginal distribution of $\gamma'$ over the first $t-1$ answers is $\gamma$ and for every $\seq b \in A^{t}$ we have
  \begin{align}
  \E_{\seq a\sim \gamma'}\left[c_{t}(\seq r, \seq a)\right]
  + \Phi_{t}(\seq r, \gamma', \seq b)
  - \Phi_{t-1}(\seq r_{[0, t-1]}, \gamma, \seq b_{[0, t-1]})
  \le \rho \cdot c_{t}(\seq r, \seq b).\label{eq:potential}
  \end{align}
\end{definition}

Potential functions are a common method of proving the competitiveness of online algorithms. Given a potential function, a corresponding online algorithm can be defined by extending its distribution of answers in each step from $\gamma$ to some $\gamma'$ satisfying inequality~\eqref{eq:potential}. Taking $\seq b$ to be the answer sequence chosen by some optimal offline algorithm and summing inequality~\eqref{eq:potential} over all time steps shows that the algorithm is $\rho$-competitive. 

Intuitively, the potential function measures the disadvantage of the online algorithm in a given configuration. Typically, this involves a \emph{distance} between the online and offline configurations (e.g., matching or relative entropy), and sometimes \emph{dispersion} of the online configuration, as this causes future uncertainty or mistakes (e.g., pairwise distances among servers of the Double Coverage algorithm~\cite{CKP91new} for $k$-server), both of which are monotone. We give a few examples to illustrate the idea.

\begin{example}[Potential for $k$-Server on Trees]
\label{ex:DCPotential}
    For the $k$-server problem on trees, \cite{CKP91new} gave a $k$-competitive algorithm, using a potential $\Phi = kM + \Sigma$  where $M$ denotes the value of a minimum matching between the locations of the $k$ online and offline servers and $\Sigma$ denotes the sum of pairwise distances between the online servers.
\end{example}

\begin{example}[Potential for $k$-Server on HSTs]
\label{ex:serverPotential}
    For the $k$-server problem on HSTs, \cite{Bubeck18} gave an $O(\log^2 k)$-competitive randomized algorithm, using a monotone potential. The full description and proof of monotonicity appear in \cref{appendix: kserver}.
\end{example}

\begin{example}[Potential for $k$-Taxi on HSTs]\label{ex:taxiPotential}
    For the $k$-taxi problem on HSTs, \cite{coester2019online} gave a $(2^k-1)$-competitive randomized algorithm, using a potential equal to $2^k-1$ times the value of a minimum matching between the locations of the $k$ online taxis and the locations of the $k$ offline taxis. In our notation, we can express this as follows:  
    
    For two multisets $A$ and $B$ of locations in an HST $T$, let $d_T(A, B)$ denote the value of a minimum matching between $A$ and $B$. For a sequence of requests $\seq r\in R^t$ and answers $\seq a\in A^t$, let $C(\seq r, \seq a)$ denote the resulting set of locations of the $k$ taxis. Then $\Phi_t(\seq r, \gamma, \seq b) = (2^k-1)\E_{\seq a \sim \gamma}[d_T(C(\seq r, \seq a), C(\seq r, \seq b))]$.
\end{example}

\begin{example}[Potential for Constrained Forest on Trees]\label{ex:forestPotential}
In the Constrained Forest Problem~\cite{KP95Constrained, Bartal2020}, each request consists of a set of terminals $Z_t$ and a cut requirement function $g_t: 2^{Z_t} \rightarrow \{0,1\}$ that is proper.\footnote{The function $g_t$ is proper if
$g_t(\varnothing) = g_t(Z_t) = 0$, 
$g_t(X) = g_t(Z_t - X)$ for all $X \subseteq Z_t$, and 
$g_t(X \cup Y) \leq \max\{g_t(X), g_t(Y)\}$ for all disjoint sets $X, Y \subseteq Z_t$.} The problem is augmented with a subadditive cost function $\chi_t: 2^{\{1, \ldots, t\}} \to \R$ that is gradually revealed.

The algorithm responds with a set of edges $R_t$ that satisfies $(Z_t, g_t)$, i.e., for every vertex subset $S \subseteq V$ such that $g_t(S \cap Z_t) = 1$, there is at least one edge in $R_t$ that leaves the set $S$. The cost of the algorithm until time $t$ is $\sum_{u, v} d(u, v) \chi(\{t' \le t \mid (u, v) \in R_t\})$.

    The subadditive constrained forest problem can be trivially solved exactly on trees~\cite{Bartal2020}. Therefore, the constant function $0$ is a potential function.
\end{example}

\subsection{The Application Framework}
\label{sec: reduction}
We are now ready to state our reduction theorem, which is the formal version of Theorem~\ref{thm:informalApplication}: our evolving embeddings are applicable whenever there is a potential function in the family of target spaces that is non-decreasing in distances.

\begin{theorem}
\label{thm: reduction}
Consider a request set $R$, answer set $A$, and functions $\alpha_t$ and $\beta_t$ as in Definition~\ref{def:requestAnswer} for some fixed ground set $X$. All metric spaces considered in this theorem have subsets of $X$ as their sets of points. For a metric space $M=(V_M,d_M)$, denote by $G_M$ the associated request-answer game (restricted to request sequences $\seq r$ with $V(\seq r)\subseteq V_M$) and by $c_t^M$ its cost function (as induced by $\alpha_t$ and $\beta_t$).
Let $\mc R$ be some set of request sequences such that $|V(\seq r)|\le n$ for all $\seq r\in \mc R$. Let $\mc M$ be a family of target metrics. Suppose the following holds:
\begin{enumerate}
\item There is an online monotone embedding from $N=(X, d_N)$ to $\mc M$ with distortion $\lambda$ for sequences of $n$ points.
\item For each $M \in \mc M$,  there exists a potential function $\Phi^M$ for $\rho$-competitiveness for $G_M$. Further, the family $\{\Phi^M\}$ is non-decreasing in distances: if $M$ dominates $M'$ (i.e., $d_M(u,v)\ge d_{M'}(u,v)$ for all $u,v$), then for all $t, \seq r, \gamma, \seq b$,
\[
\Phi^M_t(\seq r,\gamma,\seq b) \ge \Phi^{M'}_t(\seq r,\gamma,\seq b).
\]
\end{enumerate}
Then there is a $\rho\lambda$-competitive algorithm for $G_{N}$ for request sequences in $\mc R$.
\end{theorem}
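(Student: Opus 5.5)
We will build the algorithm for $G_N$ by running the monotone embedding alongside the potential-based algorithms on the target metrics. First sample a deterministic embedding from the probabilistic online monotone embedding. We then feed it, as an update sequence, the points of $V(\seq r)$ in the order they become relevant; there are no departures, and at most $n$ points arrive since $\seq r\in\mc R$. After request $t$ we have a target metric $M_t=(V_t,d_t)\in\mc M$ that dominates $d_N$ on $V_t\supseteq V(\seq r_{[0,t]})$, and $d_t\le d_{t-1}$ on $V_{t-1}$. To handle the technicality that $\Phi^{M_t}$ is defined only for sequences with $V(\seq r)\subseteq V_{M_t}$, we note that all prefixes $\seq r_{[0,t']}$ with $t'\le t$ satisfy this, so $\Phi^{M_t}_{t'}$ and $c^{M_t}_{t'}$ are both defined for them.

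The algorithm keeps a distribution $\gamma_{t}\in\D(A^{t})$ over answer sequences. Given $\gamma_{t-1}$ and request $r_t$, it applies the potential property of $\Phi^{M_t}$ to $(\seq r,\gamma_{t-1})$ to obtain $\gamma_t$ extending $\gamma_{t-1}$, and it samples $a_t$ from the conditional distribution of $\gamma_t$ given the answers already chosen. Fix the embedding and any offline answer sequence $\seq b$. The defining inequality \eqref{eq:potential} in $M_t$ gives
\[
\E_{\seq a\sim\gamma_t}[c^{M_t}_t(\seq r,\seq a)]+\Phi^{M_t}_t(\seq r,\gamma_t,\seq b)-\Phi^{M_t}_{t-1}(\seq r_{[0,t-1]},\gamma_{t-1},\seq b_{[0,t-1]})\le\rho\, c^{M_t}_t(\seq r,\seq b).
\]
This is the main obstacle: when we sum over $t$, the potential does not telescope directly, because consecutive terms are taken in different metrics $M_t$ and $M_{t-1}$. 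Monotonicity repairs this. Since $M_{t-1}$ dominates $M_t$ on their common points (and the relevant prefix lives in $V_{t-1}$), the non-decreasing property gives $\Phi^{M_t}_{t-1}(\cdots)\le\Phi^{M_{t-1}}_{t-1}(\cdots)$. One has to be careful that domination is stated for metrics on possibly different point sets. We would apply the hypothesis after restricting to, or comparing on, $V_{t-1}$, arguing that $\Phi$ depends only on distances among relevant points. If that fails, we would extend $M_{t-1}$ formally by assigning the new point its $M_t$-distances. Replacing the subtracted term by this larger value preserves the inequality, and the sum now telescopes to $-\Phi^{M_1}_0\le 0$ plus a final nonnegative potential, which we may drop.

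Next we compare the costs. On the algorithm side, $c^{M_t}_t\ge c^N_t$ because the $\alpha_t$ are nonnegative and $d_t\ge d_N$ on relevant pairs (pairs with $\alpha_t>0$ lie in $V(\seq r)$). Hence the algorithm's expected cost in $N$, conditioned on the embedding, is at most $\rho\sum_t c^{M_t}_t(\seq r,\seq b)+\Phi_0$. Now take $\seq b$ to be an optimal answer sequence for $N$; it is fixed independently of the embedding randomness, by obliviousness. Then $\sum_t c^{M_t}_t(\seq r,\seq b)=\sum_t\big[\sum_{u,v}\alpha_t d_t(u,v)+\beta_t\big]$. Taking expectation over the embedding and using $\E[d_t(u,v)]\le\lambda d_N(u,v)$ together with $\lambda\ge1$ (so that $\beta_t\le\lambda\beta_t$) bounds this by $\lambda\,\mathrm{opt}(\seq r)$. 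The additive constant comes from the initial potential, so the algorithm is $\rho\lambda$-competitive.
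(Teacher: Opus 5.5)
Your proposal is correct and follows the paper's proof essentially step for step. The paper likewise runs the potential-based algorithm in the current target metric $M_t$, uses $\Phi^{M_t}_{t-1}\le\Phi^{M_{t-1}}_{t-1}$ so the potentials telescope, bounds the algorithm's cost in $N$ by its cost in $M_t$ via non-contraction, and bounds the expected $M_t$-cost of the fixed optimal sequence $\seq b$ by $\lambda$ times its $N$-cost. Your fallback about differing point sets is unnecessary, since the paper defines domination on common points and so it applies directly; your handling of the additive term $\Phi_0$ is at the same level of rigor as the paper's.
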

\begin{proof}[Proof Sketch]
    Let $r\in\mc R$ be the request sequence that is revealed online.
    Note that the sets $V(\seq r_{[0, t]})$ of relevant points for the first $t$ requests are increasing in $t$. Applying the online monotone embedding, we obtain a metric $M_t=(V(\seq r_{[0, t]}),d_t)$ for each $t=1,\dots,m$, where $m$ is the length of $\seq r$.

    For fixed $\seq M=(M_1,\dots,M_m)$, we denote by $G_{\seq M}$ the game with cost function $c_t^{M_t}$ at step $t$. We first define an algorithm $\Alg^{\seq M}$ for $G_{\seq M}$ inductively as follows. If $\gamma\in\mc D(A^{t-1})$ is the distribution of answers before the $i$th request arrives, then the next answer is chosen so as to extend the distribution to $\gamma'\in\mc D(A^{i})$ satisfying inequality~\eqref{eq:potential} for the cost function $c_{t}^{M_{t}}$ and potential $\Phi^{M_t}$. The overall algorithm $\Alg$ for $G_N$ is obtained by taking randomness over $\seq M$.

    Bounding the competitive ratio has three steps. First, we show that the total cost of $\Alg^{\seq M}$ in $G_{\seq M}$ is bounded by $\rho$ times the offline cost in $G_{\seq M}$. This step uses the monotonicity of the potential function and the embedding. Then, we bound the expected offline cost in $G_{\seq M}$ by the offline cost on $G_N$ times the distortion of $\seq M$. Finally, since the embedding is non-contractive, we bound the cost of $\Alg$ for the original game $G_N$ by its cost on $G_{\seq M}$. Combining the three arguments gives the competitive ratio $\rho\lambda$. 
    
    The full proof formalizes the above reasoning and appears in \cref{appendix: application}.
\end{proof}

Together with the monotonicity of \cref{ex:forestPotential}, \cref{thm: reduction} immediately implies a competitive algorithm for the subadditive constrained forest problem, recovering a result from~\cite{Bartal2020,Bhore24duet} where this was proved using a strict online embedding and a min operator to combine with a baseline algorithm to avoid dependence on $\Delta$.

\begin{restatable}[Constrained Forest~\cite{Bartal2020,Bhore24duet}]{corollary}{CForest}
\label{thm: constraint forest}
There is an $O(\log^2 n)$-competitive algorithm for the subadditive constrained forest problem on general metrics and an $O(\log n)$-competitive algorithm on $O(1)$-dimensional normed spaces.
\end{restatable}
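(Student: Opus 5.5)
The plan is to apply \cref{thm: reduction} with the online monotone embeddings of \cref{thm: general} into HSTs, and to take the constant potential from \cref{ex:forestPotential}. The first step is to check that the subadditive constrained forest problem fits \cref{def:requestAnswer}. Requests are the pairs $(Z_t,g_t)$ together with the newly revealed part of $\chi$. Answers are the edge sets $R_t$. The cost increment at step $t$ is $\sum_{u,v} d(u,v)\bigl(\chi(\{t'\le t: (u,v)\in R_{t'}\}) - \chi(\{t'\le t-1: (u,v)\in R_{t'}\})\bigr)$, so $\alpha_t(u,v,\seq r,\seq a)$ is this bracket. If $\chi$ is monotone this bracket is nonnegative; otherwise we can charge the full cost at the final step, or use the footnote's relaxation that $c_t$ only needs to be nonnegative and non-decreasing in $d_X$. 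The term $\beta_t$ is $0$ if $R_t$ satisfies $(Z_t,g_t)$ and $\infty$ otherwise. The relevant point set $V(\seq r)$ consists of the terminals seen so far, so $|V(\seq r)|\le n$ once $n$ is defined as the number of terminals.

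The second step is to verify hypothesis 2 of \cref{thm: reduction}. On a tree metric, and in particular on an HST, the problem can be solved exactly online, as noted in~\cite{Bartal2020}: take the unique tree paths. Hence $\rho=1$, and $\Phi^M\equiv 0$ is a valid potential. Inequality~\eqref{eq:potential} then says that the online cost at each step is at most $\rho$ times the cost of any $\seq b$ at that step. This requires a little care, because an exact online algorithm gives the inequality in aggregate rather than step by step. My plan is to use the fact that on a tree every feasible solution must include the edges on the paths that separate $Z_t$ with $g_t=1$. For an HST, where points are leaves, I would pass to the underlying weighted tree with edge weights, which are equivalent up to a constant factor. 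Then one shows that the online choice, the minimal forced edge set, satisfies the per-step inequality against any $\seq b$ up to a constant. If the per-step form fails, I would fall back on the aggregate statement used in Bartal et al., which is enough for the proof of \cref{thm: reduction}. Monotonicity in distances is trivial because $\Phi\equiv 0$.

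With the hypotheses in place, \cref{thm: reduction} combined with \cref{thm: general} gives competitive ratio $O(1)\cdot O(\log^2 n)$ on general metrics and $O(\log n)$ on $O(1)$-dimensional normed spaces. When $n$ is not known in advance, I would use guess-and-double on $n$ in the standard way. Each phase restarts the embedding with guess $2^{2^i}$, the phase costs are summed, and each phase is compared against OPT restricted to its own requests, which is at most OPT; this needs subadditivity of the cost. I expect the main obstacle to be the second step, namely justifying that the zero potential satisfies~\eqref{eq:potential} step by step on HSTs given the subadditive, history-dependent cost $\chi$. I would attack it through the tree's uniqueness of paths, which makes the set of edges used by the online algorithm a subset of those used by any feasible solution.
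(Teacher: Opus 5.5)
Your overall route is the paper's. Its whole proof is that \cref{ex:forestPotential} supplies the constant potential $\Phi\equiv 0$ with $\rho=1$, which is trivially non-decreasing in distances, so \cref{thm: reduction} with the embeddings of \cref{thm: general} yields both bounds. The gap is in the step you yourself single out, and neither of your ways around it works.

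Your per-step argument (the online algorithm buys only forced tree edges, every feasible solution buys them too, hence \eqref{eq:potential} holds up to a constant) fails. The step-$t$ cost of an edge $e$ is the increment $\chi(\{t'\le t: e\in R_{t'}\})-\chi(\{t'\le t-1: e\in R_{t'}\})$ of its usage history. Let $F(e)$ and $B(e)$ be the times at which the online algorithm and $\seq b$ use $e$. Containment $F(e)\subseteq B(e)$ gives only the cumulative bound $\chi(F(e))\le\chi(B(e))$ for monotone $\chi$, never a per-step one.

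Concretely, take ordinary Steiner forest ($\chi(S)=1$ for $S\neq\varnothing$) on two points $x,y$ at distance $1$. Request $1$ is $Z_1=\{x,y\}$ with $g_1\equiv 0$, and request $2$ asks to connect $x$ and $y$. Against the $\seq b$ that buys nothing at step $1$, inequality \eqref{eq:potential} with $\Phi\equiv 0$ forces the online algorithm not to buy $(x,y)$ at step $1$. Now take the $\seq b$ that already bought $(x,y)$ at step $1$ and hence pays $0$ at step $2$. The online algorithm must pay $1$ at step $2$, which violates \eqref{eq:potential} for every $\rho$. So the zero function is not a potential in the sense of the definition, and no per-step repair up to a constant exists.

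Your fallback does not close this either, because an aggregate guarantee on one fixed tree is not what \cref{thm: reduction} consumes. Its proof (\cref{claim: step1}) telescopes per-step inequalities while the metric changes from $M_{t-1}$ to $M_t$. It is precisely the monotonicity of the potential that absorbs each change. Exact solvability on a single tree says nothing about $\sum_t c^{M_t}_t$ when HST branches are merged between steps.

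What is missing is a potential that is non-decreasing in distances and records the offline solution's prepayment. The natural candidate is $\sum_e w_e\bigl(\chi(B_{\le t}(e))-\chi(F_{\le t}(e))\bigr)$ over tree edges $e$ of weight $w_e$. It does give \eqref{eq:potential} with $\rho=1$ when $F\subseteq B$. However, it is not automatically monotone under the merges the embedding performs: a merge can stop the online paths from crossing an edge that the offline paths still cross, which increases it. This is where real work is needed.

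The paper's one-line proof rests on the same assertion in \cref{ex:forestPotential}, so you inherit this gap rather than create it. Still, as a proof plan, the step you flagged remains open. Your guess-and-double treatment of unknown $n$ is fine: total cost is at most the sum of phase costs by subadditivity, and each phase optimum is at most $\mathrm{opt}$ by monotonicity of $\chi$.
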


\cref{thm: k-taxi} for the $k$-taxi problem is almost implied by the monotonicity of \cref{ex:taxiPotential} (the $k$-taxi potential), \cref{thm: reduction} and \cref{thm: general}. The only remaining issue is that our embedding algorithms require prior knowledge of $n$. We now show that we can handle the unknown-$n$ scenario using the ``guess-and-double'' technique. The proof for \cref{thm: k-server} ($k$-server) requires some technical desiderata and is deferred to \cref{appendix: kserver}.

\newcommand{\opt}{\mathrm{opt}}
\ktaxi*
\begin{proof}
We present the proof for general metric spaces, where our embedding provides distortion $O(\log^2 n)$. The case of bounded-dimensional normed spaces follows similarly, with all occurrences of $\log^2n_t$ replaced by $\log n_t$.

Let $\opt_t$ denote the optimal offline cost for serving the first $t$ requests.  Let $\seq r$ be the sequence revealed online. Recall $V(\seq r_{[0, t]})$ is the set of points among the initial locations and the first $t$ requests, and let $n_t = |V(\seq r_{[0, t]})|$. Thus, $n = |V(\seq r)|$. Our algorithm maintains two guesses throughout its execution: a guess $\zeta$ for the value of $\log^2n\cdot\opt$, and a guess $m$ for the total number of points. 

\paragraph{Phase Partitions.}
Initially (phase 1), we set $\zeta_1 = \min_{u \ne v} d(u, v)$ and $m_1 = 2$. Each phase $i$ ends at $t_i$, defined as the earliest time $t$ satisfying the condition 
\[
\zeta_i < \log^2 n_t\cdot \opt_t,
\]
after which phase $i+1$ immediately begins. When phase $i > 1$ starts, we update our guesses by doubling the previous value of $\zeta$ and an upper bound of $m$,
\[
\zeta_i = 2\zeta_{i-1}, \qquad
m_i = n_t^2.
\]
This ensures that at every time $t$ within phase $i$, we have $\sqrt{m_i} \le n_t \le m_i$.
The left inequality follows immediately from the definition of $m_i$. The right inequality holds because, if $n_t > m_i$, a phase transition would have already occurred: the condition $\log^2n_t\cdot\opt_t > 2\log^2(\sqrt{m_i})\cdot\opt_{t_{i-1}} = 2\zeta_{i-1} = \zeta_i$ would have triggered the start of a new phase.

\paragraph{The algorithm.} For each time $t$ in phase $i$, the algorithm maintains an embedding of $V(\seq r_{[0, t]})$ into an HST, assuming the total number of points is $m_i$, as guaranteed by \cref{thm: general}. Let $\Alg_i$ denote the algorithm obtained by simulating the $k$-taxi algorithm of~\cite{coester2019online} on the HST embedding maintained during phase $i$. Throughout phase $i$, the positions of taxis exactly follow those in $\Alg_i$. Note that the configuration of taxis may change significantly at phase transitions.

\paragraph{The Analysis.}
Suppose there are $p$ phases in total. When transitioning between phases $i$ and $i+1$ at time $t_i$, the algorithm incurs a cost of at most
\[
\mathrm{cost}_{\Alg_i}(\seq r_{[0,t_i]}) + \mathrm{cost}_{\Alg_{i+1}}(\seq r_{[0,t_i]}),
\]
as we could return to the initial configuration before switching algorithms. Within phase $i$, the cost incurred is identical to $\mathrm{cost}_{\Alg_i}$. Therefore, the total cost over all phases is at most $2\sum_{i=1}^{p}\mathrm{cost}_{\Alg_i}(\seq r_{[0,t_i]})$. By applying \cref{thm: reduction} and the guarantee in \cite{coester2019online}, we have for each phase $i$:
\[
\mathrm{cost}_{\Alg_i}(\seq r_{[0,t_i]}) \le O(2^k\log^2 m_i)\cdot\opt_{t_i} \le 2^k\zeta_i.
\]

Since the guesses satisfy $\zeta_i = 2\zeta_{i-1}$, summing over all phases yields the total cost is at most 
\[
2\sum_{i=1}^p \mathrm{cost}_{\Alg_i}(\seq r_{[0,t_i]}) \le 2^{k+1}
\sum_{i=1}^{p}\zeta_i \le 2^{k+2}\zeta_p \le O(2^k\log^2 m_p)\cdot\opt = O(2^k\log^2 n)\cdot\opt. \qedhere
\]
\end{proof}

\subsection{Application for Dynamic Embedding}
The application of dynamic embeddings can require problem-specific considerations, since the set of “relevant” points may vary for different problems. As an example, we revisit the $k$-taxi problem.

\begin{restatable}[]{theorem}{ktaxiEquivalent}
\label{thm: k-taxi equivalent}
The following two statements are equivalent:
\begin{enumerate}[itemsep=0em]
    \item There is an $f(k)$-competitive online algorithm for the $k$-taxi problem, for some function $f$.
    \item It is possible to maintain online an evolving set of at most $g(k)$ ``relevant'' points, such that the best \emph{offline} algorithm that must have taxis only at relevant points is an $h(k)$-approximation of the unrestricted optimal solution, for some functions $g$ and $h$.
\end{enumerate}
\end{restatable}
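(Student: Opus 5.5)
We prove the two directions separately. The direction (1)$\Rightarrow$(2) is essentially a tautology: given an $f(k)$-competitive online algorithm $\Alg$, we declare the relevant set at each time to be the current positions of the $k$ taxis of $\Alg$. For a randomized $\Alg$ we fix one realization of its randomness, or take the support of the current configuration if one prefers a deterministic set. For $k$-taxi it is also convenient to include the current request pair $(x_t,y_t)$. The relevant set then has size $g(k)\le k+2$. The restricted offline optimum is at most the cost of $\Alg$ itself, since $\Alg$'s trajectory keeps taxis only at relevant points. In expectation this is at most $f(k)\cdot\mathrm{opt}$, so $h(k)=f(k)$ up to the additive constant. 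If a deterministic statement is required, we argue via a fixed realization whose cost is at most the expectation, which is possible because the sequence is oblivious and fixed.

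The substantive direction is (2)$\Rightarrow$(1). The plan is to combine the maintained relevant set with the dynamic embedding of \cref{thm: general dynamic} and the HST $k$-taxi algorithm of \cite{coester2019online} through the reduction \cref{thm: reduction}. We feed the embedding an update sequence in which points arrive when they become relevant and depart when they stop being relevant. We additionally keep alive the current request endpoints and the current positions of our own taxis. This yields width $l=O(g(k)+k)$ and hence distortion $O(l\log l)$, a function of $k$ only. We then run the HST algorithm on the restricted game. Whenever a request $(x_t,y_t)$ involves non-relevant points, we route through the embedding, which contains them as alive points at that time. The analysis chains three inequalities. First, our cost in $X$ is at most our cost in the evolving HSTs, by non-contraction. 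Second, that cost is at most $(2^k-1)$ times the HST offline optimum restricted to relevant points, using the monotone matching potential of \cref{ex:taxiPotential} and the argument of \cref{thm: reduction}; monotonicity of distances under departures is exactly what the dynamic model supplies. Third, the expected restricted HST optimum is at most $O(l\log l)$ times the restricted optimum in $X$, which is at most $h(k)\cdot\mathrm{opt}$. This gives $f(k)=O(2^k\,l\log l\cdot h(k))$.

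The main obstacle is the second and third steps. \cref{thm: reduction} is stated for incremental relevant sets, whereas here points depart. A departing point may be a location of the offline comparison taxis, or of our own taxis, and must stay embedded while occupied. I would handle this by restricting the comparison offline solution to the relevant-point solution guaranteed by (2), so that its taxis always sit on alive points. I would also keep our taxis' positions alive, which only increases width by $k$. Then the potential is evaluated only on alive points, and the monotonicity of $d_t$ on $L_{t-1}\cap L_t$ ensures that the potential does not increase across an update. The distortion bound in expectation must then be applied to the restricted offline trajectory, which is fixed independently of the embedding's randomness; this is valid for an oblivious adversary.
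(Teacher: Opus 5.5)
Your proposal follows the paper's proof essentially step for step. For (1)$\Rightarrow$(2), the relevant set is the current positions of the online taxis. For (2)$\Rightarrow$(1), you run the $(2^k-1)$-competitive HST algorithm of \cite{coester2019online} on the dynamic embedding of \cref{thm: general dynamic}, applied to $C_t\cup S_t\cup S_{t+1}\cup\{x_{t+1},y_{t+1}\}$ with new relevant points inserted before stale ones are deleted (width at most $2g(k)+k+2$), and compare against a fixed restricted offline solution via the monotone matching potential.

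One caveat, which the paper's argument shares. Your taxis' positions $C_t$ are computed from the embedded HST, so the update sequence fed to the embedding depends on the embedding's own randomness. Your appeal to an oblivious adversary therefore justifies independence of the offline trajectory, but not of the embedded update sequence, on which the expected-distortion guarantee formally relies.
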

\begin{proof}
    We use $S_t$ to denote the set of relevant points at time $t$. The implication $1 \implies 2$ is trivial: simply set $S_t$ to be the locations of the online taxis. This yields $g(k) = k$ and $h(k) = f(k)$.

    For the implication $2 \implies 1$, we employ our monotone embedding in \cref{thm: general dynamic}. Suppose the configuration of the online algorithm at time $t$ is $C_t$. When a new request $(x_{t+1}, y_{t+1})$ arrives, we embed the points $C_t \cup S_t \cup S_{t+1} \cup \{x_{t+1}, y_{t+1}\}$ 
    into an HST, and simulate the online algorithm from~\cite{coester2019online} on this HST embedding. After serving the request, we remove all points except those in $C_{t+1} \cup S_{t+1}$. We claim that this algorithm achieves a competitive ratio of
    \[
        f(k) = O(2^k h(k) g(k) \log (g(k))).
    \]

    Consider the optimal offline solution $\seq b$ that restricts its taxis to locations in $S_t$ at every time $t$. Recall from~\cite{coester2019online} that the potential function is defined as $(2^{k}-1)$ times the minimum-weight matching between the offline and online taxi locations. Therefore, monotonicity (among the alive points) ensures that updating the embedding never increases this potential. Moreover, since $|C_t \cup S_t \cup S_{t+1} \cup \{x_{t+1}, y_{t+1}\}| \le 2g(k) + k + 2$, the embedding distortion at each step is at most $O(g(k)\log(g(k)))$. A similar analysis to that in \cref{thm: reduction} shows that the algorithm is 
    $O(2^k g(k)\log(g(k)))$-competitive against the restricted offline solution $\seq b$, which implies the overall competitive ratio of $O(2^k h(k) g(k)\log(g(k)))$.
\end{proof}

\cref{thm: k-taxi equivalent} is a speculative theorem that relates the competitive ratio of the $k$-taxi problem to the maintenance of a relevant set of points. We further provide a partial characterization based on the size of the support of the work function. 

\begin{definition}[Work Function Width of $k$-taxi]
Let $w_t(C)$ be the minimum cost of serving the first $t$ requests in a $k$-taxi instance and ending in configuration $C$. A configuration $C$ is in the \emph{support} at time $t$ if there is no distinct configuration $C'$ with $w_t(C) = w_t(C') + d(C',C)$. The \emph{work function width} of an instance is the maximum number of points that appear in these support sets over every time $t$.
\end{definition}

\begin{corollary}
There is an $O(2^kl\log(l))$-competitive algorithm for any $k$-taxi instance whose work function width is $l$.
\end{corollary}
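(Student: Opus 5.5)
The plan is to derive the corollary from \cref{thm: k-taxi equivalent}, direction $2 \implies 1$, by choosing the relevant set $S_t$ to be the set of points appearing in the support configurations of the work function at time $t$. By the definition of work function width, $|S_t| \le l$ for every $t$, so $g(k)$ is replaced by $l$. The claim to establish is that the best offline algorithm whose taxis stay at points of $S_t$ at every time $t$ is exactly optimal, i.e.\ $h = 1$. The bound $O(2^k l\log l)$ then follows from the analysis in the proof of \cref{thm: k-taxi equivalent}. There, the alive set has size at most $2l+k+2$. Since a support configuration consists of $k$ points, $l \ge k$ whenever the support is nonempty, so the width is $O(l)$ and \cref{thm: general dynamic} gives distortion $O(l\log l)$.

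The key step is to show that some optimal offline solution can be chosen with configuration $C_t$ in the support at every time $t$. I would construct it backwards. First, the final configuration can be taken to minimize $w_T$. A minimizer can be taken in the support: if $w_T(C)=w_T(C')+d(C',C)$ with $C'\ne C$, then $C'$ is also a minimizer, and by the triangle inequality, iterating this replacement must terminate. This uses finiteness, or a tie-breaking argument on a finite candidate set.

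Next, given $C_t$ in the support at time $t$, I would use the work function recurrence. It says $w_t(C_t)=\min_{C}\bigl(w_{t-1}(C)+\mathrm{cost}(C\to C_t \text{ serving } r_t)\bigr)$, where the serving cost is the minimum over taxis of the empty travel to $x_t$ plus the relocation of the remaining taxis. Pick the minimizer $C_{t-1}$, and if it is not in the support, replace it by a $C'$ with $w_{t-1}(C_{t-1})=w_{t-1}(C')+d(C',C_{t-1})$. The triangle inequality for the transition cost, namely $\mathrm{cost}(C'\to C_t)\le d(C',C_{t-1})+\mathrm{cost}(C_{t-1}\to C_t)$, keeps $C'$ a minimizer. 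Repeating this terminates in a support configuration. Chaining over all $t$ gives an optimal offline solution whose configurations lie in $S_t$ between requests.

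The main obstacle is a subtlety in the definition of $S_t$. The restricted offline solution in \cref{thm: k-taxi equivalent} needs taxis only at relevant points, but during service of request $t$ a taxi passes through $x_t$ and $y_t$. Those points are included in the embedded set $\{x_{t+1},y_{t+1}\}$, so this is harmless. I would also have to check that termination of the replacement procedure is well defined in infinite metrics. I would handle it by restricting attention to configurations supported on requested points and initial positions, which is a finite set at each time. Distances between consecutive support configurations are then paid in the HST with the stated distortion, which completes the reduction.
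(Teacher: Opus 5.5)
Your proposal is correct and follows the paper's own route. The paper also takes the relevant set to be the points of the support configurations (computable online, since $w_t$ depends only on past requests), keeps an optimal offline solution in support configurations via the replacement $C\mapsto C'$ with $w_t(C)=w_t(C')+d(C',C)$, and invokes Theorem~\ref{thm: k-taxi equivalent} with $g=l$ and $h=1$; your backward construction and finiteness argument just make that one-line replacement step rigorous. One minor caveat: taxis may share a location, so $l\ge k$ holds only if the initial configuration (the unique support configuration at time $0$) has $k$ distinct points, and otherwise the embedded width is $O(l+k)$ rather than $O(l)$, which the paper also glosses over.
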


\begin{proof}
An optimal offline solution can remain in a support configuration at each step, because if it were in some $C$ outside the support, it would be just as good to be at $C' \neq C$ with $w_t(C) = w_t(C') + d(C', C)$. Also, the online algorithm can maintain the support. The theorem follows from \cref{thm: k-taxi equivalent}.
\end{proof}

\section{Conclusion and Discussion}
\label{sec: conclusion}
We present a new framework, online monotone metric embeddings, which allows the embedding to evolve over time, provided that distances between points do not increase. For embeddings into HSTs, we establish $O(\log^2 n)$ distortion from general metrics and tight $O(\log n)$ distortion from $O(1)$-dimensional normed spaces. We also discuss dynamic monotone embeddings and present an $O(l \log l)$ upper bound and $\Omega(l)$ lower bound for the distortion. Finally, we discuss the conditions for an algorithm to be combined with our embedding, and illustrate some applications.

While we use monotone potential functions in~\cref{thm: reduction} as a systematic way to certify compatibility with particular online algorithms, our embedding framework may be applicable more broadly. It suffices to inspect an algorithm’s proof and verify that the argument remains valid under monotone distance updates.
In this way, future work may be able to directly plug our embeddings into their analyses.
We highlight some additional future directions below.

\begin{enumerate}[itemsep=0em]
    \item What is the tight bound for the distortion of online monotone embeddings into HSTs? Deterministically, our algorithm is optimal. For probabilistic embeddings, the only known lower bound is the $\Omega(\log n)$ offline lower bound. Our algorithms match this on constant-dimensional normed spaces, but leave a quadratic gap in the general case.
    \item Are there interesting results for online monotone embedding into metrics other than trees? We mainly focus on HSTs since many online problems have competitive algorithms on HSTs. However, embedding into other metrics is also of interest.
    \item Dynamic embedding with limitations on the type of recourse is a field with major potential for many online problems, of which we have only scratched the surface. Dynamic embeddings have been a main candidate for getting a $\polylog(k)$-competitive algorithm for the randomized $k$-server problem. Our $\Omega(l)$ lower bound imposes limitations on certain approaches. As another example, our \cref{thm: k-taxi equivalent} directly suggests pathways for the $k$-taxi problem.
    \item Are there similar characterizations for other online problems with monotone recourse? For example, can improved results be obtained for online matching with monotone recourse, where agents are only willing to change their partner if they prefer the new partner?
\end{enumerate}
\paragraph{Acknowledgements.} We thank Michael Mitzenmacher for his insightful comments on earlier versions of this paper, and the anonymous reviewers for their constructive feedback. YH is supported by NSF grant CNS-2107078. Part of the research was performed while YH was affiliated with the University of Oxford. CC is funded by the European Union (ERC, CCOO, 101165139). Views and opinions expressed are however those of the author(s) only and do not necessarily reflect those of the European Union or the European Research Council. Neither the European Union nor the granting authority can be held responsible for them.

\printbibliography

\newpage
\appendix
\renewcommand{\thetable}{\Alph{table}}
\section{Lower Bounds for Strict Embeddings}
\label{sec: lower bound without recourse}
We show lower bounds for online strict embeddings, even for $l=3$. 

In the following parts, we sometimes refer to the notion of ultrametric. An ultrametric is a metric space such that for all points $u, v, w$, \[d(u, v) \leq \max (d(u, w), d(v, w)).\] 
Note that all HSTs are ultrametrics and an ultrametric is a $1$-HST.

\begin{theorem}
\label{thm: online_lower}
    There exists an adversary with width $l=3$ against which any deterministic online embedding into
    HSTs has distortion $2^{n-2}$.
\end{theorem}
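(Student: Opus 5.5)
We construct an adaptive adversary on the real line that keeps at most three points alive. The argument rests on two facts: a strict embedding can never change distances after they are set, and an HST is an ultrametric. The adversary maintains an invariant at every step: there are two alive points $a,b$ with $d(a,b)$ small, while the fixed embedding already has $d_T(a,b)$ large relative to $d(a,b)$. Each round introduces one new point and then deletes one old point, so the width stays at $3$. The goal of each round is to double the ratio $d_T/d$ on the surviving close pair. After $n-2$ rounds this gives distortion $2^{n-2}$.

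\textbf{Round.} Let $a<b$ be alive with gap $g=d(a,b)$, and suppose $d_T(a,b)\ge D$. The adversary inserts the midpoint $c=(a+b)/2$. Once $c$ is embedded, its distances to $a$ and $b$ are fixed forever. By the ultrametric inequality,
\[
D\le d_T(a,b)\le \max\bigl(d_T(a,c),d_T(c,b)\bigr),
\]
so at least one of the pairs $(a,c)$ or $(c,b)$ has tree distance at least $D$. That pair has original distance $g/2$. The adversary deletes the endpoint not in this pair and keeps the pair as the new $(a,b)$. The new gap is $g/2$ and the tree distance is still at least $D$, so the ratio has doubled.

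Because the algorithm is deterministic, the adversary can simulate it and pick the deletion adaptively. Deletions cannot help the algorithm, since in the strict model the surviving distance is frozen.

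\textbf{Initialization and counting.} Start with points $0$ and $1$. Non-contractiveness gives $d_T(0,1)\ge 1$, so the initial ratio is at least $1$. Each of the subsequent $n-2$ arrivals is a round that halves the gap and preserves $d_T\ge 1$. After all $n$ points have arrived, the surviving pair $(a,b)$ satisfies $d(a,b)=2^{-(n-2)}$ and $d_T(a,b)\ge 1$, giving distortion at least $2^{n-2}$.

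The width is at most $3$ because the adversary deletes immediately after each insertion. We only count arrivals in $n$; if the length of the update sequence also counts deletions, the bound becomes $2^{\Omega(n)}$, which matches Theorem~\ref{thm: lower bound dynamic}.

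\textbf{Main subtlety.} The key point is the argument that deleting $c$'s partner cannot reduce the fixed distance between the survivors. This follows directly from the definition of a strict embedding, in which $d_t(u,v)$ equals the value set at arrival for all later $t$. One can also take care of non-contractiveness by rescaling, as in the proof of Theorem~\ref{thm: lower bound dynamic}.
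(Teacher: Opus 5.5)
Your proof is correct and is essentially the paper's argument: repeatedly insert the midpoint of the current close pair, use the ultrametric inequality to keep a half-length pair whose frozen tree distance is still at least $d_T(v_1,v_2)$, and delete the remaining point to keep the width at $3$, which gives ratio $2^{n-2}$ after $n-2$ insertions. The only cosmetic difference is that the paper does not assume non-contractiveness; it multiplies the expansion of the final pair by the contraction $1/d_T(v_1,v_2)$ of the initial pair, which is equivalent to your rescaling remark.
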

\begin{proof}
    An adversary reveals points on a line of length 1 as follows: initially, $v_1, v_2$ at positions 0 and 1, respectively. After placing $i$ points, the adversary maintains a pair $u, v$ where $d(u, v) \leq 2^{2-i}$ but $d_T(u, v) \geq d_T(v_1, v_2)$ for all $i \geq 2$. Initially, $u = v_1$ and $v = v_2$. The adversary selects $x$ at the median of $u, v$. Given that HST is an ultrametric, $\max(d_T(x, v), d_T(x, u)) \geq d_T(u, v)$. Assuming without loss of generality that $d_T(x, u) \geq d_T(x, v)$, we update $u, v$ to $x, u$ and continue. This results in a distortion of at least:
    \[\frac{d_T(v_1, v_2)} {2^{2-n}} \frac{1}{d_T(v_1, v_2)} \geq 2^{n-2}.\]

    Note that the adversary only needs to keep $u, v, x$ at any time, with a width of $3$.
\end{proof}

The following lower bound construction is from \cite{Indyk10online}, and we show that the lower bound preserves even for $l=3$.
\begin{theorem}
\label{thm: online_lower_random}
    There exists a non-adaptive adversary with width $l=3$ against which any probabilistic online embedding into HSTs has distortion $\Omega(n)$.
\end{theorem}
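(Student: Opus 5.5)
We would adapt the lower bound of \cite{Indyk10online} to a randomized, non-adaptive setting on the line with width $3$. The plan mirrors the deterministic proof of \cref{thm: online_lower}, but replaces the adaptive choice of which half to recurse into by a uniformly random choice made in advance by the adversary. By Yao's minimax principle, it suffices to fix a distribution over update sequences and a distinguished pair $(u,v)$ at a final time, and to show that every deterministic strict online embedding $d_T$ (without loss of generality non-contractive) has $\E[d_T(u,v)/d(u,v)] = \Omega(n)$.

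\textbf{Construction.} Start with $v_1=0$ and $v_2=1$, and maintain a current interval $[a,b]$ of length $2^{-i}$ whose two endpoints are alive. At each step the adversary inserts the midpoint $x$, flips an independent fair coin to choose one of the halves $[a,x]$ or $[x,b]$, and deletes the endpoint not in the chosen half. Only $a$, $b$ and $x$ are ever alive simultaneously, so the width is $3$. The randomness of the input is independent of the algorithm, which makes the adversary non-adaptive. The sequence is run for $k=\Theta(n)$ halvings. Because the embedding is strict, a distance $d_T(a,b)$ is fixed at the moment the later of $a,b$ arrives and never changes afterwards.

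\textbf{Key inequality.} Let $D_i = d_T(a_i,b_i)$ for the current interval at step $i$, and let $x$ be the midpoint inserted at that step. Since the HST is an ultrametric, $\max(d_T(a,x),d_T(x,b)) \ge D_i$. The coin is independent of the embedding's choice at $x$, so with probability at least $1/2$ the chosen half is one with $D_{i+1} \ge D_i$. Normalizing by length, define $R_i = D_i 2^{i}$, the distortion of the current pair. Each half also satisfies non-contractiveness, $D_{i+1}\ge 2^{-(i+1)}$. This gives
\[
\E[R_{i+1}] \ge \E\bigl[\tfrac12\cdot 2R_i + \tfrac12\cdot 1\bigr] = \E[R_i]+\tfrac12 ,
\]
because in the good case $R_{i+1}\ge 2R_i$ and in the bad case $R_{i+1}\ge 1$. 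Iterating, $\E[R_k]\ge k/2$. Choosing the distinguished pair to be the final interval yields $\E[d_T(u,v)]/d(u,v) = \Omega(n)$ at the final time.

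\textbf{Main obstacle.} The expected distortion is defined per fixed pair $(u,v)$ and time $t$, whereas here the pair is itself random. We would handle this exactly as in the proof of \cref{thm: lower bound dynamic rand}: apply Yao with a distribution over (sequence, pair, time) triples. If instead one wants a fixed sequence with a random pair, one fixes the full set of points but must ensure the sequence is non-adaptive. The cleanest route is to let the adversary's random coins define the input distribution, and to note that the distortion bound for a probabilistic embedding must hold for every sequence in its support. Averaging over that support then contradicts any distortion $o(n)$.
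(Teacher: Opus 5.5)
Your proposal is correct and is essentially the paper's proof. It uses the same random-halving sequence on the line (insert the midpoint, keep a uniformly random half, delete the other endpoint) and Yao's principle applied to the final pair. It also relies on the same ultrametric fact: with probability at least $1/2$ the kept pair's tree distance does not drop.

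The only difference is bookkeeping. You obtain $\E[R_{i+1}\mid \text{history}]\ge R_i+\tfrac12$ by also using non-contractiveness for the ``bad'' half. The paper instead bounds by $2^{-j}$ the probability that a distance fixed $j$ steps earlier survives to the end, and then sums over starting levels. Your drift version is tidier and avoids the paper's somewhat loose summation over non-disjoint events.
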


\begin{proof}
    We use Yao's min-max principle and specify a distribution over update sequences and a pair of points on which the expansion is measured, and show that no deterministic algorithm achieves expected distortion better than $\Omega(n)$. Our sequence will contain $n$ points, and since there are departures, the length will be longer than $n$.
    
    The adversary reveals points on a line as follows. Initially $v_1=0, v_2=1, v_3=1/2$. After that $v_{i+1} = v_{i} + b_i 2^{-i+2}$ where $b_i \in \{-1, 1\}$ is chosen uniformly at random. 
    
    It is easy to see that for each $i \geq 3$, there exist points $l_i, r_i \in X$ such that $l_i = v_i - 2^{-i+2}$, $r_i = v_i + 2^{-i+2}$, and $\{v_1, \dots, v_i\} \cap [l_i, r_i] = \{l_i, v_i, r_i\}$. Moreover, for each $i \in \{3, \dots, n-1\}$, there uniquely exists $y_i \in \{l_i, r_i\}$ such that $\{v_{i+1}, \dots, v_n\} \subset [\min\{v_i, y_i\}, \max\{v_i, y_i\}]$. After placing $v_i$, remove anything other than $v_i$ and $y_i$. Thus, the width is $3$. By our construction, $l_i, r_i \in L_i$.

    We use $t_i$ to denote the time the adversary places $v_i$. Since HSTs are ultrametrics, there exists $z_i \in \{l_i, r_i\}$ such that $d_{t_i}(v_i, z_i) \geq d_{t_{i}}(l_i, r_i)$. We denote by $D_i$ the event that $d_{t_i}(v_i, y_i) \ge d_{t_{i-1}}(v_{i-1}, y_{i-1})$. We thus have 
    \[\P\left(d_{t_i}(v_i, y_i) \ge d_{t_{i-1}}(v_{i-1}, y_{i-1}) \mid \bigwedge_{j < i} d_{t_j}(v_j, y_j) \ge d_{t_{j-1}}(v_{j-1}, y_{j-1})\right) \ge 1/2\] as we pick $z_i$ with probability $1/2$. Applying induction gives $\P(d_{t_i}(v_i, y_i) \ge d_{t_{i-j}}(v_{i-j}, y_{i-j})) \ge 2^{-j}$

    Thus, suppose the contraction is $\lambda_c$,
    \begin{align*}
    \E[d_{t_{n-1}}(v_{n-1}, y_{n-1})] &\ge \sum_{i=3}^{n-1} d_{t_{n-1}}(v_{n-1}, y_{n-1}) 2^{i-n+1} \\&\ge \sum_{i=3}^{n-1} \frac{2^{-i+2}}{\lambda_c} 2^{i-n+1} = \frac {\Omega(n2^{-n})} {\lambda_c} = \frac {\Omega(n) d(v_{n-1}, y_{n-1})} {\lambda_c}
    \end{align*}
    Therefore, the distortion is $\Omega(n)$, even when $l=3$.
\end{proof}

\section{From Partitions to HST Embeddings}
\label{appendix: tree building}
In this section, we discuss how to convert partitions into an HST embedding and provide the previously omitted proofs.

\begin{definition}[Induced HST Embedding]
    The \emph{refinement} of two partitions $C_1, C_2$ is defined as the partition 
    \[
        \br{S_1 \cap S_2 \mid S_1 \in C_1, S_2 \in C_2} \setminus \br{\varnothing}.
    \] 
    
    For a set of scales $\br{s_j}$ and partitions $\br{C^j}$ for every $j \in \mathbb{Z}$, we define the \emph{induced HST embedding} as follows: First, we refine each partition at level $j$ with all partitions at levels above, turning the partitions into a nested hierarchy. 
    
    We restrict our attention to levels at or above $Y$, the largest level for which $C^Y$ consists entirely of singleton sets. If no such $Y$ exists, the induced embedding is undefined. For each level $j$ and every cluster $S \in C^j$, we create a vertex $v_S$ with weight $\varphi(v_S) = s_j$. The vertices at level $Y$ form the leaves and thus have $\varphi(v_S) = 0$. Each vertex $v_S$ at level $j$ is made a child of the vertex corresponding to the cluster at level $j+1$ that encompasses $S$. Finally, each point $u$ is mapped to the vertex corresponding to the singleton cluster $\br{u}$ at level $Y$.
\end{definition}

We are now ready to prove~\cref{lemma: hierarchy}.

\Construction*
\begin{proof}
    Consider the set of scales given by $s_j = 2^j$. For every time $t$, we construct the induced HST embedding based on scales $\br{s_j}$ and partitions $\{C^j_t\}$. Since the partitions at each level have monotonicity, the resulting refined partitions are also monotone. Recall that points in the same cluster belong to the same subtree in the HST, implying distances never increase, as the least common ancestor of two points never moves up the hierarchy.

    We now bound the distortion. Fix two points $u, w$. Let $A$ be the largest integer $j$ such that $2^j < d(u,w)$ or $\delta_t^{(j)} \neq 0$, and define $B = \lceil \log_2 d(u,w) \rceil$. If no such $A$ exists, then $\sum_j \delta_t^{(j)}$ is infinite, making the claim trivial. Let $R^j(u, w)$ denote the event that $C^j(u) = C^j(w)$, $D^j(u, w) = \P(\neg R^j(u, w) \mid R^{j+1}(u,w))$ and $E^j(u, w) = \E[d_T(u, w) \mid R^{j+1}(u, w)]$. If $R^{j+1}(u,w)$ is impossible, set $D^j(u,w)=E^j(u,w)=0$.

    Note $R^{A+1}(u,w)$ always holds, so $\E[d_T(u,w)] = E^A(u,w)$. Moreover, $\neg R^{B-1}(u,w)$ always holds, so $d_T(u,w) \ge d(u,w)$, ensuring the embedding is non-contractive.

    Let $l = \lfloor \log_2(d(u,w)/\varepsilon) \rfloor$ be the highest level for which $d(u,w) \ge \varepsilon 2^l$. Note $l - B \le \lceil \log(1/\varepsilon)\rceil$. Observe that $D^j(u,w)$ is the probability that $C^j$ splits $u$ and $w$, given they remain together at all levels above $j$. Thus, despite the refinement, we have $D^j(u,w)\le \delta_t^{(j)}$ for all $j \ge B$. We have
    \begin{align*}
        E^j(u,w) 
        &= D^j(u,w)\cdot 2^j + (1 - D^j(u,w))E^{j-1}(u,w) \\
        &\le \frac{\delta_t^{(j)} \cdot d(u,w)}{2^j}\cdot 2^j + E^{j-1}(u,w) \\
        &= \delta_t^{(j)} \cdot d(u,w) + E^{j-1}(u,w).
    \end{align*}

    For levels $j > l$, we have the stronger bound $D^j(u,w) \le \gamma \delta_t^{(j)}$, so similarly:
    \[
        E^j(u,w) \le \gamma \delta_t^{(j)} \cdot d(u,w) + E^{j-1}(u,w).
    \]

    Combining these results, we have:
    \begin{align*}
        \E[d_T(u,w)] 
        &= E^A(u,w) 
        \le \sum_{j=l+1}^{A}\gamma\delta_t^{(j)} \cdot d(u,w) + E^l(u,w) \\
        &\le \left(\gamma \sum_{j\in\mathbb{Z}}\delta_t^{(j)} + \sum_{j=B}^{l}\delta_t^{(j)}\right) d(u,w) + E^{B-1}(u,w) \\
        &\le \left(\gamma\sum_{j\in\mathbb{Z}}\delta_t^{(j)} + \log\frac{1}{\varepsilon}\max_{j\in\mathbb{Z}}\delta_t^{(j)} + 1\right)d(u,w).
    \end{align*}

    Thus, the claimed distortion bound follows.
\end{proof}

Finally, we provide the proof of \cref{lemma: relevant scales}, which follows \cite{relevantscales}. We use the following lemma, which says that for a set with size $n$, the number of relevant scales from the sum of subsets with some offset is bounded by $2n$. We write $[n] = \{0, \ldots, n-1\}$.
\begin{lemma}
\label{lemma: count}
    Let $\seq{a} = a_0, \ldots, a_{n-1}$ be $n$ real numbers, then $|S(\seq a)| \le 2n$, where \[S(\seq{a})=\left\{k \in \Z \mid \exists \varnothing\subsetneq I\subseteq [n] , 0 \le x < a_0: x+\sum_{i\in I} a_i \in [2^{k-1}, 2^k)\right\}.\]
\end{lemma}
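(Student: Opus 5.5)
The plan is to cover every attainable value $x+\sum_{i\in I}a_i$ by a union of $n$ real intervals, built one at a time so that each new interval adds at most two new dyadic scales $[2^{k-1},2^k)$. This gives $|S(\seq a)|\le 2n$. Throughout I assume $a_i\ge 0$. This is the situation in which the lemma is used, since the $a_i$ will be distances. Negative entries would need a separate (or vacuous) treatment, and I would state the nonnegativity assumption explicitly.

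\textbf{Setup.} Sort the multiset $\{a_0,\dots,a_{n-1}\}$ as $b_1\le b_2\le\dots\le b_n$. Put $P_j=b_1+\dots+b_j$ and $Q_j=P_j+a_0$, with $Q_0=a_0$. Fix a nonempty $I$ and let $b_j$ be the largest element indexed by $I$. Then
\[
b_j \;\le\; x+\sum_{i\in I}a_i \;<\; a_0+P_j=Q_j .
\]
The lower bound holds because $x\ge 0$ and all terms are nonnegative. The upper bound holds because $x<a_0$ and the chosen elements are at most the $j$ smallest. Hence every attained value lies in $U_n=\bigcup_{j=1}^n J_j$, where $J_j=[b_j,Q_j)$. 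It therefore suffices to show that the set of scales $k$ with $[2^{k-1},2^k)\cap U_n\neq\varnothing$ has size at most $2n$.

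\textbf{Induction.} Let $U_j=J_1\cup\dots\cup J_j$. I claim that passing from $U_{j-1}$ to $U_j$ meets at most two new dyadic scales. There are two cases.
\begin{itemize}
\item If $b_j\le Q_{j-1}$, the new part of $J_j$ lies inside $[Q_{j-1},Q_j)$. Since $b_j\le Q_{j-1}$, we have $Q_j=Q_{j-1}+b_j\le 2Q_{j-1}$. A half-open interval $[y,2y)$ meets at most two dyadic scales.
\item If $b_j>Q_{j-1}$, then $Q_j=Q_{j-1}+b_j<2b_j$, so $J_j\subseteq[b_j,2b_j)$. Again this meets at most two scales.
\end{itemize}
The base case $j=1$ falls under one of these two cases with $Q_0=a_0$ and $U_0=\varnothing$. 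Summing over $j$ gives at most $2n$ scales.

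\textbf{Main obstacle.} The delicate point is the offset $x\in[0,a_0)$. It contributes an extra additive $a_0$ that is not tied to the sorted order, and it also threatens the lower bound $b_j\le\text{value}$. Folding $a_0$ into $Q_j$ from the start (setting $Q_0=a_0$) handles the upper bound. Nonnegativity of $x$ handles the lower bound.

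A boundary case needs checking: a value equal to $0$ belongs to no scale, so it is harmless. I would also verify the half-open endpoint bookkeeping, namely that $[y,2y)$ with $y>0$ really meets at most two sets $[2^{k-1},2^k)$; it meets exactly one when $y$ is a power of two.
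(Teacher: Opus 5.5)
\textbf{There is a genuine gap at the base step $j=1$, and it cannot be patched without an extra hypothesis.}

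For $j\ge 2$, your Case~1 step tacitly uses $[b_j,Q_{j-1})\subseteq J_{j-1}\subseteq U_{j-1}$. This holds because $b_{j-1}\le b_j$. For $j=1$, however, $U_0=\varnothing$, so all of $J_1=[b_1,\,a_0+b_1)$ is new. This interval lies in $[Q_0,Q_1)=[a_0,\,a_0+b_1)$ only if $b_1=a_0$. Case~2 cannot rescue you at $j=1$, since $b_1=\min_i a_i\le a_0$.

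When $b_1\ll a_0$, the interval $J_1$ meets about $\log_2(a_0/b_1)$ scales. These scales are genuinely attained: take $I$ to be the index of $b_1$ alone and let $x$ range over $[0,a_0)$. Concretely, take $n=2$, $a_0=1$, $a_1=2^{-10}$. The choice $I=\{1\}$ alone produces every value in $[2^{-10},\,1+2^{-10})$. Hence $S(\seq a)\supseteq\{-9,\dots,1\}$, so $|S(\seq a)|\ge 11>4=2n$.

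So the statement as literally written is false even for nonnegative reals, unless $a_0$ is the smallest entry. Any argument that treats $a_0$ as an arbitrary entry of the sorted list must fail somewhere, and yours fails here. Nonnegativity is also genuinely needed, not just convenient. With $a_0=1$, $a_1=-0.9$, the set $S(\seq a)$ is infinite.

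\textbf{The fix.} The paper supplies the missing hypothesis by ``assuming without loss of generality'' that $a_0\le a_1\le\dots\le a_{n-1}$. This is not truly without loss of generality, since $a_0$ also bounds the offset. It is harmless in the application, where the MST weights are sorted and one takes $x=0$.

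Once you add $a_0=\min_i a_i>0$, you get $b_1=a_0$ and $J_1=[a_0,2a_0)$, which meets at most two scales. The rest of your sweep then goes through as written. One small repair: take $j$ to be the \emph{last} sorted position occupied by $I$, so that ties do not break $\sum_{i\in I}a_i\le P_j$.

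With these repairs, your argument is a valid and more direct alternative to the paper's. The paper inducts on $n$: it splits at the first index $j$ with $a_j>a_0+\sum_{i<j}a_i$ and recurses on the suffix, with the prefix sum absorbed into the new offset. You instead cover all attainable values by the intervals $[b_j,Q_j)$ and charge at most two new scales to each $j$ in a single pass.
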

\begin{proof}
    We assume without loss of generality that $a_i \le a_{i+1}$ for all $i$. We induct on the number of elements $n$. For $n = 1$ the number of relevant scales is $2$. We now show inductive steps. If for all $j$, $a_j \le a_0 + \sum_{i < j} a_i$, then $a_j \le 2^j a_0$ so $\sum_{i=0}^{n-1} a_i \le (2^n-1)a_0$. Thus, for $I\subseteq [n] \neq \varnothing, 0 \le x < a_0$, we have $x+\sum_{i\in I} a_i \in [d_0, 2^n d_0)$ and $|S(\seq a)|\le n+1$. 
    
    Otherwise, suppose $j$ is the smallest index such that $a_j > a_0 +\sum_{i < j} a_i$. 
    Let
    \begin{align*}
        S_1&=\left\{k \in \Z \mid \exists \varnothing\subsetneq I\subseteq [j] , 0 \le x < a_0: x+\sum_{i\in I} a_i \in [2^{k-1}, 2^k)\right\}\\
        S_2&=\left\{k \in \Z \mid \exists I\subseteq [n], I \setminus [j] \neq \varnothing, 0 \le x < a_0: x+\sum_{i\in I} a_i \in [2^{k-1}, 2^k)\right\}
    \end{align*}
    be the relevant scales depending on whether an element larger than $a_j$ is used.
    Clearly $S(\seq a) = S_1 \cup S_2$. Let $\seq b = a_0, \ldots, a_{j-1}$ and $\seq c = a_j, \ldots, a_{n-1}$. Note that $S_1 \subseteq S(\seq b)$ and $S_2 \subseteq S(\seq c)$, where the first is by definition and the second is by $x+\sum_{i\in I} a_i = x + \sum_{i \in I\cap[j]} a_i + \sum_{i \in I \setminus [j]}a_i$ and take $x' = x + \sum_{i \in I\cap[j]} a_i < a_j$. By inductive hypothesis, $|S_1| \le |S(\seq b)| \le 2j$ and $|S_2| \leq |S(\seq c)|\le 2(n-j)$, so $|S(\seq a)| \leq 2n$.
\end{proof}

\begin{proof}[Proof of \cref{lemma: relevant scales}]
    We first prove the case without $\varepsilon$, i.e.,
    \[
    |\{j \in \mathbb{Z} \mid \exists u,v \in V: d(u,v)\in[2^{j-1},2^j)\}| = O(n).
    \]
    
    Treat the metric as a complete graph with edge weights being the distance between vertices, and build a minimum spanning tree on this graph. Let $d_1\le \ldots\le d_{n-1}$ denote the weight of the edges in the spanning tree. 
    
    For all points $u,v$, let $e_0\le \ldots\le e_{l-1}$ be the weight of edges on the path in the tree from $u$ to $v$, then for all $k$, $d(u, v) \in [2^{k-1}, 2^k) \implies \exists I \subseteq [l]: \sum_{j \in I} e_j \in [2^{k-1}, 2^k)$. To see this, note $e_{l-1} \le d(u, v) \le \sum_{j=0}^{l-1} e_j$ where the first inequality is by the property of a minimum spanning tree and the second is by triangle inequality. On the other hand, the set $\{s_i\}$ where $s_i = \sum_{j=l-i}^{l-1} e_j$ the sum of largest $i$ elements covers all interval of form $[2^{k-1}, 2^k)$ in $[e_{l-1}, \sum_{j=0}^{l-1} e_j]$ since $e_j \le e_{j+1} \le s_{j+1}$.

    Thus, $\exists u, v: d(u, v) \in [2^{k-1}, 2^k) \implies \exists I \subseteq [n-1]: \sum_{i \in I} d_i \in [2^{k-1}, 2^k)$. Applying \cref{lemma: count} gives the desired claim.

    When $\varepsilon < 1$, note that every relevant scale without $\varepsilon$ implies $O(\log (1/\varepsilon))$ scales to be relevant when $\varepsilon < 1$, which completes the proof.
\end{proof}

\section{Bounded Smooth Partitions}
\label{section: appendix general dynamic}
Here we include the omitted proof for \cref{clm: bartal padding}.
\begin{proof}[Proof of \cref{clm: bartal padding}]
    Consider a fixed time, and fixed points $u$ and $w$ in the current alive set. Let $B_1,B_2,\dots$ be the components of the current partition and denote by $(c_j, r_j, t_j)$ the triple associated with $B_j$, sorted in increasing order of creation time $t_j$. Let us define the following events.
    
    \begin{enumerate}[itemsep=0em]
        \item $X_j$ is the event that \emph{exactly one} of $u, w$ is within distance $r_j$ of $c_j$.
        \item $N_j$ is the event that \emph{none} of $u, w$ is within distance $r_j$ of $c_j$, and $A_j = \bigwedge_{l < j} N_l$ the event that $u, w$ are still \emph{available} before $j$th component.
        \item $F_j=X_j\land A_j$ is the event that $B_j$ is the \emph{first} cluster that exactly one of $u, w$ belongs to.
    \end{enumerate}
    
    Let $k' \le l$ be the largest number such that having $k'$ alive components is not an impossible event. We want to bound $\P(C_t(u) \neq C_t(w)) = \sum_{j=1}^{k'} \P(F_j)$. For each $j \le k'$, let random variable $\seq {S_j} \in (V_t)^j$ denote the sequence of centers of alive components $B_1, \ldots, B_j$. Note here we used $V_t$, the set of points seen so far, instead of $L_t$ the alive points, since the center of a component might not be alive. We perform a reverse induction to show that for all $j$ and all $\seq {s_{j-1}}$ such that $\P(A_j \wedge \seq{S_{j-1}} = \seq {s_{j-1}}) > 0$, 
    \begin{align*}
        \sum_{i \ge j}\P\l F_i \mid A_j \wedge \seq{S_{j-1}} = \seq {s_{j-1}}\r \leq \l\sum_{i \ge j}\frac {1} {\chi_i^2-1} + 1\r\frac {32d(u, w) \log \chi_l} s.
    \end{align*} 

    We write $E_j = \l\sum_{i \ge j}\frac {1} {\chi_i^2-1} + 1\r\frac {32d(u, w) \log \chi_l} s$ for brevity.
    Note that for $j=k'+1$, the above holds trivially. Suppose it holds for $j+1$. Fix an $\seq {s_{j-1}}$. We call a point $v \in V_t$ valid if $\P(\seq {S_j} = \seq {s_{j-1}}\cdot v \mid H_j \wedge \seq {S_{j-1}} = \seq {s_{j-1}}) > 0$, where $\seq{s_{j-1}} \cdot v$ is the sequence concatenating $v$ to $\seq {s_{j-1}}$.

    Fix an arbitrary valid $v$. Without loss of generality assume $d(v, u) \le d(v, w)$. Let $\seq {s_j} = \seq {s_{j-1}}v$ and
    $\tilde d(u, w) = \min (\max(d(u, w), s/16), s/8)$ truncating the distance
    to the domain of $p$. It is worth noting that $\tilde d(v, w) -
    \tilde d(v, u) \leq d(u, w)$.

    We have
\begin{align}
  \P\l X_j \mid A_j \wedge \seq{S_{j}} = \seq{s_{j}}\r \nonumber
    &= \int_{\tilde d(v, u)}^{\tilde d(v, w)} p(z) dz \\
    &\leq \frac{\chi_j^{2}}{1-\chi_j^{-2}}
          \cdot \frac{32(\tilde d(v, w) - \tilde d(v, u))\log\chi_j}{s}
          e^{-\frac{32\tilde d(v, u)\log\chi_j}{s}} \nonumber\\
    &\leq \frac{\chi_j^{2}}{1-\chi_j^{-2}}
          \cdot \frac{32d(u, w)\log\chi_j}{s}
          e^{-\frac{32\tilde d(v, u)\log\chi_j}{s}}, \label{eq: single split}\\
    \P\l N_j \mid A_j \wedge \seq{S_{j}} = \seq{s_{j}}\r&=\int_{s/16}^{\tilde d(v, u)} p(z) dz
    = \frac{\chi_j^{2}}{1-\chi_j^{-2}}
       \left(
         \chi_j^{-2} - e^{-\frac{32\tilde d(v, u)\log\chi_j}{s}}
       \right).\nonumber
\end{align}

We now have
\begin{align*}
&\quad\;\sum_{i \ge j}\P\l F_i \mid  A_j \wedge \seq{S_{j}} = \seq {s_{j}}\r \\
&= \P\l X_j\mid A_j \wedge \seq{S_{j}} = \seq{s_{j}}\r  + \P\l N_j\mid A_j \wedge \seq{S_{j}} = \seq{s_{j}}\r
   \sum_{i \ge j+1}\P\l F_i\mid A_{j+1} \wedge \seq{S_{j}} = \seq{s_{j}}\r \\
&\leq 
  \frac{\chi_j^{2}}{1-\chi_j^{-2}}
     \cdot \frac{32d(u, w)\log\chi_j}{s}
     e^{-\frac{32\tilde d(v, u)\log\chi_j}{s}} + \frac{\chi_j^{2}}{1-\chi_j^{-2}}
     \left(
       \chi_j^{-2} - e^{-\frac{32\tilde d(v, u)\log\chi_j}{s}}
     \right) E_{j+1} \\
&= \frac{\chi_j^{2}}{1-\chi_j^{-2}}
    \left[
      \frac{32d(u, w)\log\chi_j}{s}
      e^{-\frac{32\tilde d(v, u)\log\chi_j}{s}}
      - e^{-\frac{32\tilde d(v, u)\log\chi_j}{s}} E_{j+1}
      + \chi_j^{-2} E_{j+1}
    \right] \\
&= \frac{\chi_j^{2}}{1-\chi_j^{-2}}
     \left(
       e^{-\frac{32\tilde d(v, u)\log\chi_j}{s}}
       \left[
         \frac{32d(u, w)\log\chi_j}{s}
         - E_{j+1}
       \right]
       + \chi_j^{-2} E_{j+1}
     \right)
\end{align*}
Since $e^{-\frac{32\tilde d(v, u) \log \chi_j}{s}} \geq \chi_j^{-2}$ and $E_{j+1} \geq \frac{32d(u, w)\log\chi_j}{s}$,
\begin{align*}
  \sum_{i \ge j}\P\l F_i \mid  A_j \wedge \seq{S_{j}} = \seq {s_{j}}\r
  &\leq \frac{\chi_j^{2}}{1-\chi_j^{-2}}
       \left(
         \chi_j^{-2}
         \left[
           \frac{32d(u, w)\log\chi_j}{s} - E_{j+1}
         \right]
         + E_{j+1}
       \right) \\
  &= E_{j+1} + \frac{1}{\chi_j^{2}-1}
      \frac{32d(u, w)\log\chi_j}{s} \le E_j
\end{align*}

This holds for all valid $v$, and in the case that it is impossible to have the $j$th component, the probability is $0$. Therefore, we have
\begin{align*}
    \sum_{i \ge j}\P( F_i \mid  A_j \wedge \seq{S_{j-1}} = \seq {s_{j-1}}) \leq E_j, &&
\sum_{j\ge1} F_j \le E_1 = \l\sum_{i \ge 1}\frac {1} {\chi_i^2-1} + 1\r\frac {32d(u, w) \log \chi_l} s.
\end{align*}
    Recall $\chi_j = 2j$, so $\sum_i 1/(\chi_i^2 - 1) \le 2\sum \chi_i^{-2} = O(1)$ and $\log \chi_l = O(\log l)$. We conclude $\P(C_t(u) \ne C_t(w)) \le O(\log l) \cdot \frac{d(u, w)}{s}$.
\end{proof}

\section{Normed Spaces}
\label{appendix: normed}
Here we provide embedding results for normed spaces. At a high level, the argument closely follows the approach for general metric spaces, with the primary difference being the construction of the base partition $C$. We first construct an $O(1)$-smooth probabilistic partition for the real line. We then extend it to $\ell_\infty^D$ by independently constructing partitions in each dimension and taking their refinement, yielding an $O(D)$-smooth partition. Finally, we generalize the results to arbitrary normed spaces in $\mathbb{R}^D$, utilizing the equivalence of norms up to a factor of $D$.

\paragraph{Smooth Probabilistic Partitions on the Line.}
Without loss of generality, assume each point $v_t$ has a coordinate $x_{v_t}$ on the real line. Each component corresponds to an interval. We fix a global set of intervals at time $0$, denoted as $C$, and define $C_t$ as the subset of intervals containing points in $V_t$ at time $t$.

Let $r = s/3$. For each integer $k$, define intervals $I_k = [k r, (k+1) r)$ and select random points (cutting points) $z_k$ uniformly in the interval $[k r, (k+1/2) r]$. The intervals between consecutive cutting points form our components, namely, each component is $[z_k, z_{k+1})$. A point $u$ belongs to the component containing its coordinate $x_u$.

\begin{claim}
\label{clm: padding}
If $d(u,w) \le s$, then for all times $t$,
\[
\P(C_t(u) \neq C_t(w)) \leq \frac{6\, d(u, w)}{s}.
\]
\end{claim}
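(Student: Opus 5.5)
The plan is to reduce the event $C_t(u)\neq C_t(w)$ to the event that some cutting point $z_k$ lies between the coordinates of $u$ and $w$, and then to bound the latter by a union bound over the few relevant $k$.

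First, note that the components are the fixed global intervals $[z_k,z_{k+1})$, chosen at time $0$ and never changed. So $C_t(u)$ is simply the interval containing $x_u$, and this does not depend on $t$ or on which other points have arrived or departed. Hence it suffices to bound, for fixed $u,w$ with $a=\min(x_u,x_w)$ and $b=\max(x_u,x_w)$, the probability that some $z_k\in(a,b]$. If no cutting point lies in $(a,b]$, both points fall into the same interval $[z_k,z_{k+1})$. The consecutive cutting points are well ordered, since $z_k\le (k+1/2)r<(k+1)r\le z_{k+1}$, so the components really do form a partition of the line.

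Second, $z_k$ is uniform on an interval of length $r/2$, so for each $k$
\[
\P(z_k\in(a,b])\le \frac{|(a,b]\cap[kr,(k+1/2)r]|}{r/2}\le \frac{2(b-a)}{r}=\frac{6\,d(u,w)}{s}.
\]
This alone does not give the claim, because several $k$ might contribute. The refinement is to sum the overlaps instead of the lengths. The supports $[kr,(k+1/2)r]$ are disjoint for different $k$, so
\[
\sum_k \P(z_k\in(a,b]) \le \frac{2}{r}\sum_k |(a,b]\cap[kr,(k+1/2)r]| \le \frac{2}{r}(b-a) = \frac{6\,d(u,w)}{s}.
\]
A union bound over $k$ then gives exactly the stated bound, with no case analysis needed.

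The only subtle point is making sure that the union bound uses the disjointness of the supports, rather than a per-$k$ bound multiplied by the number of intervals touched. Once that is in place, the hypothesis $d(u,w)\le s$ is not even needed for the inequality; it just matches the form of the smoothness definition. The bound $2/r=6/s$ gives the constant $6$ directly.
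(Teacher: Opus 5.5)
Your proof is correct, and it handles the possibility of several cutting points differently from the paper. The paper first dismisses $d(u,w)\ge s/6$ as trivial, since the right-hand side is then at least $1$. With $d(u,w)<s/6=r/2$, it places $u,w$ either in the same base interval $I_k$ or in adjacent ones $I_{k-1},I_k$. It then computes the separation probability as a single overlap ratio for the one cutting point $z_k$ that can lie between them. This implicitly uses $d(u,w)<r/2$ to rule out $z_{k-1}$ in the adjacent case.

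You instead apply a union bound over all $k$ and use that the supports $[kr,(k+1/2)r]$ are pairwise disjoint, so the overlaps with $(a,b]$ sum to at most $b-a$. This removes the case split and the geometric argument that only one cut is relevant, and it holds uniformly for every pair, with no use of $d(u,w)\le s$. The paper's route gives an exact expression for the probability in the nontrivial regime, but nothing downstream uses that precision. Both arguments give the same constant $2/r=6/s$.
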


\begin{proof}
If $d(u, w) \ge s/6$, the claim trivially holds as the probability is at most $1$. Assume otherwise and without loss of generality that $x_u \le x_w$. 

If $u,w$ belong to the same interval $I_k$, then by construction,
\[
\P(C_t(u) \neq C_t(w)) 
= \frac{|(x_u, x_w] \cap [k r, (k+1/2) r]|}{r/2}
\leq \frac{6\, d(u,w)}{s}.
\]

If instead $u \in I_{k-1}$ and $w \in I_k$, we similarly have:
\[
\P(C_t(u) \neq C_t(w)) 
= \frac{|[k r, x_w) \cap [k r, (k+1/2) r]|}{r/2}
\leq \frac{6\, d(u,w)}{s}. \qedhere
\]
\end{proof}

\begin{theorem}
\label{thm: line partition}
For every $n \in \mathbb{N}$, $s > 0$, and $\varepsilon \le n^{-1}$, given knowledge of $n$, there is a probabilistic online $s$-bounded monotone partition of up to $n$ points on the line that is $(O(1), \varepsilon, O(n\varepsilon))$-smooth at every time $t$. Moreover, if no pair of points in $V_t$ has distance within $[\varepsilon s, s]$, the partition is $0$-smooth at all times.
\end{theorem}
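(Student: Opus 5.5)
I will follow the framework of \cref{sec: algorithm outline}, taking as base components the random intervals $[z_k,z_{k+1})$ defined above, and reuse the single designated merging component rule from the proof of \cref{thm: incremental partition}.

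\textbf{Step 1: base partition.} Each component $[z_k,z_{k+1})$ has length less than $2r=2s/3$. Points are assigned to fixed intervals chosen at time $0$, so $C_t$ never splits a cluster. By \cref{clm: padding}, $C_t$ already satisfies \eqref{eq:smooth} with $\delta=6$.

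\textbf{Step 2: merging rule.} Whenever two points $u,w\in V_t$ with $|x_u-x_w|\le \varepsilon s$ lie in different intervals, some cutting point $z_k$ lies in $(x_u,x_w]$. If no designated component exists yet, I declare the interval ending at $z_k$ (equivalently, the cut $z_k$) designated and merge the two sides. Afterwards, a merge is allowed only if it involves the designated component. On the line, the designated group can only absorb its two neighbouring intervals, so every cluster has diameter at most $3\cdot(2s/3)$. That is too big, so I will tighten this: take $r=s/6$, or allow only one merge per side. With $r=s/6$ the constant in \cref{clm: padding} becomes $12$, which is still $O(1)$. Monotonicity holds because merges only union clusters.

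\textbf{Step 3: bounding $\gamma$.} A close pair $u,w$ stays separated only if (i) some cut $z_k$ lies in $(x_u,x_w]$, and (ii) a different cut $z_{k'}$ was designated earlier. Event (ii) requires $z_{k'}$ to land in the $\varepsilon s$-neighbourhood gaps of one of at most $n$ consecutive close pairs. Only $O(1)$ cuts can separate any given pair. Since the $z_k$ are independent, I get
\[
\P(\text{(i)}\wedge\text{(ii)})\le \frac{O(1)\,d(u,w)}{s}\cdot O(n\varepsilon),
\]
using that the density of each $z_k$ is $2/r=O(1/s)$. Here I sum over at most $n-1$ adjacent pairs in sorted order rather than over all pairs: adjacent pairs suffice, because any close pair that is cut contains an adjacent close pair that is cut by the same cut. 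Summing over the relevant $z_k$ for pair $u,w$, of which there are $O(1)$, gives $\gamma=O(n\varepsilon)$.

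\textbf{Step 4: $0$-smoothness.} If no pair in $V_t$ has distance in $[\varepsilon s,s]$, the points form groups of diameter at most $n\varepsilon s\le s$ (using $\varepsilon\le n^{-1}$), with gaps larger than $s$ between groups. The real difficulty is that a cut may still fall inside a group. I expect this step to be the main obstacle. The plan is to argue that the first such cut becomes designated and merges the group, and that two different groups need two different cuts. For a deterministic $0$-smooth conclusion, I would instead let every merge of two components closer than $\varepsilon s$ succeed whenever the merged diameter stays at most $s$. This is exactly the rule of the dynamic algorithm in \cref{thm: general partition dynamic}, and under the hypothesis it never conflicts, since the groups are separated by more than $s$. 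I would then check that this extra permission does not hurt the $\gamma$ bound from Step 3, because such merges only join points.
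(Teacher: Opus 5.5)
There is a genuine gap: you never fix a single merge rule that provably has both properties the statement asks for. Your Steps 2--3 (with $r=s/6$) do establish $s$-boundedness and $\gamma=O(n\varepsilon)$ for the single designated-component rule. As you observe, however, that rule cannot deliver the $0$-smoothness part. If two groups lying more than $s$ apart are each straddled by a cut, the second cut is not an endpoint of the designated interval, so that group stays split with positive probability.

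Your Step 4 repair is then only asserted, and neither reading of it works as stated. If diameter-capped merges are added on top of unconditional designated merges, $s$-boundedness fails. A designated merge across $z_k$ may absorb a neighbouring cluster that the capped rule has already grown to span almost $s$. If instead the cap governs all merges, the failure event of Step 3 (``a different cut was designated earlier'') no longer describes why a merge is rejected. Your reason for carrying the bound over, that such merges ``only join points'', is not valid. Under a diameter cap an earlier merge can block a later one, which is exactly the irreversibility obstacle the paper illustrates in the dynamic setting.

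The missing ingredient is a merge rule whose rejection event is \emph{local}. It should depend only on cuts other than the cut $z_j$ separating $u,w$, which gives independence. It should also involve only cuts near $z_j$, so that far-apart groups never interfere. The paper lets each component merge at most once, with an adjacent component. This gives all three properties:
\begin{itemize}
\item Clusters are at most two adjacent intervals of length at most $1.5r$ each, so $r=s/3$ already gives diameter at most $s$.
\item A merge across $z_j$ is rejected only if $z_{j-1}$ or $z_{j+1}$ has separated some close pair. That event is determined by $\{z_k\}_{k\neq j}$, and its probability is $O(n\varepsilon)$ by exactly your sum over adjacent close pairs.
\item Under the hypothesis of the second part, two groups more than $s$ apart never touch the same interval, so every merge attempt succeeds.
\end{itemize}
Your all-capped variant can be rescued in the same spirit. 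You would show that a rejected merge across $z_j$ forces some $z_{k'}$ with $0<|k'-j|=O(1)$ to have separated a close pair, since the two clusters have diameter at most $s$ and intervals have length at least $r/2$. That locality argument has to be supplied; it does not follow from merges ``only joining points''.
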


\begin{proof}
We follow the general outline from \cref{sec: overview}, using the partition defined above for the base partition $C$. In contrast to the general metric case, allowing just one merge would be insufficient here, because our construction does not ensure $0$-smoothness; random cutting points may fall between two close points regardless of point placement. 

Therefore, we allow each component to merge at most once with an adjacent component. Initially, each component forms a singleton cluster in $G$. Upon arrival of a new point $v_t$, if there exists $u$ with $d(u,v_t) \le \varepsilon s$ and $G_t(C_t(u)) \neq G_t(C_t(v_t))$, we attempt to merge these two clusters, provided neither cluster has merged previously. Because each interval's length is at most $1.5r$, the diameter after merging adjacent intervals remains at most $3r = s$, thus preserving $s$-boundedness.

\begin{claim}
For any time $t$ and points $u,w \in V_t$ with $d(u,w)\le \varepsilon s$,
\[
\P(G_t(C_t(u)) \ne G_t(C_t(w)) \mid C_t(u)\ne C_t(w)) \le O(n\varepsilon).
\]
\end{claim}

\begin{proof}
Conditioned on $C_t(u) \neq C_t(w)$, there exists a cutting point $z$ between $u$ and $w$. A merge attempt fails only if one of the adjacent cutting points splits another close pair, thus already using up the allowed merge. By independence of cutting points and applying Claim~\ref{clm: padding}, this conditional probability is at most
\[
\sum_{\substack{v, v' \text{ adjacent}\\d(v, v') \le \varepsilon s}} \P(C_t(v)\neq C_t(v')) \le O(n \varepsilon).\qedhere
\]
\end{proof}

\begin{claim}
If no pair of points in $V_t$ has distance within $[\varepsilon s,s]$, then for all $u,w$ with $d(u,w)\le s$, we have $G_t(C_t(u)) = G_t(C_t(w))$.
\end{claim}

\begin{proof}
Points naturally form groups with diameter at most $\varepsilon s$, and these clusters are separated by more than $s$. Hence, merge attempts within each group succeed unobstructed, guaranteeing the claim.
\end{proof}
This completes the proof of \cref{thm: line partition}.
\end{proof}

Taking $\varepsilon = n^{-3}$ and applying \cref{lemma: hierarchy} gives the embedding for the line.

\begin{restatable}[Line Incremental]{theorem}{Line}
\label{thm: line embedding}
For every $n\in \mathbb N$, there exists a probabilistic online monotone embedding of up to $n$ points from the line metric into HSTs with distortion $O(\log n)$.
\end{restatable}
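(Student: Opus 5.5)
The plan is to feed \cref{thm: line partition} into \cref{lemma: hierarchy}, in the same way as the proof of \cref{thm: general}. For every integer $j$, run the online monotone partition of \cref{thm: line partition} at scale $s=2^j$ with $\varepsilon=n^{-3}$, using independent randomness across scales. This gives at every time $t$ a $2^j$-bounded partition that is $(\delta_t^{(j)},\varepsilon,\gamma)$-smooth, with $\delta_t^{(j)}=O(1)$ and $\gamma=O(n\varepsilon)=O(n^{-2})$. \cref{lemma: hierarchy} then gives an online monotone embedding into HSTs with distortion
\[
O\Bigl(\max_{t\le n}\Bigl(\max_j \delta_t^{(j)}\log\varepsilon^{-1}+\gamma\sum_j\delta_t^{(j)}\Bigr)\Bigr).
\]

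The first term is $O(1)\cdot 3\log n=O(\log n)$. For the second term we must bound $\sum_j\delta_t^{(j)}$, so we first identify the relevant scales. By the $0$-smoothness clause of \cref{thm: line partition}, $\delta_t^{(j)}=0$ unless some pair $u,v\in V_t$ has $d(u,v)\in[\varepsilon 2^j,2^j]$. Apply \cref{lemma: relevant scales} to $V_t$, which has at most $n$ points, with a constant-factor adjustment of $\varepsilon$ to match its interval convention $[\varepsilon2^{j-1},2^j)$. This shows the number of such $j$ is $O(n\log(1/\varepsilon))=O(n\log n)$. Hence
\[
\sum_j\delta_t^{(j)}=O(n\log n),\qquad \gamma\sum_j\delta_t^{(j)}=O(n^{-2}\cdot n\log n)=o(1).
\]
The total distortion is therefore $O(\log n)$.

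The main obstacle is checking that the hypotheses of \cref{thm: line partition} line up exactly as used. It requires $\varepsilon\le n^{-1}$, which holds since $\varepsilon=n^{-3}$. Its $0$-smooth condition is stated in terms of $V_t$, which is exactly the set over which we count scales. I would also verify that the $\gamma=O(n\varepsilon)$ bound from the claim in its proof is conditional on $C_t(u)\neq C_t(w)$. Multiplying by \cref{clm: padding} then yields the unconditional form $\P\le \frac{d(u,w)}{s}\cdot O(1)\cdot O(n\varepsilon)$ that \cref{lemma: hierarchy} needs.

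Knowledge of $n$ is used only to choose $\varepsilon$. Also, the $\log(1/\varepsilon)$ blowup in the count of relevant scales is harmless because it is absorbed by the polynomially small $\gamma$; this is why $\varepsilon=n^{-3}$ rather than $n^{-1}$ is chosen.
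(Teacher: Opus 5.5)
Your proposal is correct and is essentially the paper's argument: the paper proves this in one line by taking $\varepsilon=n^{-3}$ in \cref{thm: line partition} and applying \cref{lemma: hierarchy}. Your bound of $O(n\log n)$ on the number of non-$0$-smooth scales via \cref{lemma: relevant scales}, and your remark that the conditional $O(n\varepsilon)$ must be multiplied by \cref{clm: padding}, are the details the paper leaves implicit.
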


Constructing partitions independently per dimension and taking their refinement directly implies the result for $\ell_\infty$ metrics.

\begin{theorem}[$\ell_\infty^D$ Incremental]
\label{thm: l_inf}
For every $n \in \mathbb N$, there is a probabilistic online monotone embedding of up to $n$ points from $\ell_\infty^D$ into HSTs with distortion $O(D\log(nD))$.
\end{theorem}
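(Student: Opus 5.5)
The plan is to build, for each scale $s=2^j$ and each coordinate $i\in\{1,\dots,D\}$, an independent copy of the line partition from \cref{thm: line partition}, applied to the projections $x_v^{(i)}$ of the arriving points. The partition of $\ell_\infty^D$ at scale $s$ is the refinement of these $D$ partitions: two points share a cluster iff they share a cluster in every coordinate. Each line partition is online and monotone, so clusters never split, and hence neither does the refinement. In $\ell_\infty$, a cluster whose points are within $s$ in every coordinate has diameter at most $s$, so the refinement is $s$-bounded. There is one subtlety: distinct points may share a coordinate value. I will treat the projections as a multiset of at most $n$ points on the line. Coincident projections lie at distance $0$, so they are never split, and the line construction is unaffected.

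For smoothness, take $u,w$ with $d_\infty(u,w)\le s$. A union bound over coordinates, using \cref{thm: line partition} and $|x_u^{(i)}-x_w^{(i)}|\le d_\infty(u,w)$, gives $\P[\text{split}]\le \sum_i O(1)\,d(u,w)/s = O(D)\,d(u,w)/s$. If $d_\infty(u,w)\le\varepsilon s$, each coordinate distance is also at most $\varepsilon s$, so the same sum gives $O(D)\cdot O(n\varepsilon)\,d(u,w)/s$. The refinement is therefore $(O(D),\varepsilon,O(n\varepsilon))$-smooth, with $\delta^{(j)}=O(D)$ and $\gamma=O(n\varepsilon)$.

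The main obstacle is bounding $\sum_j\delta_t^{(j)}$, i.e.\ counting the scales that are not $0$-smooth. At scale $j$, the refinement is $0$-smooth whenever every coordinate's line partition is. By \cref{thm: line partition}, coordinate $i$ is $0$-smooth unless some pair has coordinate-$i$ distance in $[\varepsilon 2^j,2^j]$. Applying \cref{lemma: relevant scales} to the $n$ projected points on each line gives $O(n\log(1/\varepsilon))$ relevant scales per coordinate, hence $O(Dn\log(1/\varepsilon))$ in total, and so $\sum_j\delta_t^{(j)}=O(D^2n\log(1/\varepsilon))$.

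I would now choose $\varepsilon=(nD)^{-c}$ for a suitable constant $c$, so that $\gamma\sum_j\delta_t^{(j)} = O(n\varepsilon\cdot D^2 n\log(1/\varepsilon))=O(1)$. The condition $\varepsilon\le n^{-1}$ required by \cref{thm: line partition} then holds automatically. By \cref{lemma: hierarchy}, the distortion is $O(D\log\varepsilon^{-1}+1)=O(D\log(nD))$. The only delicate point is that the factor $D^2$ forces $\varepsilon$ to depend on $D$ as well as $n$, and this is exactly where the $\log(nD)$ in the statement comes from.
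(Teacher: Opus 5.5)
Your proof is correct and is essentially the paper's argument: you refine $D$ per-coordinate copies of the line partition from \cref{thm: line partition} and union-bound to get $(O(D),\varepsilon,O(n\varepsilon))$-smoothness. You then count the non-$0$-smooth scales as $O(Dn\log(1/\varepsilon))$ via \cref{lemma: relevant scales}, and take $\varepsilon$ polynomially small in $nD$ before applying \cref{lemma: hierarchy}. One small correction to your closing remark: the $D^2$ in $\sum_j\delta_t^{(j)}$ is an artifact of counting scales rather than summing per coordinate (the paper counts the same way), since bounding the refined $\delta_t^{(j)}$ by the sum of the $D$ coordinate-wise smoothness parameters gives $\sum_j\delta_t^{(j)}=O(Dn\log(1/\varepsilon))$, so $\varepsilon=n^{-2}$ already suffices and the $D$-dependence of $\varepsilon$ is not forced.
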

\begin{proof}
    For each dimension $i = 1, 2, \dots, D$, let $d^i(u,w)$ denote the distance between $u$ and $w$ at the $i$th dimension and using $d^i$ as the metric, construct an online probabilistic monotone partition $\{P^i_t\}$ that is $(\delta_t, \varepsilon, \gamma)$-smooth at time $t$.\footnote{Previously, we used the superscript for the level; here it is used for the dimension since we are now focusing on a partition on a single level.} At each time, the partition $P_t$ is defined such that $P_t(u) = P_t(w)$ if $P^i_t(u) = P^i_t(w)$ holds for all $1 \le i \le D$. Since each partition $P^i_t$ is $s$-bounded and satisfies monotonicity, their refinement has these properties as well.

    Now, if $d(u,w) \le s$ in $\ell_\infty^D$, we have $d^i(u,w) \le s$ for every dimension $i$. Therefore
    \[
    \P(P_t(u)\neq P_t(w)) \leq \sum_{i=1}^D\P(P_t^i(u) \neq P_t^i(w)) \leq \sum_{i=1}^D \frac {d^i(u,w)} s \cdot \delta \leq D\delta\frac {d(u,w)} s.
    \]

    Similarly, if $d(u, w) \leq \varepsilon s$, then $\P(P_t(u)\neq P_t(w)) \leq D\delta\gamma\frac {d(u,w)} s$.

    Applying \cref{thm: line partition} to each dimension and setting $\varepsilon = n^{-3}D^{-1}$ yields a partition for points in $\ell_\infty^D$ that is $(O(D), n^{-3} D^{-2}, O(n^{-2} D^{-2}))$-smooth. Observe that a partition is $0$-smooth if the partition in each dimension is $0$-smooth. Hence, the number of scales that are not $0$-smooth becomes $O(Dn \log (Dn))$. \cref{lemma: hierarchy} then gives a distortion of $O(D \log (nD))$.
\end{proof}

Similarly, we can modify the partition construction $C_t$ in the fully dynamic setting.

\begin{theorem}[$\ell_\infty^D$ Dynamic]
There exists a probabilistic online monotone embedding from $\ell_\infty^D$ into HSTs with distortion $O(Dl)$, where $l$ is the width of the sequence.
\end{theorem}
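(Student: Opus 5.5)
We would mirror the general-metric dynamic argument (\cref{thm: general partition dynamic}), replacing the Bartal-style base partition by the per-coordinate line partition. For each scale $s=2^j$ and each coordinate $i\in\{1,\dots,D\}$, we take the fixed random cutting points $z_k$ from the line construction of \cref{appendix: normed} with $r=s/3$. The components $C^i_t$ are the intervals $[z_k,z_{k+1})$ in coordinate $i$ that contain alive points. By \cref{clm: padding}, $\P(C^i_t(u)\ne C^i_t(w))\le 6d^i(u,w)/s$ at every time, and this is unaffected by deletions because the cutting points are fixed at time $0$. The base component of a point is the product cell, i.e., the refinement over the $D$ coordinates. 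A union bound as in \cref{thm: l_inf} then gives $\P(C_t(u)\ne C_t(w))\le 6D\,d_\infty(u,w)/s$, so $\delta=O(D)$ at every relevant scale.

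On top of these cells we run the dynamic merging rule from the proof of \cref{thm: general partition dynamic}. When a point $v_t$ arrives and some alive $u$ has $d_\infty(u,v_t)\le s/4$ but lies in a different cluster, we attempt to merge the two clusters. The merge is accepted only if the union of all cells in both clusters still has $\ell_\infty$-diameter at most $s$, measured over the full geometric cells and not only the alive points. Because clusters are only checked against their whole cell extents, accepted merges never violate $s$-boundedness, even after future arrivals. After each departure we drop empty cells and re-attempt previously rejected merges. Clusters are never split, so the partition is monotone and \cref{lemma: hierarchy} applies with $\gamma=1$.

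The main step is the analogue of \cref{lemma: zero smooth}: if no two alive points have $\ell_\infty$-distance in $[s/4,s]$, then $P_t$ is $0$-smooth. I expect this to be the main obstacle. The alive points split into groups of diameter less than $s/4$, pairwise more than $s$ apart. A cell has side at most $1.5r=s/2$, so every cell touching a group lies within distance $s/2$ of it and has diameter at most $s/2$. The trouble is that the union of cells meeting one group can then have diameter up to $s/4+s=5s/4>s$, so the merge test can fail. To fix this I would shrink $r$, e.g.\ to $r=s/8$, so that cells have side at most $3s/16$. Then all cells meeting a group lie in a box of side below $s/4+2\cdot 3s/16<s$. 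I would also restrict to clusters whose cells currently contain alive points, or more simply make a cluster's extent the union of cells containing alive points, recomputed on departure. Distant groups cannot contaminate a cluster, since every accepted merge joins points within $s/4$, so the merges inside each group succeed and $0$-smoothness follows. Shrinking $r$ changes the constant in \cref{clm: padding} only by a constant factor.

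Finally, a scale can fail to be $0$-smooth only if some alive pair has distance in $[s/4,s]$. By \cref{lemma: relevant scales} applied to $L_t$, at most $O(l)$ scales are relevant at any time. Hence $\sum_j\delta^{(j)}_t=O(Dl)$ and $\max_j\delta^{(j)}_t=O(D)$. \cref{lemma: hierarchy} with $\varepsilon=\gamma=1$ then gives distortion $O(D+Dl)=O(Dl)$, with no knowledge of $l$ needed.
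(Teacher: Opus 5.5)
Your proposal is correct and takes essentially the same route as the paper: per-coordinate random cutting points refined into $O(D)$-smooth product cells, with the cells shrunk so that all cells meeting one group fit inside a set of diameter at most $s$, combined with the diameter-checked dynamic merging rule, the $0$-smoothness claim, \cref{lemma: relevant scales} on $L_t$, and \cref{lemma: hierarchy}. The paper takes $r=s/6$ with merge threshold $s/6$ where you take $r=s/8$ with threshold $s/4$, and both choices work. One small wording fix: a cluster cannot meet two distinct groups because cluster extents are kept at diameter at most $s$ while groups are more than $s$ apart, not because each accepted merge joined points within $s/4$, since those points may since have departed.
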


\begin{proof}
We construct $C_t$ by independently forming the $O(1)$-smooth partitions described earlier for each dimension, but this time with each component's diameter capped at $s/4$ (achieved by setting $r = s/6$ instead of $s/3$). We then define $C_t$ as the refinement of these $D$ partitions, resulting in an $O(D)$-smooth partition overall.

The merging strategy is the same as in \cref{sec: general dynamic}: whenever two points $u,w \in L_t$ satisfy $d(u,w)\le s/6$ and $C_t(u)\ne C_t(w)$, we attempt to merge their respective components. The same argument as in \cref{lemma: zero smooth} shows that if no pair of alive points has distance in $[s/6,s]$, the resulting partition is $0$-smooth at that scale.

By applying \cref{lemma: hierarchy}, the total distortion equals $\sum_j \delta^{(j)} = O(Dl)$, as required.
\end{proof}

To generalize further, we use the equivalence of norms in $\R^D$:

\begin{restatable}[Equivalent Norms]{lemma}{norms}
\label{lem: equivalent norms}
For any norm $\|\cdot\|$ in $\R^D$, there is a linear map $T$ such that for all $x \in \mathbb{R}^D$,
\[
\|x\| \le \|T x\|_\infty \le D\|x\|.
\]
\end{restatable}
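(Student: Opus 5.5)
We will use John's theorem, the standard structural fact about convex bodies. Let $K=\{x:\|x\|\le 1\}$ be the unit ball of the given norm. Then $K$ is a compact, convex body in $\R^D$ that is symmetric about the origin and has nonempty interior. By John's theorem there is an ellipsoid $E=\{x : \|Ax\|_2\le 1\}$, with $A$ an invertible linear map, such that
\[
E\subseteq K\subseteq \sqrt{D}\,E .
\]
In terms of the norm, this reads
\[
\tfrac{1}{\sqrt D}\|Ax\|_2\le \|x\|\le \|Ax\|_2 \qquad\text{for all } x\in\R^D .
\]

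Next we compare the Euclidean norm with the $\ell_\infty$ norm. For every $y\in\R^D$ we have
\[
\|y\|_\infty\le\|y\|_2\le\sqrt D\,\|y\|_\infty .
\]

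Now set $T=\sqrt D\,A$. Two chains of inequalities give the two sides of the claim.

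For the lower bound, combine $\|x\|\le\|Ax\|_2$ with $\|Ax\|_2\le\sqrt D\|Ax\|_\infty$:
\[
\|x\|\le \|Ax\|_2\le \sqrt D\,\|Ax\|_\infty=\|Tx\|_\infty .
\]

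For the upper bound, combine $\|Ax\|_\infty\le\|Ax\|_2$ with $\|Ax\|_2\le\sqrt D\|x\|$:
\[
\|Tx\|_\infty=\sqrt D\,\|Ax\|_\infty\le\sqrt D\,\|Ax\|_2\le D\|x\| .
\]

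Together these give $\|x\|\le\|Tx\|_\infty\le D\|x\|$, which is exactly the statement of \cref{lem: equivalent norms}.

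The only nontrivial input is John's theorem. If it is to be avoided, a cruder but elementary alternative is an Auerbach basis. Take vectors $e_i$ with $\|e_i\|=1$ and dual functionals $f_i$ of dual norm $1$ satisfying $f_i(e_j)=\delta_{ij}$. Define $Tx=(f_1(x),\dots,f_D(x))$. The bound $\|Tx\|_\infty\le\|x\|$ holds because each $|f_i(x)|\le\|x\|$. In the other direction, $\|x\|=\|\sum_i f_i(x)e_i\|\le D\|Tx\|_\infty$. Rescaling $T$ by $D$ again gives the factor $D$. This route also proves the lemma and is self-contained apart from the existence of Auerbach bases.
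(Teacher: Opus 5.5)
Your proof is correct, and both of your routes work. The paper does not prove this lemma itself. It states it in the normed-spaces appendix as the standard equivalence of norms and uses it only to transfer the $\ell_\infty^D$ embeddings to arbitrary norms on $\mathbb{R}^D$, so your argument supplies what the paper leaves to the reader.

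In the John's-theorem route, $E\subseteq K\subseteq\sqrt D\,E$ becomes $\tfrac{1}{\sqrt D}\|Ax\|_2\le\|x\|\le\|Ax\|_2$ by homogeneity. You then correctly place the two $\sqrt D$ losses (one from John's theorem, one from comparing $\ell_2$ with $\ell_\infty$) on opposite sides, so $T=\sqrt D\,A$ gives exactly $D$. In the Auerbach route, biorthogonality makes $e_1,\dots,e_D$ a basis with coordinate functionals $f_i$, which justifies $x=\sum_i f_i(x)e_i$.

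One correction to how you frame the alternative: the Auerbach route is not cruder here. It yields the same constant $D$ and goes to $\ell_\infty$ directly rather than through $\ell_2$. Its only input is the existence of an Auerbach basis, obtained by maximizing $|\det(x_1,\dots,x_D)|$ over the unit ball and defining the $f_i$ via Cramer's rule, and that is lighter than John's theorem. John's theorem is stronger only in giving the Euclidean approximation factor $\sqrt D$, and that advantage is lost once you compare with $\ell_\infty$. For the paper's application, where $D=O(1)$, any constant depending only on $D$ would have been enough.
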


Thus, embeddings from $\ell_\infty^D$ imply embeddings for any normed space in $\mathbb{R}^D$, completing the proof for the parts regarding normed spaces in \cref{thm: general} and \cref{thm: general dynamic}.

\section{Embeddings with Contractions}\label{app:contractions}
In our main results, we have focused on non-contractive embeddings. However, to handle embeddings without prior knowledge of $n$ (\cref{appendix: unknown n}) and discuss deterministic algorithms (\cref{appendix: deterministic parent}), we need embeddings that potentially involve contractions.

\begin{definition}[Online Monotone Embedding, Contraction Allowed]
\label{def: embedding2}
Let $(X, d_X)$ be a metric space and let $\mathcal{M}$ be a family of metric spaces. A \emph{(deterministic) online monotone embedding} from $(X, d_X)$ into $\mathcal{M}$ takes inputs from an update sequence $\sigma$ of length $n$ one by one, and upon receiving $\sigma_t$, outputs a metric $d_t$ satisfying the following conditions:
\begin{enumerate}[itemsep=0em]
    \item $M_t = (L_t, d_t) \in \mathcal{M}$;
    \item $d_{t}$ is dominated by $d_{t-1}$ on $L_{t-1} \cap L_t$: for all $u,v \in L_{t-1} \cap L_t$, 
    $d_{t}(u,v) \le d_{t-1}(u,v)$.
\end{enumerate}
Such an embedding has \emph{contraction} $\lambda_c$ if for every update sequence $\sigma$, time $t$, and points $u, v \in L_t$,
    \[
    \lambda_c \cdot d_t(u, v) \geq d_X(u, v),
    \]
    and it has \emph{expansion} $\lambda_e$ if for every update sequence $\sigma$, time $t$, and points $u, v \in L_t$, 
    \[
    d_t(u, v) \leq \lambda_e \cdot d_X(u, v).
    \]
    A probabilistic monotone embedding is a distribution over deterministic ones. It has \emph{contraction} $\lambda_c$ if every deterministic embedding in its support has contraction $\lambda_c$, and it has \emph{expansion} $\lambda_e$ if for every update sequence $\sigma$, time $t$, and points $u, v \in L_t$,
    \[
    \mathbb{E}[d_t(u, v)] \leq \lambda_e \cdot d_X(u, v).
    \]

    An embedding is called \emph{non-expansive} (resp., \emph{non-contractive}) if its expansion (resp., contraction) is at most $1$ in every realization. The \emph{distortion} of an embedding is the product of its contraction and expansion $\lambda = \lambda_c \cdot \lambda_e$.
\end{definition}

Note that this definition generalizes \cref{def: embedding}. Specifically, for non-contractive embeddings, the definitions coincide and produce the same notion of distortion. It is also easy to verify that \cref{thm: reduction} can be adapted to incorporate this more general notion of distortion by multiplying the factor $\lambda_c$ in the corresponding step in \cref{claim: step2}.

One might notice some asymmetry between contraction and expansion for probabilistic embeddings. The reason behind this difference is as follows: contraction bounds the loss when mapping the cost incurred by an online algorithm in the embedded (target) space back to the original space $(X, d_X)$. Intuitively, the online algorithm prefers to move mostly along highly contracted distances, because these are cheap in the target space. Thus, we require the contraction to be bounded in \emph{every} realization. In contrast, expansion bounds the loss when translating an optimal offline solution in $(X, d_X)$ into the embedded target space. Because the optimal offline solution is independent of the random choices of the online embedding, the maximum expansion can be safely evaluated outside the expectation. 

\section{Handling Unknown Length}
\label{appendix: unknown n}

Previous online embedding algorithms \cite{Bartal2020, Indyk10online} manage unknown sequence lengths by dynamically adjusting the component construction. Unfortunately, our approach does not directly support this method: as more points arrive, we must decrease the allowed merge-failure probability $\gamma$ to offset the increase in relevant scales $O(n)$. However, the probability of merge failure (caused by another merge) monotonically increases with the number of points. Thus, dynamically selecting $\varepsilon$ alone is insufficient for our embedding method.

We could adapt our embedding if we allow it to have a limited amount of contraction (i.e., sometimes underestimate distances). See \cref{def: embedding2} for the definition of contraction and distortion for potentially contractive embeddings. We now demonstrate that without knowing $n$, it is possible to achieve an embedding with distortion $O(\log^2 n \log\log n)$, where the contraction is $O(\log\log n)$. Intuitively, we permit up to $O(\log\log t)$ components to merge at any time $t$.

We need to modify \cref{lemma: hierarchy} slightly.

\begin{restatable}[HST Construction]{lemma}{2}
\label{lemma: hierarchy 2}
    Suppose that for every integer $j$, there is a probabilistic online $(\alpha \cdot 2^j)$-bounded monotone partition on $(X,d_X)$ for an update sequence $\sigma$ of length $n$ and that for each time $t \le n$, the probabilistic partition $C_t$ is $(\delta_t^{(j)}, \varepsilon, \gamma)$-smooth. Then there is an online monotone embedding from $(X,d_X)$ into HSTs that, on the same input, achieves a distortion of 
    \[
    O\left(\alpha \cdot \max_{t \le n}\left(\max_{j \in \mathbb{Z}} \delta_t^{(j)} \log \varepsilon^{-1} + \gamma \sum_{j \in \mathbb{Z}} \delta_t^{(j)}\right)\right).
    \]
\end{restatable}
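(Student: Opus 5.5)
The plan is to reduce to the proof of \cref{lemma: hierarchy} by rescaling, and to account for the factor $\alpha$ as a contraction in the sense of \cref{def: embedding2}. Let $\{C^j_t\}$ be the given $(\alpha 2^j)$-bounded monotone partitions. I would build the induced HST embedding exactly as in the proof of \cref{lemma: hierarchy}: refine each level with all coarser levels and assign weight $2^j$ to level-$j$ clusters. Monotonicity is unaffected. Since the partitions never split clusters, the refined partitions are also monotone. Hence the least common ancestor of two alive points never moves up, and $d_t(u,w)$ is non-increasing.

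Next I would bound the expansion. The argument in the appendix proof of \cref{lemma: hierarchy} uses only the smoothness bounds $D^j(u,w)\le \delta_t^{(j)} d(u,w)/2^j$ (or $\gamma\delta_t^{(j)}d(u,w)/2^j$ when $d(u,w)\le\varepsilon 2^j$) and the telescoping recursion $E^j \le D^j 2^j + E^{j-1}$. Neither depends on the boundedness parameter, so it carries over verbatim and gives
\[
\E[d_T(u,w)] \le \Bigl(\gamma\sum_j\delta_t^{(j)} + \log\tfrac1\varepsilon \max_j\delta_t^{(j)}\Bigr)d(u,w) + E^{B-1}(u,w).
\]
The only place boundedness entered was the base term $E^{B-1}$ and non-contractiveness.

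The contraction step is where $\alpha$ appears, and it is the main point to check. If $u,w$ lie in a common cluster at level $j$, then $d(u,w)\le \alpha 2^j$. Thus at the separating level $j^*$ (the lca level is $j^*+1$) we get $d_T(u,w)=2^{j^*+1}\ge d(u,w)/\alpha$, so the contraction is at most $\alpha$. The base term has the same issue. Below level $B' = \lceil\log_2(d(u,w)/\alpha)\rceil$ the points are necessarily separated, so the levels between $B'$ and $B$ contribute either through smoothness or at most $O(\alpha)\cdot d(u,w)$ geometrically. One clean way to handle this is to observe that those $\log\alpha$ extra levels satisfy $d(u,w)\ge 2^j$, where smoothness gives no bound, and to just charge $\sum_{j<B}2^j\le 2d(u,w)$. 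That costs a constant factor, not $\alpha$.

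Finally, the distortion is contraction times expansion, which gives
\[
\alpha\cdot O\Bigl(\max_j\delta_t^{(j)}\log\varepsilon^{-1}+\gamma\sum_j\delta_t^{(j)}+1\Bigr).
\]
The additive $1$ is absorbed because some relevant scale has $\delta\ge\Omega(1)$, or else the term is trivial, as in \cref{lemma: hierarchy}. Taking the maximum over $t$ yields the claim.

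The obstacle I expect is definitional bookkeeping: making sure that the leaf level $Y$ with all singletons exists, and that the contraction bound holds in every realization rather than in expectation, as \cref{def: embedding2} requires. The latter follows from the deterministic diameter bound $\alpha 2^j$ on every cluster.
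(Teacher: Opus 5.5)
Your proposal is correct and follows the paper's proof: the expansion is bounded exactly as in \cref{lemma: hierarchy}, since smoothness is measured relative to $2^j$ and the boundedness parameter never enters. The contraction is at most $\alpha$ in every realization, because any two points sharing a level-$j$ cluster satisfy $d(u,w)\le\alpha 2^j$, so they are always separated at scales below $d(u,w)/\alpha$; your explicit bound $E^{B-1}=\E[d_T(u,w)\mid R^B(u,w)]\le 2^B\le 2d(u,w)$ for the base term (where $u,w$ need no longer be separated at level $B-1$) fills in a detail the paper's sketch leaves implicit.
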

\begin{proof}[Proof Sketch]
The expansion part is bounded the same as in \cref{lemma: hierarchy}. To verify the contraction bound, note that any pair \( u,w \) will always be separated at any scale \( s \le d(u,w)/\alpha\).
\end{proof}

\begin{theorem}
    There exists a probabilistic online monotone embedding of up to $n$ points from any metric space into HSTs with distortion $O(\log^2 n \log \log n)$ for any $n \in \mathbb{N}$, without prior knowledge of $n$.
\end{theorem}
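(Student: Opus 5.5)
We use the phase structure from the technical overview and plug it into \cref{lemma: hierarchy 2} with $\alpha = O(\log\log n)$. Phases are indexed by $i = 1,2,\dots$, with guesses $m_i = 2^{2^i}$; phase $i$ covers times $t \in [m_{i-1}+1, m_i]$. By time $t$ at most $\lceil\log\log t\rceil+1$ phases have started. At every scale $s = 2^j$ we build the base components $C_t$ exactly as in \cref{sec: algorithm outline}. The number of components is unknown in advance, but the radius density only uses $\chi_j = 2j$ for the $j$th component, so \cref{clm: bartal padding} gives $\delta = O(\log t)$ with no prior knowledge of $n$. For the merging layer $G_t$, phase $i$ runs its own copy of the single-designated-component rule from \cref{thm: incremental partition}, with threshold $\varepsilon_i = m_i^{-c}$ for a suitable constant $c$. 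Concretely, during phase $i$, whenever a close pair $u,w$ with $d(u,w)\le\varepsilon_i s$ is split, we proceed as follows. If no designated component $c_i^*$ exists yet, we appoint the cutting component and merge. If $c_i^*$ exists, we merge only if $c_i^*\in\{c_u,c_w\}$. Merges from earlier phases are never undone, so the partition stays monotone.

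For boundedness, each phase's merges form a star around $c_i^*$. Hence a final cluster is a union of stars from at most $O(\log\log t)$ phases, chained through shared components. Each link adds at most $s/4+\varepsilon s$ to the diameter, so the diameter is $O(\log\log n)\cdot s$, and we may take $\alpha = O(\log\log n)$.

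For smoothness at time $t$ in phase $p$, take $\varepsilon := \varepsilon_p$. A pair with $d(u,w)\le \varepsilon_p s$ is a close pair for the current phase (the thresholds only shrink, and any close pair split at time $t$ is examined in phase $p$). So it ends up separated only if some component $c$ cuts it and a different component $c'$ is phase $p$'s designated component. We repeat the union bound of \cref{clm: enforce}: $c'$ is designated only if $r(c')$ lands in the set $X_p(c')$ of radii cutting some $\varepsilon_p s$-close pair among at most $m_p$ points, which has probability $O(\varepsilon_p m_p^2\log m_p)$. Independence of radii after fixing $c,c'$ then gives $\gamma = O(m_p^5\varepsilon_p) = O(1/m_p)$. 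Points that arrive during phase $p$ number at most $m_p$, and since $m_p \le m_{p-1}^2 < t^2$, we have $\log m_p = O(\log t)$. The relevant scales at time $t$ number $O(t\log t)$ by \cref{lemma: relevant scales} (threshold $\varepsilon_p$ relative to at most $t\le m_p$ points), so $\gamma\sum_j\delta^{(j)} = O(t\log^2 t/m_p^{\,c'})$. Choosing $c$ large enough makes this $O(1)$. Meanwhile $\max_j\delta^{(j)}\log\varepsilon_p^{-1} = O(\log t\cdot\log m_p) = O(\log^2 n)$. We also need the $0$-smooth statement: scales with no pair at distance in $[\varepsilon_p s/16, s]$ are $0$-smooth, exactly as in \cref{clm: 0 smooth inc}.

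Applying \cref{lemma: hierarchy 2} yields distortion $O(\log\log n\cdot\log^2 n)$. The main obstacle is the smoothness step across phases. Pairs that became close in an early phase were compared against a larger $\varepsilon_i$, and a phase-$p$ pair can also be blocked by the designated components of earlier phases. I would handle this by noting that earlier merges only join clusters and never obstruct a phase-$p$ merge, since phase $p$ has its own designated slot. The only failure event is then the one bounded above, whose probability of $c'$ being designated is controlled by $m_p$ alone.
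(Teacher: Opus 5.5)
Your proposal is correct and follows essentially the same route as the paper. Both use phases with guesses $m_i = 2^{2^i}$, one designated merging component per phase with threshold $\varepsilon_i = m_i^{-O(1)}$, and the same union bound over two distinct components to get $\gamma = O(1/m_p)$; both then feed an $O(\log\log n)\cdot s$ cluster-diameter bound into the contraction-tolerant HST construction lemma. Your explicit re-examination of earlier close pairs in each new phase, and your path-length bound for a union of one star per phase, are details the paper's sketch leaves implicit.
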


\begin{proof}
Fix a scale \( s \). We outline the construction of a partition that is \((2\log\log t)\cdot s\)-bounded and \((O(\log t), t^{-12}, O(t^{-1}))\)-smooth for all times \( t \). The theorem then follows from \cref{lemma: hierarchy 2}.

The algorithm operates in phases \( i = 1, 2, \dots \). During each phase \( i \), the algorithm maintains a guess \( m_i = 2^{2^i} \) for the total number of points. Phase \( i \) begins when the \((m_{i-1}+1)\)th point arrives and ends immediately before the \((m_i+1)\)th point arrives. Thus, at any time \( t \) during phase \( i \), we have \( t \in [\sqrt{m_i}, m_i] \).

The base partition \( C_t \) is constructed as in the known-\( n \) case, ensuring \( \delta = O(\log t) \). However, rather than allowing only one designated merging component globally, we now permit exactly one designated merging component \emph{per phase}. Specifically, in phase \( i \), we set \(\varepsilon_i = m_i^{-6}\). During phase \( i \), a component attempts to merge if it separates two points within distance \(\varepsilon_i s\). The first component to attempt a merger within each phase is permitted to merge freely, while all subsequent merger attempts involving different components are rejected. Using a similar analysis as in \cref{thm: incremental partition}, during phase \( i \), the probability of separating a close pair \( u,w \) with \( d(u,w) \le \varepsilon_i s \) is bounded by 
\[
O(\varepsilon_i m_i^{5}\log m_i)\frac{d(u,w)}{s}.
\] 
Hence $\gamma = O(\varepsilon_i m_i^{5})$. By setting \(\varepsilon_i = m_i^{-6} \ge t^{-12}\), we achieve \(\gamma = O(m_i^{-1}) \le O(t^{-1})\), which suffices to ensure the desired smoothness conditions. 

Lastly, to verify the partition is indeed \((2\log\log t)\cdot s\)-bounded, note that at most one component per phase can successfully merge. Thus, by time \( t \), there are at most \( \log\log t \) designated merging components, resulting in clusters with diameters of at most \( 2(\log\log t)\cdot s \).
\end{proof}

\section{Deterministic Embeddings}
\label{appendix: deterministic parent}
We now discuss the results for deterministic embeddings in the incremental setting, summarised in Table A. 

\begin{table}[ht]
    \centering
    \label{tab: Det}
    \begin{tabular}{| c| c| }
    \hline
        Setting & Distortion\\
        \hline\hline
        Offline & $n-1$ (\cref{thm: Offline})\\
        Online strict & $2^n$ (\cref{thm: online_lower}, \ref{thm: online_upper})\\
        Online monotone, known $n$ & $n-1$ (\cref{thm: deterministic})\\
        Online monotone, unknown $n$ & $\widetilde \Theta (n\log n)$ (\cref{thm: deterministic})\tablefootnote{The $\tilde\Theta$ hides poly-$\log \log n$ factors here, see \cref{rem: don't know n} for detail.} \\
        \hline
    \end{tabular}
    \caption{Deterministic embeddings into HSTs}
\end{table}

\subsection{Offline and Strict Online Embedding into HSTs}
Offline and strict online embedding into trees exhibit a distortion of $\Theta(n)$ (\cite{Rabinovich1998LowerBO, Gupta01Steiner}) and $2^{(n-4)/2}$ respectively. For HSTs, the bounds are the same asymptotically, but more specific. We believe this bound is known, but we have not found any literature that provides proof, so we present our proof here for completeness.

\begin{theorem}[Folklore]
\label{thm: Offline}
    There is an (offline) embedding of $n$ points into HSTs with distortion $n-1$. This bound is tight.
\end{theorem}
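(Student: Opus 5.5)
Two directions are needed: an upper bound (a construction with distortion $n-1$) and a matching lower bound (some $n$-point metric forcing distortion $n-1$).

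\textbf{Upper bound.} I would build the HST by a minimum-spanning-tree (single-linkage) hierarchy. Let $T$ be a minimum spanning tree of the complete graph on the $n$ points with edge weights $d$. For $u,v$, let $m(u,v)$ be the maximum edge weight on the $T$-path between $u$ and $v$; this is the minimax path value, and it is an ultrametric. To turn it into a non-contractive HST, set $d_T(u,v)=(n-1)\,m(u,v)$. The tree structure comes from Kruskal's algorithm: each merge of two clusters via an edge of weight $w$ creates a node of weight $(n-1)w$. Node weights are non-decreasing toward the root because Kruskal processes edges in sorted order, so this is a $1$-HST, i.e.\ an ultrametric.

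\textbf{Distortion of the upper bound.} Non-contraction holds because the $T$-path between $u,v$ has at most $n-1$ edges. By the triangle inequality, $d(u,v)\le (\text{number of edges})\cdot m(u,v)\le (n-1)\,m(u,v)$. Expansion is at most $1$ by the cycle property of MSTs: every edge on the $T$-path between $u$ and $v$ has weight at most $d(u,v)$, since otherwise swapping in the direct edge $uv$ would give a lighter spanning tree. Hence $m(u,v)\le d(u,v)$, and the distortion is $n-1$. If the paper's HST convention requires $\mu$-separation with $\mu>1$, I would either note that $\mu=1$ is allowed by the definition (weights only need $\varphi(\text{child})\le\varphi(\text{parent})/\mu$ with $\mu\ge1$), or accept the constant loss. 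Since the statement says exactly $n-1$, I read it as using $\mu=1$.

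\textbf{Lower bound.} I would take $n$ equally spaced points $0,1,\dots,n-1$ on the line and argue that any ultrametric $d'$ dominating $d$ has some consecutive pair with $d'(i,i+1)\ge n-1$. The ultrametric inequality chains along the path: $d'(0,n-1)\le\max_i d'(i,i+1)$. Non-contraction gives $d'(0,n-1)\ge d(0,n-1)=n-1$, so some pair at distance $1$ is stretched by a factor $n-1$. If the embedding is allowed to be contractive, I would rescale so the minimum ratio $d'/d$ equals $1$; the same chain argument then bounds the product of contraction and expansion by $n-1$ from below.

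The main (mild) obstacle is matching the HST definition's conventions, namely node weights and the separation parameter $\mu$, so that the constant is exactly $n-1$ rather than $O(n)$. I expect everything else to be routine.
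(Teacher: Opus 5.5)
Your proof is correct and follows the paper. The lower bound is the same: evenly spaced points on a line, with the ultrametric chaining $d_T(v_1,v_n)\le\max_i d_T(v_i,v_{i+1})$. The upper bound uses the same Kruskal-style hierarchy, yielding a $1$-HST. The only difference is the node weights. The paper sets $\varphi(r)=\varphi(r_u)+\varphi(r_v)+d$, which is the total Kruskal-forest weight of the merged cluster, and proves $d_T(u,v)\le(k-1)\,d(u,v)$ by induction on cluster size. You instead scale the bottleneck ultrametric uniformly by $n-1$ and argue via the MST cycle property and path-length counting, which is equally valid.
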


\begin{proof}
    We start with the lower bound. Consider a line of length $n-1$ with $n$ points evenly distributed at a distance of $1$ from each other. Denote the points from left to right as $v_1, \ldots, v_n$, each with coordinate $v_i = i-1$. Assume without loss of generality that the embedding is non-contractive (otherwise, we multiply all distances by the contraction factor), then $d_T(v_1, v_n) = n-1$. Since HST is an ultrametric, for all $u, v, w$, $d_T(u, w) \leq \max(d_T(u, v), d_T(v, w))$. Thus, inductively $d_T(v_1, v_k) \leq \max_{i=1}^{k-1} \{d_T(v_t, v_{i+1})\}$. Therefore, $\max_{i=1}^{k} d_T(v_t, v_{i+1}) \geq d_T(v_1, v_n) = n-1$. Suppose $d_T(v_j, v_{j+1}) \geq n-1$ for some $j$, given $d(v_j, v_{j+1}) = 1$, the distortion is at least $n-1$.

    We now describe the construction that achieves $n-1$ distortion. The approach is inspired by Kruskal's algorithm \cite{kruskal} for minimum spanning trees. We start with $n$ singleton leaves $v$, each with $\varphi(v)=0$. We sort all pairwise distances in increasing order and process them one by one. For a pair of points $u, v$ with distance $d$, if they are already in the same HST, do nothing. Otherwise, we find the roots of $u$ and $v$, denoted by $r_u$ and $r_v$. Add a node $r$ with value $\varphi(r) = \varphi(r_u) + \varphi(r_v) + d$, making $r_u$ and $r_v$ its children.

    By induction, at any stage in an HST $T$ with $k$ points, for all $u, v$ within $T$, $d_T(u, v) \leq (k-1)d(u, v)$. The base case for singleton trees is trivial. When combining two trees $T_1, T_2$ with sizes $k_1, k_2$ and roots $r_1, r_2$ using points $u, v$ where $d(u, v) = d$, for any $u' \in T_1, v' \in T_2$, we must have $d(u', v') \geq d$, otherwise $u', v'$ would have merged earlier. Additionally, the edges added earlier must have values $\leq d$, hence $\varphi(r_1) \leq (k_1-1)d$ and $\varphi(r_2) \leq (k_2-1)d$. Thus:
    \[\varphi(r) = \varphi(r_1) + \varphi(r_2) + d \leq (k_1 + k_2 - 1)d \leq (k_1 + k_2 -1)d(u', v').\]
    At the end of the process, all points are combined into a single HST with a distortion of $n-1$. 
\end{proof}

\begin{theorem}
    There is a non-contractive online embedding of $n$ points into HSTs with distortion $2^n$, even without prior knowledge of $n$.
\label{thm: online_upper}
\end{theorem}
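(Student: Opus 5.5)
We will build the embedding greedily. Each new point is attached to its nearest previously embedded point, and the invariant we maintain is that HST distances are at most $2^{k}$ times the true distances after $k$ points have arrived. Since the embedding is strict, we never change the weights or ancestry of existing nodes, except that we may insert new internal nodes on edges and new roots above the current root. Non-contractiveness will be enforced directly. Throughout, we use the node-weighted HST definition with $\mu=1$ (an ultrametric). This is a $1$-HST; if a larger separation $\mu$ is needed, one can round weights to powers of $\mu$ at a constant-factor loss, which we absorb.

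\textbf{Step 1 (insertion rule).} Suppose $v$ arrives, the current tree $T$ contains the points $V$, and $|V|=k$. Let $u\in V$ be the point nearest to $v$, and set $d=d(u,v)$. Walk up from the leaf $u$ towards the root, and let $x$ be the highest ancestor of $u$ with $\varphi(x)<2d$. We cannot simply put $\varphi(\lca(u,v))=d$, because $v$ must also be far enough in $T$ from the other points $w$ in the subtree of $x$. Those points satisfy $d(v,w)\ge d$ (since $u$ is nearest), yet their tree distance to $v$ would equal $d_T(u,w)$ only if we attached $v$ below their common ancestor with $u$. So we attach $v$ at the top of $x$'s subtree, using a new node of weight $2d$. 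Concretely, if $x$ is the root, we create a new root $y$ with $\varphi(y)=2d$ and children $x$ and $v$. Otherwise, $x$'s parent $p$ has $\varphi(p)\ge 2d$, and we subdivide the edge $p$–$x$ with a new node $y$ of weight $2d$, whose children are $x$ and $v$. This preserves the weight ordering along root paths.

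\textbf{Step 2 (non-contraction).} For $w$ in the subtree of $x$, we get $d_T(v,w)=2d\ge d(v,u)+d(u,w)\ge d(v,w)$, using $d_T(u,w)\le\varphi(x)<2d$ and the non-contraction already established for the pair $u,w$. For $w$ outside that subtree, $d_T(v,w)=\varphi(\lca(x,w))=d_T(u,w)\ge d(u,w)$. We then need $d(u,w)\ge d(v,w)/2$, which follows from $d(v,w)\le d+d(u,w)\le 2d(u,w)$. To make this step work, I would replace the threshold $2d$ by $3d$, or scale the new weights by a constant. This gives non-contraction up to a constant, which we fix by a global rescaling.

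\textbf{Step 3 (expansion).} We prove by induction that every pair present after $k$ points satisfies $d_T\le 2^{k}d$, up to constants. For $w$ in the subtree of $x$, $d_T(v,w)=2d\le 2d(v,w)$. For $w$ outside it, $d_T(v,w)=d_T(u,w)\le 2^{k-1}d(u,w)\le 2^{k-1}\cdot 2\,d(v,w)$. Here we use $d(u,w)\le d+d(v,w)\le 2d(v,w)$, where $d\le d(v,w)$ holds because $u$ is nearest to $v$. Existing pairs keep their distances. Hence the expansion at most doubles with each arrival, giving $2^{n}$. This uses no knowledge of $n$.

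\textbf{Main obstacle.} The hardest part is balancing the constants in Steps 2 and 3 (the choice $2d$ versus $3d$, and whether $x$ is defined by a strict or non-strict threshold) so that the bound comes out exactly $2^{n}$ rather than $c^{n}$. I would first try the threshold $2d$ with non-contraction argued via $d_T(u,w)\ge d(u,w)$ and the ultrametric triangle property. If that fails, I would accept a factor-$2$ rescaling at the first step, since it only shifts the exponent by one.
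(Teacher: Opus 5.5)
There is a genuine gap in Step 2, and it is not a matter of constants. For $w$ outside the subtree of $x$, your rule keeps $d_T(v,w)=d_T(u,w)$, while $d(v,w)$ can be as large as $d(u,w)+d$. What you actually show is $d_T(v,w)\ge d(v,w)/2$ \emph{given} that $(u,w)$ is non-contractive. At later arrivals $(v,w)$ plays the role of the old pair, so these losses accumulate. They do not stay a fixed constant that one global rescaling could undo.

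Concretely, let $0,1,2,3,\dots$ arrive in this order on the line. Each new point $k$ has nearest neighbour $k-1$ at distance $1$. No proper ancestor of the leaf $k-1$ has weight below $2$, so $k$ is hung below a fresh node of weight $2$. A non-strict threshold instead creates a new root of weight $2$, with the same effect. All tree distances therefore equal $2$, while $d(0,k)=k$. The contraction is $k/2$, and any multiplier $c$ in place of $2$ gives roughly $n/c$.

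The first case of Step 2 has the same defect. There $d(u,w)$ is only bounded by $2d$, so $d(v,w)<3d$. Raising the threshold to $3d$ just gives $d(v,w)<4d$, and the problem repeats.

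The missing idea is a non-contraction slack that shrinks over time, combined with an attachment height that grows with the arrival index. Some slack is unavoidable in a strict embedding. Suppose $d_T(u,w)=d(u,w)$ and $v$ lies just beyond $u$ on the line from $w$ through $u$. Non-contraction gives $d_T(v,w)\ge d(v,w)>d_T(u,w)$. The ultrametric inequality then forces $d_T(u,v)\ge d(u,w)+d(u,v)$, which is unbounded relative to $d(u,v)$.

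The paper maintains $(1+2^{-(i-1)})\,d\le d_T\le 2^i d$ after $i$ arrivals. It attaches the $i$th point $v$ at the \emph{lowest} height on the root path of $u$ for which $d_T(v,w)\ge(1+2^{-(i-1)})d(v,w)$ holds for \emph{every} existing $w$. The slack is what bounds this height. Any $w$ that forces it satisfies $(1+2^{-(i-2)})d(u,w)\le d_T(u,w)\le(1+2^{-(i-1)})(d+d(u,w))$. Hence $d(u,w)\le(2^{i-1}+1)d$, and the height is at most $(2^{i-1}+4)d\le 2^i d$ for $i\ge 3$. Your Step 3 recursion then yields $2^i$.

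Equivalently, your top-of-subtree rule can be repaired. Give the new node weight $2^i d$, take $x$ to be the highest ancestor of weight below $2^i d$, and maintain the invariant $d_T\ge(1+2^{-(i-1)})d$. The gap $2^{-(i-2)}-2^{-(i-1)}$ between consecutive slacks then absorbs the extra $d$ in both cases of Step 2.
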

\begin{proof}
    We present a construction. Upon the arrival of the $i$th point $x$, find the nearest point $u$ in the original metric space. Attach $x$ to the ancestor of $u$ with the smallest value such that for all $v$, $d_T(x, v) \geq \left(1+2^{-(i-1)}\right) d(x, v)$. 
    
    We prove by induction that after the arrival of the $i$th point, for every pair $u, v$, we have $d_T(u, v)\in [(1+2^{-(i-1)})d(u, v), 2^i d(u, v)]$. For $i=1,2$, the claim is trivial. Assuming it holds for $i > 2$, the lower bound is straightforward by construction. Now we consider the upper bound.

    Let $S = \br {v \mid  d_T(u, v) \leq \left(1+2^{-(i-1)}\right) d(x, v)}$, then $d_T(x, u) = \max_{v \in S}\br {\left(1+2^{-(i-1)}\right) d(x, v)}$.
    
    We now establish a bound for $d_T(x, u)$. For every $v \in S$,
    \begin{align*}
        \left(1+2^{-(i-2)}\right)  d(u,v) \leq d_T(u, v) &\leq \left(1+2^{-(i-1)}\right)  d(x, v) \\
        &\leq \left(1+2^{-(i-1)}\right) (d(x, u) + d(u, v)),
    \end{align*}
    we deduce that $\left(1+2^{-(i-1)}\right)  d(x, u) \geq 2^{-(i-1)} d(u, v)$. Thus
    \[d(x, v) \leq d(x, u) + d(u,v) \leq \left( 1 + 2^{i-1}\left(1+2^{-(i-1)}\right) \right) d(x, u)=(2^{i-1}+2)d(x, u).\]
    Therefore, we conclude
    \begin{align*}
        d_T(x, u) = \left(1+2^{-(i-1)}\right)  d(x, v) &\leq \left(1+2^{-(i-1)}\right) \left( 2^{i-1} + 2\right) d(x, u) \leq \left(2^{i-1} + 4\right) d(x, u)
    \end{align*}
    Given that $i \geq 3$, we conclude that $d_T(x, u) \leq 2^i d(x, u)$.
    
    For $v \neq u$, we have $d_T(x, v) = \max(d_T(x, u), d_T(u, v))$.
    Since $d_T(x, u) \leq 2^t d(x, u) \leq 2^i d(x, v)$
    and 
    \begin{align*}
        d_T(u, v) \leq 2^{i-1} d(u, v) &\leq 2^{i-1} (d(x, u) + d(x, v)) \\
        &\leq 2^{i-1}(d(x, v) + d(x, v)) = 2^i d(x, v),
    \end{align*}
    we conclude $d_T(x, v) \leq 2^i d(x, v)$, which completes our proof.
\end{proof}

\subsection{Deterministic Online Monotone Embedding}
\label{appendix: deterministic}
We first define some properties of deterministic partitions and relate them to HST embedding. Recall the definition of $s$-bounded to compare with.

\begin{definition}
    A partition $C = \br {C_j}$ of points $V$ on a metric $d$ is $t$-connected if for all $u, v$ such that $d(u, v) \leq t$, $C(u) = C(v)$.
\end{definition}

In other words, $s$-bounded says that the points in the same cluster have distances of at most $s$ from each other, and $t$-connected says that the points in different clusters have distances of more than $t$ from each other.
    
\begin{lemma}
\label{thm: rela - det}
    Consider a set of scales $\br{s_j}$, partitions $\br{C^j}$ for every $j \in \mathbb Z$, and the induced HST embedding:
    \begin{enumerate}
        \item if for all $s$, the partition $C$ at scale $s$ is $\alpha\cdot s$-bounded for some constant $\alpha \geq 1$, then the contraction of the embedding is at most $\alpha$. In particular, the HST embedding is non-contractive if and only if for all scale $s$, the partition is $s$-bounded.
        \item if for all $s$, the partition $C$ at scale $s$ is $s / \beta$-connected for some constant $\beta \geq 1$, and there exists $j$ such that $s_j = \beta d(u, v)$, then the expansion of the constructed embedding is at most $\beta$.
    \end{enumerate}
\end{lemma}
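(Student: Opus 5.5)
The plan is to read off both parts from the structure of the induced HST. By construction, the distance between two points $u,v$ is $d_T(u,v)=s_{j^*}$, where $j^*$ is the lowest level at which $u$ and $v$ lie in the same (refined) cluster. Equivalently, $u$ and $v$ are separated at level $j^*-1$ and at every level below it, and they are together at $j^*$ and every level above it (the refinement makes the hierarchy nested). So both contraction and expansion reduce to locating this first merging level relative to $d(u,v)$.

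\emph{Contraction.} I will show that $u,v$ cannot share a cluster at any scale $s<d(u,v)/\alpha$, and then check that the weights respect this.
\begin{itemize}
\item Suppose $u,v$ share a refined cluster at scale $s$. That refined cluster lies inside an unrefined cluster of $C$ at scale $s$, which has diameter at most $\alpha s$. Hence $d(u,v)\le \alpha s_{j^*}=\alpha\,d_T(u,v)$, so the contraction is at most $\alpha$.
\item The ``in particular'' direction is the case $\alpha=1$.
\item For the converse, suppose some scale-$s$ partition has a cluster containing $u,v$ with $d(u,v)>s$. Then $d_T(u,v)\le s<d(u,v)$, so the embedding is contractive.
\end{itemize}
The one subtlety is that a refined cluster is a subset of an original cluster, so boundedness is inherited. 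I will also check that leaf weights are $0$ only at level $Y$, where all clusters are singletons, so distinct points never get $d_T=0$.

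\emph{Expansion.} Fix $u,v$ and take the level $j$ with $s_j=\beta d(u,v)$. The partition at this scale is $s_j/\beta=d(u,v)$-connected, so $u$ and $v$ are in the same cluster of $C^j$. The same holds at every higher level $j'\ge j$, provided the scales increase with the level, since $s_{j'}/\beta\ge d(u,v)$. Therefore $u$ and $v$ are never separated by any partition at level $j$ or above, so they lie in the same refined cluster at level $j$. This gives $d_T(u,v)\le s_j=\beta d(u,v)$.

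The main obstacle is this last step. Connectivity at scale $s_j$ alone does not keep $u,v$ together in the refinement. I must invoke the hypothesis at all higher scales, which needs $s_j$ monotone in $j$. I will state this monotonicity explicitly, as in the construction where $s_j=2^j$, or note that it is implicit in the nesting of the hierarchy.
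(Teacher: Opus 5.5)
Your treatment of the two numbered parts follows the paper's argument. $d_T(u,v)$ is the weight $s_{j^*}$ of the lowest level at which $u,v$ share a refined cluster. Boundedness of the enclosing unrefined cluster at that level gives $d(u,v)\le \alpha\, d_T(u,v)$. Connectedness at $s_j=\beta d(u,v)$ and at every higher scale keeps $u,v$ together in the refinement at level $j$, giving $d_T(u,v)\le \beta d(u,v)$. The paper uses the monotonicity of the scales that you flag only implicitly (``at scale $s_j=\beta d(u,v)$ or above''). It is in any case forced by the HST weight condition on the levels that actually occur.

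The one step that does not go through as written is your converse of the ``in particular'' claim, which the paper does not prove. Suppose ``the partition at scale $s$'' means the given partition $C^j$. Then $u,v$ sharing a cluster of $C^j$ does not give $d_T(u,v)\le s_j$, because some $C^{j'}$ with $j'>j$ may separate them. So they need not share a refined cluster at level $j$.

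Under this reading the ``only if'' is in fact false. Take $s_j=2^j$ with $d(u,v)=d(u,w)=3$ and $d(v,w)=1$. Let $C^j$ be all singletons for $j\le 0$, $C^1=\{\{u,v\},\{w\}\}$, $C^2=\{\{u\},\{v,w\}\}$, and $C^j=\{\{u,v,w\}\}$ for $j\ge 3$. The induced embedding has $d_T(u,v)=d_T(u,w)=8$ and $d_T(v,w)=4$. It is therefore non-contractive, although $C^1$ is not $2$-bounded.

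Your argument is correct once the partitions in the converse are read as the refined, nested ones. You should state this explicitly, as you already do in the expansion part.
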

\begin{proof}
    For two points $u, v \in V$, the tree distance $d_T(u, v)$ is at least the smallest scale $s$ in which $C(u) = C(v)$. Suppose this scale is $s_j$. If the partitions for all scales are $\alpha \cdot s$-bounded, then $ d(u, v) \le \alpha s_j \le \alpha d_T(u, v)$. Hence, the contraction is at most $\alpha$.
    
    On the other hand, suppose the partition at each scale $s$ is $s / \beta$-connected and the scale $\beta d(u, v)$ is present, then at scale $s_j = \beta d(u, v)$ or above, the partition must have $C(u) = C(v)$. Thus $d_T(u, v) \leq s_j \leq \beta d(u, v)$. Note that it is necessary that $\beta d(u, v)$ is one of the scales. Otherwise, $d_T(u, v)$ can only be upper bounded by the first scale $s_j \geq \beta d(u, v)$.
\end{proof}

We are ready to prove \cref{thm: deterministic}. We first prove the non-expansive monotone embedding with distortion $n-1$. 
\Deterministic*
\begin{proof}[Proof with known $n$]
    We prove a stronger result, a non-expansive embedding with contraction $n-1$ without the knowledge of $n$. Clearly, one can turn this into non-contractive by scaling all the distances by $n-1$.
    
    Since we have no knowledge of $n$, we maintain an embedding of the first $i$ points $V_t$ with distortion $i-1$ by maintaining a set of scales and a set of partitions of points $V_t$ for each scale. The set of scales will be $\br {d(u, v) \mid u, v\in V_t}$, the set of pair-wise distances. For each scale $s_j$, we maintain an $(i-1)s_j$-bounded and $s_j$-connected partition. Recall that this means for each cluster $S \in C^j$ and $u, v \in S$, $d(u, v) \in [s_j, (i-1)s_j]$. By \cref{thm: rela - det}, this guarantees a non-expansive embedding with contraction $i-1$. This bound is tight as $n-1$ distortion matches the offline optimum.

    The algorithm proceeds by iterating over all pairwise distances $d(u,v)$ in decreasing order after each point's arrival. For each scale $s_j$, we construct a graph linking points $u, v$ if $d(u, v) \leq s_j$. The partition at this scale is induced by the connected components.
    It is clear that if two points are in the same cluster at time $i-1$, they will remain in the same cluster at time $i$, as we only add more edges when a new point arrives. Thus, the partition satisfies the monotonicity requirement.

    This bound is tight since it matches the offline lower bound.
\end{proof} 

Interestingly, without knowledge of $n$, directly constructing a non-contractive online monotone embedding turns out to have a more complex distortion bound.
\begin{proof}[Proof for the non-contractive case]
    We first show necessity. Consider $n$ points aligned on a line. Let the first point be $v_1$ at coordinate $0$. Introduce a new point $v_2$ at coordinate $1/f(2)$. Since the algorithm lacks foreknowledge of $n$, the distortion must not exceed $f(2)$ instead of $f(n)$, implying $d_T(v_1, v_2) \leq d(v_1, v_2) \cdot f(2) = 1$, and it cannot increase subsequently. Similarly, add $v_3$ at coordinate $1/f(2) + 1/f(3)$, ensuring $d_T(v_2, v_3) \leq 1$. This pattern continues for $v_t$ at coordinate $\sum_{j=2}^t 1/f(j)$, and we have $d_T(v_{t-1}, v_t) \leq 1$. Since the HST is an ultrametric, we deduce:
    \[
    d_T(v_1, v_n) \leq \max_{i=1}^{n-1} \{d_T(v_t, v_{i+1})\} \leq 1,
    \]
    yet $d(v_1, v_n) = \sum_{i=2}^{n} 1/f(i)$. Therefore, for any $f$ whose reciprocal sum diverges, a non-contractive guarantee is unfeasible.

    Next, we demonstrate an algorithm achieving distortion $O(f(n))$ for $f$ whose reciprocal sum converges. Specifically, the distortion will be $f(n) / (\sum_{i=2}^\infty 1/f(i))$. Essentially, we can attain exactly $f(n)$ if $f$ is non-decreasing and $\sum_{n=2}^\infty 1/f(n) \leq 1$. Assuming such conditions, after the $i$th point arrives, we maintain a set of scales $S = \br {d(u, v) \cdot f(k) \mid u, v \in V_t, 1 \leq k \leq i}$. For each scale $s$, we maintain a partition that is $s$-bounded and $s / f(i)$-connected.
    
    At scale $s$, we construct the graph such that an edge exists between $v_j, v_k$ if $j < k$ and $d(v_j, v_k) \leq s/ f(k)$. This process can be visualized as each new point $v_k$ linking to all preceding points $v_j$ if the distance is at most $s/f(k)$. The partition is again the connected components in the graph. Since we only insert edges over time, the partition satisfies the monotone constraints.
    
    The partition is clearly $s / f(i)$-connected from the construction since $f(i) \geq f(j)$ for $i > j$, and scale $s = f(i)\cdot d(u, v)$ is present. To prove $s$-boundedness, we note that the longest chain possible is $\sum_{i=2}^t s/f(i) \leq s$. By Theorem \ref{thm: rela - det}, this partition implies a non-contractive HST embedding with distortion $O(f(n))$.
\end{proof}

\begin{remark}
\label{rem: don't know n}
    The bound for non-contractive embeddings is not straightforward to articulate, as there is not a largest converging series \cite{Ash97}. However, a characterization can be given. Let $\log^{(k)} n$ represent taking $k$ times the logarithm of $n$, ensuring each result is at least $1$, i.e.,
    \[
    \begin{aligned}
    \log^{(0)} n &= n,\\
    \log^{(k+1)} n &= \max(\log (\log^{(k)}n), 1).
    \end{aligned}
    \]
    
    Let $k \geq 0$, the reciprocal sum $\sum_{i=2}^\infty 1/f(i)$ diverges for \[f_k(n) = \prod_{i=0}^k \log^{(i)} n\] but converges for \[g_k(n) = f_k(n) \cdot \log^{(k)} n.\] 
    For $k=1$, this yields a simplified result of $\Omega(n\log n)$ and $O(n\log^2 n)$.
\end{remark}

    We find the contrast between non-contractive and non-expansive embeddings quite interesting, and we provide some intuitive explanations here.

    First, if we require the embedding to be non-contractive, then knowing $n$ has $O(n)$ distortion while not knowing $n$ has $\Omega(n\log n)$. The reason is that for any fixed $n$, at any time step $t < n$, if $n$ is known, the distortion could be $f(n)$, but otherwise the distortion has to be $f(t)$. Thus, knowing $n$ gives a more relaxed constraint.
    
    Second, if the algorithm does not know $n$, then non-contractive embedding has distortion $\Omega(n \log n)$ but non-expansive embedding has distortion $O(n)$. This is due to our constraint that distances do not increase. Consider a pair of points $u, v$. If the embedding is non-expansive, the valid range for $d_t(u, v)$ after the $t$th point arrives and after $(i+1)$th point arrives is $[d(u,v)/(t-1), d(u,v)]$ and $[d(u,v)/t, d(u,v)]$, respectively. Note that this change is in the same direction as what we allow, decreasing the distances. Thus, if the algorithm finds a benefit in having a smaller $d_T(u, v)$ after more points arrive, it could do so. However, if the embedding is non-contractive, the range for $d_t(u, v)$ is $[d, (t-1)d]$ and $[d, td]$. Unfortunately, we do not allow an increase in the distance, and thus the algorithm cannot make use of the new degree of freedom. This explains the different behaviors between non-contractive and non-expansive embeddings.

\section{Application}
\label{appendix: application}
Here we include the full proof of \cref{thm: reduction}.
\begin{proof}
    Recall the definition of $\seq M$, $\Alg^{\seq M}$ and $\Alg$. Let $\seq b\in A^m$ be the answer sequence of the optimal offline algorithm for the original game $G_N$. Let $\gamma(\seq M) \in \D(A^m)$ be the distribution of the answers given by $\Alg^{\seq M}$. We use $\E_{\seq M}$ to denote expectation over the randomness of the embedding and use $\E_{\seq a \sim \gamma(\seq M)}$ for expectation over the randomness of $\Alg^{\seq M}$.

    \begin{claim}
    \label{claim: step1}
        For each $\seq M$,
    \begin{align*}
        \E_{\seq a\sim \gamma(\seq M)}\left[\sum_{t=1}^m c^{M_t}_t(\seq r_{[0, t]}, \seq a_{[0, t]}) \right] \leq \rho \sum_{t=1}^m c^{M_t}_t(\seq r_{[0, t]}, \seq b_{[0, t]})+\Phi_0.
    \end{align*}
    \end{claim}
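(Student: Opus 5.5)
We prove \cref{claim: step1} by a telescoping argument over the potential inequality~\eqref{eq:potential}, with the monotonicity hypothesis of \cref{thm: reduction} used to bridge the change of metric between consecutive steps. Fix $\seq M=(M_1,\dots,M_m)$, and let $\gamma_t\in\D(A^t)$ be the distribution of the first $t$ answers of $\Alg^{\seq M}$, with $\gamma_0$ the trivial distribution. Write $\Phi_0$ for the initial potential value $\Phi_0^{M_1}(\varnothing,\gamma_0,\varnothing)$.

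By construction, at step $t$ the algorithm extends $\gamma_{t-1}$ to $\gamma_t$ so that inequality~\eqref{eq:potential} holds for the game on $M_t$ with potential $\Phi^{M_t}$. Taking $\seq b$ to be the fixed offline sequence, this gives
\[
\E_{\seq a\sim\gamma_t}\bigl[c_t^{M_t}(\seq r_{[0,t]},\seq a)\bigr]+\Phi^{M_t}_t(\seq r_{[0,t]},\gamma_t,\seq b_{[0,t]})-\Phi^{M_t}_{t-1}(\seq r_{[0,t-1]},\gamma_{t-1},\seq b_{[0,t-1]})\le\rho\, c_t^{M_t}(\seq r_{[0,t]},\seq b_{[0,t]}).
\]
Here we assume $c_t^{M_t}$ depends only on the first $t$ answers, so the expectation under $\gamma_t$ equals the corresponding term of the full expectation under $\gamma(\seq M)$.

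To telescope, we need $\Phi^{M_t}_{t-1}(\cdot)\le\Phi^{M_{t-1}}_{t-1}(\cdot)$ at the same arguments. The online monotone embedding gives $d_t\le d_{t-1}$ on $V(\seq r_{[0,t-1]})$, which is exactly the domain on which $M_{t-1}$ is defined and on which $\Phi_{t-1}$ is evaluated. So $M_{t-1}$ dominates $M_t$ restricted to the old points. The second hypothesis of \cref{thm: reduction} then yields the needed inequality.

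There is a subtlety here, and it is the main obstacle. The hypothesis compares potentials for metrics on possibly different point sets, and the domination $M$ over $M'$ is defined only on the common points. We will interpret $\Phi^{M_t}_{t-1}$ on the prefix request $\seq r_{[0,t-1]}$ as depending only on distances among $V(\seq r_{[0,t-1]})$, since the game $G_M$ is restricted to requests whose relevant points lie in $V_M$. If needed, we apply the hypothesis to the restriction $M_t|_{V(\seq r_{[0,t-1]})}\in\mc M$, so that both metrics live on the same set and domination holds pointwise. For HSTs a restriction is again an HST, so this is fine.

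Summing over $t=1,\dots,m$, the potential terms telescope to $\Phi^{M_m}_m(\cdots)-\Phi_0$. Since this final potential value is nonnegative, dropping it gives the claimed inequality.
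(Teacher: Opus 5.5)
Your proposal is correct and follows essentially the same route as the paper: apply \eqref{eq:potential} for $\Phi^{M_t}$ at each step, use monotonicity to replace $\Phi^{M_t}_{t-1}$ by $\Phi^{M_{t-1}}_{t-1}$, telescope, and drop the nonnegative final potential. The point-set subtlety you flag is already resolved by the paper's definition of domination, which compares distances only on common points. Hence $M_{t-1}$ dominates $M_t$ directly, and your restriction detour is unnecessary; it would in fact require an extra, unstated assumption that $\Phi^{M_t}_{t-1}$ depends only on distances among old points.
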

    \begin{proof}
        Let $\gamma_t \in \D(A^t)$ be the distribution of the first $i$ answers according to $\gamma(\seq M)$. For every $i$,
        \begin{align*}
            \E_{\seq a_{[0, t]}\sim\gamma_{t}}\left[c_{t}^{M_{t}}(\seq r_{[0, t]}, \seq a_{[0, t]})\right] &\le \rho c_{t}^{M_t}(\seq r_{[0, t]}, \seq b_{[0, t]})-\Phi_{t}^{M_{t}}(\seq r_{[0, t]}, \gamma_{t}, \seq b_{[0, t]})+\Phi_{t-1}^{M_{t}}(\seq r_{[0, t-1]},\gamma_{t-1},\seq b_{[0, t-1]})\\
            &\le \rho c_{t}^{M_t}(\seq r_{[0, t]}, \seq b_{[0, t]})-\Phi_{t}^{M_{t}}(\seq r_{[0, t]}, \gamma_{t}, \seq b_{[0, t]})+\Phi_{t-1}^{M_{t-1}}(\seq r_{[0, t-1]},\gamma_{t-1},\seq b_{[0, t-1]}),
        \end{align*}
        where the first inequality is by definition of $\Alg^{\seq M}$ and the second inequality is by monotonicity of the potential function. Summing up for each $1\le i\le m$ achieves the claim.
    \end{proof}
    
    Since the embedding is non-contractive, the cost of $\Alg^{\seq M}$ in $G_N$ is upper bounded by its cost in $G_{\seq M}$:
    \begin{claim}
    \label{claim: step2}
    For every $\seq M$ and $i$, $\E_{\seq a\sim \gamma(\seq M)} [c_t^{N}(\seq r_{[0, t]}, \seq a_{[0, t]})] \leq \E_{\seq a\sim\gamma(\seq M)}[c_t^{M_t}(\seq r_{[0, t]}, \seq a_{[0, t]})]$.
    \end{claim}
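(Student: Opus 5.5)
The claim should follow pointwise, for each fixed answer sequence $\seq a$ and each fixed $\seq M$, by comparing the two cost expressions term by term. Taking expectations over $\seq a\sim\gamma(\seq M)$ then preserves the inequality. This needs no property of $\Alg^{\seq M}$ beyond its answers being well defined. So I fix $\seq M$, the step $t$, and an arbitrary $\seq a\in A^t$ in the support.

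By \cref{def:requestAnswer},
\[
c_t^{N}(\seq r_{[0,t]},\seq a_{[0,t]})=\sum_{(u,v)}\alpha_t(u,v,\seq r_{[0,t]},\seq a_{[0,t]})\,d_N(u,v)+\beta_t(\seq r_{[0,t]},\seq a_{[0,t]}),
\]
and $c_t^{M_t}$ is the same expression with $d_N$ replaced by $d_t$. The $\beta_t$ terms coincide, since they do not depend on the metric. The coefficients $\alpha_t$ are identical and nonnegative. It therefore suffices to check $d_N(u,v)\le d_t(u,v)$ for every pair with $\alpha_t(u,v,\cdot)>0$.

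Such a pair lies in $V(\seq r_{[0,t]})$ by the definition of relevant points, and this is exactly the point set that $M_t$ embeds. Here I use that the embedding is applied with incremental arrivals, so at step $t$ the alive set equals $V(\seq r_{[0,t]})$. Non-contractiveness (property \ref{it:noncontractive} of \cref{def: embedding}) then gives $d_N(u,v)\le d_t(u,v)$.

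If $\beta_t=\infty$, both sides are infinite and the inequality holds trivially. Otherwise, summing the termwise bounds with nonnegative weights gives the pointwise inequality, and averaging over $\seq a$ gives the claim.

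The only step needing care is confirming that every pair receiving positive weight at step $t$ is actually embedded by $M_t$, and not only by some later $M_{t'}$. This is exactly guaranteed by the definition of $V(\cdot)$, which quantifies over $j\le t$. In the contractive variant of \cref{app:contractions}, the same argument yields a factor $\lambda_c$.
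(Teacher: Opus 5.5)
Your proof is correct and follows the paper's argument. The paper justifies this claim only by the remark that the embedding is non-contractive. Your termwise comparison spells that remark out: identical nonnegative coefficients $\alpha_t$, identical $\beta_t$, and $d_N(u,v)\le d_t(u,v)$ on every pair with positive weight, such pairs lying in $V(\seq r_{[0,t]})$, the point set of $M_t$.
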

    
    Finally, we bound the offline cost in $G_{\seq M}$ by its cost in $G_N$.
    \begin{claim}
    \label{claim: step3}
        For every $i$, $\E_{\seq M}[c_t^{M_t}(\seq r_{[0, t]}, \seq b_{[0, t]})] \leq \lambda c_t^{N}(\seq r_{[0, t]}, \seq b_{[0, t]})$.
    \end{claim}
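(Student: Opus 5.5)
The plan is to exploit linearity of the cost in the distances, together with the independence of the offline answers $\seq b$ from the embedding's randomness. By Definition~\ref{def:requestAnswer}, for fixed $t$, $\seq r$ and $\seq b$ we can write
\[
c_t^{M_t}(\seq r_{[0, t]}, \seq b_{[0, t]}) = \sum_{(u,v)} \alpha_t(u,v,\seq r_{[0, t]},\seq b_{[0, t]})\, d_t(u,v) + \beta_t(\seq r_{[0, t]},\seq b_{[0, t]}).
\]
The analogous expression for $c_t^N$ has $d_N(u,v)$ in place of $d_t(u,v)$. The coefficients $\alpha_t$ and $\beta_t$ depend only on the requests and on $\seq b$. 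Here $\seq b$ is the optimal offline answer sequence for $G_N$, fixed in advance and independent of $\seq M$. The adversary is oblivious, so $\seq r$ is also fixed. Hence $\alpha_t$ and $\beta_t$ are deterministic with respect to $\E_{\seq M}$.

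First, I restrict the sum to pairs with $\alpha_t>0$. By the definition of $V(\seq r)$, every such pair has $u,v\in V(\seq r_{[0,t]})$, so $u,v$ are points of $M_t$ and $d_t(u,v)$ is defined. Second, I take expectation over $\seq M$ and use linearity. The $\beta_t$ term is unchanged, and each term $\alpha_t\,\E_{\seq M}[d_t(u,v)]$ is at most $\alpha_t\,\lambda\, d_N(u,v)$ by the distortion guarantee of the probabilistic online monotone embedding (Definition~\ref{def: embedding}). This guarantee applies at every time $t$ and every pair of points embedded at that time, because $|V(\seq r)|\le n$ and the embedding has distortion $\lambda$ for sequences of $n$ points. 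Third, since $\lambda\ge 1$ (non-contractive embeddings force this) and $\beta_t\ge 0$, we get $\beta_t\le\lambda\beta_t$. Summing then gives $\E_{\seq M}[c_t^{M_t}]\le \lambda c_t^N$.

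The only subtle point is the independence argument. The distortion bound holds in expectation for a fixed pair, so the coefficients must not correlate with the embedding's randomness. That is why the claim is stated for the offline $\seq b$, which does not depend on $\seq M$, rather than for the algorithm's answers. If some $\alpha_t$ or $\beta_t$ is infinite, both sides are infinite and the inequality holds trivially. Combining this claim with Claims~\ref{claim: step1} and~\ref{claim: step2} and taking $\E_{\seq M}$ then yields the $\rho\lambda$ bound.
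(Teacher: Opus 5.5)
Your proposal is correct and follows the paper's proof essentially step for step. You expand $c_t^{M_t}$ via the $\alpha_t,\beta_t$ decomposition, move the expectation inside because these coefficients depend only on the fixed requests and the fixed offline answers, apply the pairwise bound $\E[d_t(u,v)]\le\lambda d_N(u,v)$, and absorb $\beta_t\le\lambda\beta_t$; your two extra remarks (pairs with $\alpha_t>0$ lie in the relevant set, so $d_t$ is defined, and $\lambda\ge 1$) just make explicit what the paper leaves implicit.
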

    \begin{proof}
    Let us expand $c_t^{M_t}$:
        \begin{align*}
            \E_{\seq M}[c_t^{M_t}(\seq r_{[0, t]}, \seq b_{[0, t]})] &= \E_{\seq M}\left [
    \sum_{u, v} \alpha_t(u, v, \seq r_{[0, t]}, \seq b_{[0, t]}) d_t(u, v) + \beta_t(\seq r_{[0, t]}, \seq b_{[0, t]})  \right]\\
    &= \sum_{u, v} \alpha_t(u, v, \seq r_{[0, t]}, \seq b_{[0, t]}) \E_{\seq M}[d_t(u, v) ] + \beta_t(\seq r_{[0, t]}, \seq b_{[0, t]})\\
    &\le \sum_{u, v} \alpha_t(u, v, \seq r_{[0, t]}, \seq b_{[0, t]}) d_N(u, v) \lambda + \beta_t(\seq r_{[0, t]}, \seq b_{[0, t]}) \lambda\\
    &=\lambda c_t^{N}(\seq r_{[0, t]}, \seq b_{[0, t]}). &\qedhere
        \end{align*}
    \end{proof}
    Combining Claims~\ref{claim: step1}, \ref{claim: step2} and \ref{claim: step3} yields 
    \begin{align*}
        \mathrm{cost}_{\Alg}(\seq r) &=  \E_{\seq M}\left[\E_{\seq a\sim \gamma(\seq M)} \left[\sum_{t=1}^m c_t^N(\seq r_{[0, t]}, \seq a_{[0, t]})\right]\right] \leq \E_{\seq M}\left[\E_{\seq a\sim \gamma(\seq M)}\left[\sum_{t=1}^m c_t^{M_t}(\seq r_{[0, t]}, \seq a_{[0, t]}) \right]\right]\\
        &\leq \E_{\seq M} \left[ \left(\rho \sum_{t=1}^m c_t^{M_t}(\seq r_{[0, t]}, \seq b_{[0, t]}) + \Phi_0\right)\right]\leq \rho \lambda \sum_{t=1}^m c_t^{N}(\seq r, \seq b) + \lambda\Phi_0.
    \end{align*}
    Taking $\eta = \lambda \Phi_0$, we conclude $\mathrm{cost}_{\Alg}(\seq r) \leq \rho \lambda \cdot \mathrm{opt}(\seq r) + \eta$, showing $\Alg$ is $\rho \lambda$-competitive.
\end{proof}

\subsection{\texorpdfstring{The $k$-Server Reduction}{The k-Server Reduction}}
\label{appendix: kserver}
We now prove \cref{thm: k-server}. We employ the results from \cite{Bubeck18}, which provide an \( O(\log^2 k) \)-competitive fractional \( k \)-server algorithm on HSTs and use the online rounding technique from \cite{bansal15polylog} to obtain a randomized online algorithm with a constant factor loss. We first show that their potential function is monotone, and then verify that online rounding can still be performed when the branches of the HST are merged. Both claims are proven in \cite{Lee2018FusibleHA}, which was later withdrawn due to bugs in other sections.

\paragraph{Notations in \cite{Bubeck18}.} The paper first presents a fractional $k$-server algorithm, which is allowed to assign non-integral numbers of servers, and must move a total mass of $1$ to the request location for service. The algorithm maintains a fractional \( k \)-server configuration using the following notations.

We use $V$ to denote the set of vertices, $L$ to denote the set of leaves, $ch_u$ to denote the set of children of $u$, and $N_u$ to denote the set of leaves in the subtree of $u$. For \( u \in V \), define
\begin{align*}
\chi(u) &= \left\{ (v,j) : v\in ch_u, j \in [N_v] \right\},\\
\Lambda &= \left\{ (r,i) : i \in [N_r] \right\} \cup \bigcup_{u \in V} \chi(u).
\end{align*}
With a slight abuse of notation, we sometimes write \( \sum_{i \ge 1} f(x_{u,i}) \) instead of \( \sum_{i \in [N_u]} f(x_{u,i}) \).

The \emph{assignment polytope} on an HST is defined as
\[
\mathcal{A} := 
\left\{
x \in [0,1]^{\Lambda} :
\sum_{i \le |S|} x_{u,i} \le \sum_{(v,j) \in S} x_{v,j} \quad
\forall\, u \in V,\, S \subseteq \chi(u),
\quad
x_{r,i} = \mathbbm{1}_{\{i > k\}} \quad \forall i \ge 1
\right\}.
\]

Finally, for \( u \in V \) and \( x \in \mathcal{A} \), define the \emph{associated server measure}
\( z \in \mathbb{R}_+^{V} \) by
\[
z_u := \frac{1}{1 - \delta} \sum_{i \ge 1} (1 - x_{u,i}).
\]

The paper maintains the $x$ variables using a mirror descent flow. Intuitively, \( x \) represents the fractional anti-server mass. The $z$ variables on the leaves yield a fractional algorithm with \( \frac{k}{1-\delta} = k + \varepsilon \) servers, which the paper shows could be turned into a fractional solution with $k$ servers if $\delta = 1/(3k)$ and $\varepsilon = 1/3$.

To define the potential function, we associate a similar (anti-server) measure $y$ to the offline configuration, where $y_{u, i} = 1$ iff there are less than $i$ offline servers in the subtree of $u$. The potential function employed is \(\Phi(y, x) = C_0 D(y, x) - H(x)\), where $C_0 = \Theta(\log k)$ and
\begin{align*}
    D(y, x) &= \sum_{u \in V \setminus \{r\}} w_u \sum_{i \ge 1} (y_{u,i} + \delta) \log \left( \frac{y_{u,i} + \delta}{x_{u,i} + \delta} \right),\\
    H(x) &=
    \sum_{u \in V \setminus \{r\}} 
    w_u \left[
    \left( z_u + \left( 1 + \tau^{-1} \mathbbm{1}_{\{u \notin \mathcal{L}\}} \right) \varepsilon \right)
    \log\left( z_u + \varepsilon \right)
    + z_u \log\left( z_{p(u)} + \varepsilon \right)
    \right].
\end{align*}

\paragraph{Maintaining the measure under merges.} 
Note that the potential is defined in terms of the measure $x$, so we need to explain how to maintain them under a merge operation. We will consider merging two sibling vertices $u$, $u'$, and call the merged vertex $u^*$. Whenever two vertices \( u, u' \) are merged, merge the respective lists \( \{x_{u,i}\} \) and \( \{x_{u',i}\} \), and sort the combined list in non-decreasing order. Consequently, \(z_{u^*}= z_u + z_{u'} \). It should be easy to verify that the resulting assignment lies within the assignment polytope of the new HST. 

\paragraph{Monotonicity of the potential.} The Bregman divergence term \( D(y, x) \) decreases because we sort the terms, thereby narrowing the distance between \( x \) and \( y \). For \( H(x) \), ignoring the \( \varepsilon \) term gives the simpler form \( \sum w_u z_u \log z_u \). When merging \( u \) and \( u' \), the term \( z_u \log (z_{p(u)} + \varepsilon) + z_{u'} \log (z_{p(u)} + \varepsilon) \) remains unchanged, while
\[
z_u \log z_u + z_{u'} \log z_{u'} \le (z_u + z_{u'}) \log (z_u + z_{u'})
\]
follows from the convexity of \( f(t) = t \log t \). Since the potential involves \( -H(x) \), merging decreases the potential.

We now formalize the above intuitive sketch.

\begin{claim}
    The term \( D(y, x) \) decreases under a merge step.
\end{claim}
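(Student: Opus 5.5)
Only the terms for the merged vertex change, so the plan is to compare the contributions of $u$ and $u'$ before the merge with that of $u^*$ after it. Since $u$ and $u'$ are siblings in an HST, they have the same weight, $w_u = w_{u'} = w_{u^*}$. Terms at every other vertex are unchanged: their lists $\{x_{v,i}\}$ and $\{y_{v,i}\}$ are unaffected, because a subtree's leaf set, and hence its offline server count, is preserved. So it suffices to show
\[
\sum_{i\ge1} g(y_{u^*,i}, x_{u^*,i}) \le \sum_{i\ge 1} g(y_{u,i},x_{u,i}) + \sum_{i\ge1} g(y_{u',i},x_{u',i}),
\]
where $g(a,b) = (a+\delta)\log\frac{a+\delta}{b+\delta}$.

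First I will pin down the offline side. Say $u$'s subtree holds $m$ offline servers and $u'$'s holds $m'$. Then $y_{u,\cdot}$ is the 0/1 vector that is $0$ for $i\le m$ and $1$ afterwards, and similarly for $u'$ and for $u^*$ with threshold $m+m'$. So $y_{u^*}$ is exactly the non-decreasing sort of the multiset union of $y_u$ and $y_{u'}$. By construction, $x_{u^*}$ is likewise the non-decreasing sort of the union of $x_u$ and $x_{u'}$. Both lists are indexed by the leaves of the merged subtree, $N_{u^*} = N_u + N_{u'}$, so the lengths match.

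Next, I will view the right-hand side as the cost $\sum_{(a,b)} g(a,b)$ of a particular pairing between the multiset $Y = y_u \cup y_{u'}$ and the multiset $X = x_u\cup x_{u'}$: each $y_{u,i}$ is paired with $x_{u,i}$, and each $y_{u',i}$ with $x_{u',i}$. The left-hand side is the sorted pairing, which matches the $j$th smallest element of $Y$ with the $j$th smallest element of $X$. The claim then follows from a rearrangement inequality: for $g$ of this form, the sorted pairing minimizes total cost among all bijections.

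To prove the rearrangement inequality, I will use the standard exchange argument. If $a<a'$ and $b>b'$, then swapping to the pairs $(a,b'),(a',b)$ changes the cost by
\[
(a+\delta)\bigl[\log(b+\delta)-\log(b'+\delta)\bigr] - (a'+\delta)\bigl[\log(b+\delta)-\log(b'+\delta)\bigr] = (a-a')\bigl(\log(b+\delta) - \log(b'+\delta)\bigr) < 0.
\]
The $(a+\delta)\log(a+\delta)$ parts are unchanged by the swap, so this is the whole change. Thus $g$ has this "submodular" form, with cross term $-(a+\delta)\log(b+\delta)$. Repeatedly uncrossing inversions strictly decreases the cost and terminates at the sorted pairing, which gives the desired inequality.

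The main subtlety I expect is bookkeeping rather than analysis. I need that $y$ for the merged node really is the sorted union, which holds because $y$ is a threshold vector and the offline counts add. I also need the $x$ entries indexed by $\chi(\cdot)$ at the parent to be handled consistently: merging two children replaces $\{(u,j)\}\cup\{(u',j)\}$ by $\{(u^*,j)\}$, and the sum over $\chi(p(u))$ inside $D$ is just the vertex-indexed sum restated. Once these identifications are in place, the exchange argument finishes the proof.
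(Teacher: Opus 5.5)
Your proposal is correct and follows essentially the same route as the paper. Both treat the pre-merge contributions of $u$ and $u'$ as one particular pairing of the combined multisets of $y$- and $x$-entries, and use the submodularity of $(a,b)\mapsto b\log(b/a)$ to conclude that the sorted pairing at $u^*$ is cheapest. The paper cites Topkis's theorem where you give the elementary uncrossing argument. Your bookkeeping makes explicit what the paper leaves implicit: $y_{u^*}$ is the sorted union of the two threshold vectors, no other vertex's term changes, and $w_u=w_{u'}$. The last point holds because merged siblings sit on the same level of the induced HST, not because siblings share weights in arbitrary HSTs.
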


\begin{proof}
     Consider function \( f(a, b) = b \log (b/a) \). Then \( f \) is submodular since
    \[
    \frac{\partial^2 f}{\partial x \partial y} = -\frac{1}{a} < 0.
    \]
    Hence, by Topkis's Theorem, the sum \( \sum f(a_i, b_{\pi(i)}) \) is minimized when the sequences are paired in the same order—e.g., both sorted\footnote{Topkis's Theorem says that supermodular functions are maximized when both sequences are sorted, hence submodular functions are minimized in these cases.}. Consider \( u \) and \( u' \). Let \( x_{u^*} \) be the merged and sorted concatenation of \( x_u \) and \( x_{u'} \). Similarly, let \( y_{u^*} \) be the corresponding offline vector after merging, also sorted. We have
    \[
    \sum_i (y_{u,i} + \delta) \log \left( \frac{y_{u,i} + \delta}{x_{u,i} + \delta} \right)
    +  \sum_i (y_{u',i} + \delta) \log \left( \frac{y_{u',i} + \delta}{x_{u',i} + \delta} \right)
    \ge
    \sum_i (y_{u^*,i} + \delta) \log \left( \frac{y_{u^*,i} + \delta}{x_{u^*,i} + \delta} \right). \qedhere
    \]
\end{proof}

\begin{claim}
    The term \( -H(x) \) decreases under a merge step.
\end{claim}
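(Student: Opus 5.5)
We need to show that $-H(x)$ does not increase when two sibling vertices $u,u'$ (children of the same parent $p$) are merged into $u^*$. The plan is to split $H(x)$ into the parts that change and the parts that do not, and reduce to a convexity inequality.

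\textbf{Which terms move.} Under the merge, $z_{u^*}=z_u+z_{u'}$ and $z_p$ is unchanged, because the leaves below $p$ are the same. All other vertices keep their measures. The descendants of $u$ and $u'$ become descendants of $u^*$, so their $z_{p(\cdot)}$ changes. I would handle this by noting that the merge of $u,u'$ recursively merges (or simply reparents) their child lists. Each child $v$ of $u$ or $u'$ becomes a child of $u^*$, so the term $z_v\log(z_{p(v)}+\varepsilon)$ changes from $z_v\log(z_u+\varepsilon)$ to $z_v\log(z_{u^*}+\varepsilon)$. Since $z_{u^*}\ge z_u$ and $\log$ is increasing, this term does not decrease, so it only makes $H$ larger. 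This needs $w_v\ge 0$, which holds. If children are themselves merged recursively, the same argument applies level by level, by induction from the top.

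\textbf{The $u,u'$ terms.} Set $a=z_u$, $b=z_{u'}$, $c=\varepsilon(1+\tau^{-1}\mathbbm 1_{\{u\notin\mathcal L\}})$, and assume $w_u=w_{u'}=w_{u^*}=w$ (siblings at the same level of the HST). The terms $a\log(z_p+\varepsilon)+b\log(z_p+\varepsilon)$ combine exactly into $(a+b)\log(z_p+\varepsilon)$. It then remains to show
\[
(a+c)\log(a+\varepsilon)+(b+c)\log(b+\varepsilon)\le (a+b+c)\log(a+b+\varepsilon).
\]
The left side has two copies of $c$ but the right side has one. The right side dominates termwise: $a\log(a+\varepsilon)\le a\log(a+b+\varepsilon)$, and similarly for $b$. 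What is left is $c\log(a+\varepsilon)+c\log(b+\varepsilon)$ against $c\log(a+b+\varepsilon)$.

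\textbf{Main obstacle.} That last comparison is where the argument can fail. The logs may be negative, or both may be positive and exceed the single log on the right. I would first try the normalization in \cite{Bubeck18}, where $z$ values lie in a range with $\log(z+\varepsilon)$ of controlled sign. If $z_u,z_{u'}\le k+\varepsilon$, the extra $c$-terms are $O(\varepsilon\log k)$ per merge. Failing an exact inequality, I would fall back to bounding this additive error and absorbing it into the constant $\eta$ of the competitive ratio, since the number of merges is bounded. The cleaner alternative is to redefine the merged vertex's offset so that the $\varepsilon$ terms combine additively; I would check that this is compatible with the analysis in \cite{Bubeck18}.
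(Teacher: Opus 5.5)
You correctly identify which terms move. You go slightly beyond the paper here: it only notes that $z_u\log(z_{p(u)}+\varepsilon)+z_{u'}\log(z_{p(u)}+\varepsilon)$ recombines, while you also check that children re-parented to $u^*$ can only increase $z_v\log(z_{p(v)}+\varepsilon)$. You then reduce to exactly the inequality the paper proves. That is where the proposal stops short: you never establish it, and none of your fallbacks proves the claim. A per-merge additive error of order $w_u\varepsilon\log k$ cannot be absorbed into a constant $\eta$, because the number of merges grows with the input and the weights $w_u$ are unbounded. Changing the merged vertex's offset changes $H$ itself. In fact the inequality is true under the paper's hypotheses $\varepsilon\le 1$ and $(1+\tau^{-1})\varepsilon\le 1$, i.e.\ $\varepsilon\le c\le 1$ in your notation. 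Your termwise split fails only because it discards the convexity gains $a\log\frac{a+b+\varepsilon}{a+\varepsilon}$ and $b\log\frac{a+b+\varepsilon}{b+\varepsilon}$; the first of these pays for the surplus copy of $c$.

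Concretely, the right side minus the left side of your inequality equals
\[
b\log\frac{a+b+\varepsilon}{b+\varepsilon}+\Bigl[(a+c)\log\frac{a+b+\varepsilon}{a+\varepsilon}-c\log(b+\varepsilon)\Bigr],
\]
and the first term is nonnegative. The bracket is nonnegative in each of three cases:
\begin{enumerate}
\item If $b+\varepsilon\le 1$, then $-c\log(b+\varepsilon)\ge 0$.
\item If $a+\varepsilon\le 1<b+\varepsilon$, then $\log\frac{a+b+\varepsilon}{a+\varepsilon}\ge\log(b+\varepsilon)>0$ and $a+c\ge c$.
\item If both $a+\varepsilon$ and $b+\varepsilon$ exceed $1$, use $a+c\ge a+\varepsilon$ and the fact that $x\mapsto x\log(1+b/x)$ is nondecreasing. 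This gives $(a+c)\log\bigl(1+\frac{b}{a+\varepsilon}\bigr)\ge\log(1+b)\ge\log(b+\varepsilon)\ge c\log(b+\varepsilon)$.
\end{enumerate}
The paper instead fixes $a$ and treats the difference as a function of $b$. Its value at $b=0$ is $-c\log\varepsilon\ge 0$, which is where $\varepsilon\le 1$ enters. The paper then argues the $b$-derivative is nonnegative using $\log(1+s)\ge s/(1+s)$. If you follow that route, note that the coefficient in the derivative should be $c-\varepsilon$, not $c$. The ratio $\theta$ is then at most $(c-\varepsilon)/\varepsilon\le\tau^{-1}$, so the bound $\theta\le 1$ needs $\tau\ge 1$. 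The case split above needs only $\varepsilon\le c\le 1$.
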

\begin{proof}
    Note that the second term \( z_u \log\left( z_{p(u)} + \varepsilon \right) + z_{u'} \log\left( z_{p(u)} + \varepsilon \right)\) remains unchanged. For the first term, we show that convexity still holds when considering the \( \varepsilon \)-term. To do so, we aim to show that for any numbers \( a, b, c \ge 0 \) and \( 0 \le \varepsilon \le 1 \) such that \( (1 + c)\varepsilon \le 1 \), it holds that
\[
\bigl(a + b + (1 + c)\varepsilon\bigr) \log(a + b + \varepsilon)
\ge 
\bigl(a + (1 + c)\varepsilon\bigr) \log(a + \varepsilon)
+ 
\bigl(b + (1 + c)\varepsilon\bigr) \log(b + \varepsilon).
\]
    Assuming this is true, substitute $a = z_u, b = z_{u'}, c=\tau^{-1}\mathbbm 1_{u \notin L}$ shows that $H(x)$ increases and hence $-H(x)$ decreases. 
    
    We are now left with proving the inequality. Define, for \( t \ge 0 \),
\[
f(t)
=\bigl(a+t+(1+c)\varepsilon\bigr)\log(a+t+\varepsilon)
-\bigl(a+(1+c)\varepsilon\bigr)\log(a+\varepsilon)
-\bigl(t+(1+c)\varepsilon\bigr)\log(t+\varepsilon).
\]
Then \( f(0)=0 \), and it suffices to show \( f'(t)\ge 0 \) for all \( t\ge 0 \). Differentiate:
\[
\begin{aligned}
f'(t)
&= \log(a+t+\varepsilon) + \frac{a+t+(1+c)\varepsilon}{a+t+\varepsilon}
 - \log(t+\varepsilon) - \frac{t+(1+c)\varepsilon}{t+\varepsilon} \\
&= \log\Bigl(1+\frac{a}{t+\varepsilon}\Bigr)
  + (1+c)\varepsilon\left(\frac{1}{a+t+\varepsilon}-\frac{1}{t+\varepsilon}\right) \\
&= \log\Bigl(1+\frac{a}{t+\varepsilon}\Bigr)
  - (1+c)\varepsilon\cdot \frac{a}{(t+\varepsilon)(a+t+\varepsilon)}.
\end{aligned}
\]
Let \( s=\dfrac{a}{t+\varepsilon}\ge 0 \). Then
\[
\frac{a}{(t+\varepsilon)(a+t+\varepsilon)}=\frac{s}{(1+s)(t+\varepsilon)},
\]
so
\[
f'(t)
= \log(1+s)\;-\;\theta\,\frac{s}{1+s},
\qquad
\text{where }\ \theta=\frac{(1+c)\varepsilon}{t+\varepsilon}.
\]
Since \((1+c)\varepsilon \le 1\) and \(t+\varepsilon \ge \varepsilon\), we have \(0\le \theta\le 1\). Noting that \(\log(1+s) \ge \frac{s}{1+s} \ge \theta \frac{s}{1+s}\), we obtain \(f'(t) \ge 0\), as desired.
\end{proof}

\begin{corollary}
\label{corr: potential server}
    The potential function $\Phi(y, x)$ decreases under a merge step.
\end{corollary}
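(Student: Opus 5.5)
The corollary should follow directly from the two preceding claims, since $\Phi(y,x) = C_0 D(y,x) - H(x)$ with $C_0 = \Theta(\log k) > 0$. The plan is to check that a single merge of two sibling vertices $u,u'$ into $u^*$ changes only the terms of $D$ and $H$ indexed by $u$, $u'$ (replaced by $u^*$) and by the parent $p(u)$. We then apply the two claims termwise and add, using that $C_0 > 0$ preserves the direction of the $D$-inequality.

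The order of steps is as follows.

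First, we fix the bookkeeping. After the merge, the weight $w_{u^*}$ equals the common weight of $u$ and $u'$, because siblings merged in the induced HST of \cref{lemma: hierarchy} lie at the same level and carry the same scale weight. The lists $x_{u^*}$ and $y_{u^*}$ are the sorted concatenations. Vertices other than $u,u',u^*$ keep their $x,y,z$ values. In particular $z_{p(u)}$ is unchanged, since $z_{p(u)}$ depends only on the list at $p(u)$. The summands for all other vertices are identical before and after the merge.

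Second, we invoke the two claims. The $D$ claim gives that the $u,u'$ contribution is at least the $u^*$ contribution. Multiplying by $w_u C_0 \ge 0$ preserves this. The $-H$ claim gives the analogous decrease for $-H$. The cross terms $z\log(z_{p(u)}+\varepsilon)$ add up exactly, because $z_{u^*} = z_u + z_{u'}$.

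Third, we sum the two decreases to conclude that $\Phi$ does not increase.

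The only point needing care is the hypothesis $(1+c)\varepsilon \le 1$ used in the $-H$ claim, with $c = \tau^{-1}\mathbbm 1_{u\notin L}$. I would verify this from the parameter choices in \cite{Bubeck18}, where $\varepsilon$ and $\tau$ are constants with $\varepsilon(1+\tau^{-1}) \le 1$. Failing that, I would note that the leaf indicator of $u^*$ coincides with that of $u$ and $u'$, so the same $c$ appears on both sides of the inequality. I would also confirm that merged leaves remain leaves, so the indicator is consistent across the merge. I expect this parameter and indicator check to be the main, if minor, obstacle.
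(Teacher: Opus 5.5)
Your plan matches the paper: the corollary is just $C_0>0$ times the $D$-claim plus the $-H$-claim, and the paper writes nothing beyond that.

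\textbf{One bookkeeping claim is false, though harmlessly.} It is not true that the summands of all vertices other than $u,u',u^*$ are unchanged. For each child $v$ of $u$ or $u'$, the summand of $H$ at $v$ contains $w_v z_v\log(z_{p(v)}+\varepsilon)$. After the merge, $p(v)=u^*$ with $z_{u^*}=z_u+z_{u'}$, so this term changes. The vertices whose summands change are therefore $u,u'$ and their \emph{children}; the summand of $p(u)$ itself is untouched. Since $w_v,z_v\ge 0$ and $\log$ is increasing, these child terms can only grow. That makes $H$ larger and $\Phi$ smaller, so the conclusion survives once you add this line. The paper's proof of the $-H$ claim skips it as well.

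\textbf{Your ``failing that'' fallback does not replace checking $(1+c)\varepsilon\le 1$.} The common leaf status of $u,u',u^*$ is what lets you use a single $c$ on both sides. However, the inequality itself can fail for large $\varepsilon$: take $a=b=0$ and $\varepsilon>1$. Your primary check does go through. With $\varepsilon=1/3$ as quoted in the appendix and any $\tau\ge 1$, one has $(1+\tau^{-1})\varepsilon\le 2/3$.
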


\paragraph{Rounding under merges.} Theorem 5.2 of \cite{bansal15polylog} gives an online rounding of a fractional k-server algorithm on HSTs to a distribution over integral algorithms that only loses a constant factor in the expected cost. Here we show that the results preserve in the presence of merging.

\begin{lemma}[Theorem 5.16 \cite{Lee2018FusibleHA}]
\label{lem:rounding-merge}
An online fractional $k$-server algorithm on HST metrics that may evolve via merges can be transformed into an online randomized integral algorithm, incurring only an $O(1)$ multiplicative increase in cost.
\end{lemma}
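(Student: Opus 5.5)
We reduce to the online rounding scheme of \cite{bansal15polylog} (Theorem 5.2 there) and check that each ingredient survives a merge of two sibling subtrees. That scheme maintains, for every internal node $u$ of the HST, a probability distribution over integral server counts in the subtree of $u$. These distributions are consistent across parent and child: the marginal on the number of servers below $u$ matches the fractional mass $z_u$, in the rounded sense of taking $\lfloor z_u\rfloor$ or $\lceil z_u\rceil$. The scheme pays $O(1)$ times the fractional movement cost at every level.

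The plan is to view a merge of siblings $u,u'$ with parent $p$ into $u^*$ as a purely \emph{structural} update that performs no physical movement. We then show the invariants can be restored by a coupling that moves servers only within subtrees whose pairwise distances did not increase.

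\textbf{Step 1: the integral configuration after a merge.} In each realization of the integral algorithm, the number of servers under $u^*$ is the sum of the numbers under $u$ and $u'$. The leaves are unchanged, so the integral configuration is still valid as a set of leaf positions. The fractional side is handled by the merge rule stated above, where $z_{u^*}=z_u+z_{u'}$ and the $x$-lists are merged and sorted.

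\textbf{Step 2: repairing the rounding invariant.} The sum of two integer counts, each rounded around $z_u$ and $z_{u'}$ respectively, need not be within one of $z_{u^*}$. It can deviate by up to $2$, and by more under repeated merges. I would first try to restore the invariant by a coupled rebalancing step. This step transfers servers only between leaves inside $u^*$, so each move costs at most $\varphi(u^*)$. It also updates the children-level distributions so that their joint law is the one the rounding would have produced for $u^*$ directly.

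\textbf{Step 3: charging the rebalancing cost.} The repair cost of Step 2 must be charged against something. I would charge it to the fractional cost that caused the imbalance, or alternatively to the potential drop of Corollary~\ref{corr: potential server}. The idea is that a discrepancy at $u^*$ arises only because $z_u$ or $z_{u'}$ is fractional. The mass that made it fractional was moved across level $\varphi(u)$ earlier, or across a lower level, paying at least $\varphi(u)$ times that mass. A merge happens at most once per pair of nodes, and each merged node only moves down in weight. Hence an amortization assigning each unit of fractional movement a bounded number of repair charges across the HST's geometric levels (with $\mu=2$) gives a constant overhead.

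\textbf{Main obstacle.} The hard part is Step 3, namely showing that merges do not cause repeated, unbounded rebalancing at high levels. The risk is that a single fractional mass may be charged by many successive merges. To handle this, I would make the rounding lazy: keep an integral count within an additive $O(1)$ of $z$ instead of exactly $\lfloor z\rfloor$ or $\lceil z\rceil$, which tolerates one level of merge discrepancy. I would also restore exactness only when the fractional algorithm next moves mass across that node, so the repair cost is absorbed into that movement at an $O(1)$ factor. Combining these steps with the Step 1 invariants for all untouched nodes gives the $O(1)$ multiplicative loss claimed in Lemma~\ref{lem:rounding-merge}.
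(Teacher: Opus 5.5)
There is a genuine gap in Step 2. After the structural merge, every node other than $u^*$ keeps its server count in every realization. So the only balance condition that can break is the one at $u^*$ itself. The pair $(z'_u,z'_{u'})$ may sit in a bad cell: both-ceil when $\varepsilon_u+\varepsilon_{u'}\le 1$, or both-floor when $\varepsilon_u+\varepsilon_{u'}\ge 1$, where $\varepsilon_u,\varepsilon_{u'}$ are the fractional parts. You propose moving servers only between leaves inside $u^*$, but that leaves $z'_u+z'_{u'}$ unchanged in every realization, so it cannot repair this violation at all. Your phrase ``updating the joint law to what the rounding would have produced for $u^*$'' names exactly the step that needs a construction, without giving one.

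The missing idea, which is what the paper's proof consists of, is to replace $(z'_u,z'_{u'})$ by a pair $(k_u,k_{u'})$ with the \emph{same marginals} but an extremal coupling. If $\varepsilon_u+\varepsilon_{u'}\le 1$, put mass $1-\varepsilon_u-\varepsilon_{u'}$ on both-floor and $0$ on both-ceil. Otherwise put $0$ on both-floor and $\varepsilon_u+\varepsilon_{u'}-1$ on both-ceil. In either case the two mixed cells are forced by the marginals. Every realization then has $k_u+k_{u'}\in\{\lfloor z_{u^*}\rfloor,\lceil z_{u^*}\rceil\}$, and the means $z_u,z_{u'}$ are kept. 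Since the marginals did not change, the distributions below $u$ and $u'$ can be coupled to the new pair. Setting $z'_{u^*}=k_u+k_{u'}$ puts you back in exactly the invariant of the original rounding, and the paper then appeals to that analysis unchanged.

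Your Step 3 and the lazy variant do not close the argument either. Weakening exact balance to an additive $O(1)$ window discards the invariant on which the original $O(1)$ cost analysis rests, so that analysis would have to be redone, and you do not attempt this. Deferring the repair until fractional mass next crosses $u^*$ does not give an $O(1)$ charge. Any repair must move a server across the boundary of $u^*$, at cost of order $\varphi(p)$, in the bad realizations, and these can have constant-order probability. The next fractional crossing, by contrast, may carry arbitrarily little mass, or may not happen before further merges under $p$ push the count out of any fixed window. Your claim that each unit of fractional movement is charged only boundedly often is also unjustified, since the same fractional part can be inherited through a chain of merges under one parent. Your concern that physically realizing a re-coupling costs movement is reasonable; the paper's sketch does not bound this either. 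But what your proposal actually lacks, and the paper supplies, is the marginal-preserving coupling above.
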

\begin{proof}
    The key property maintained by the rounding in \cite{bansal15polylog} is that the integral algorithm is supported on balanced configurations with respect to the fractional algorithm, i.e., let $z, z'$ be the measures of the fractional and random integral configuration, then at all times $t$ and all supports of the random algorithm,
    \[z'_u \in \{\lfloor z_u \rfloor, \lceil z_u \rceil \}\]
    
    We describe how to maintain the balance property when two sibling vertices $u, u'$ are merged into $u^*$. Suppose that $z'$ is a random integral $k$-server measure satisfying, for two siblings $u, v$ in the tree,
    \[
    \mathbb{E}[z'(u)] = z(u), \qquad
    \mathbb{E}[z'(v)] = z(v),
    \]
and with probability one, $z'$ satisfies the balance conditions:
\[
 z'(u) \in [\lfloor  z(u) \rfloor,\, \lceil  z(u) \rceil],
\quad
 z'(v) \in [\lfloor  z(v) \rfloor,\, \lceil  z(v) \rceil].
\]

For simplicity, let us denote $\varepsilon_u = z_u - \lfloor z_u \rfloor, \varepsilon_v = z_v - \lfloor z_v \rfloor$. We produce a random variable $(k_u, k_v)$ with the following properties:
\begin{enumerate}
  \item $\mathrm{supp}((k_u,k_v)) \subseteq \mathrm{supp}((z'_u, z'_v))$.
  \item $\mathbb{P}(k_u = \lfloor z_u \rfloor) = \mathbb{P}(z_u' = \lfloor z_u \rfloor)$.
  \item $\mathbb{P}(k_v = \lfloor z_v \rfloor) = \mathbb{P}(z'_v = \lfloor z_v \rfloor)$.
  \item The balance condition is satisfied:
  \[
  \mathbb{P}\big(k_u + k_v \in \{\lfloor z_u + z_v \rfloor, \lceil z_u + z_v \rceil\}\big) = 1.
  \]
\end{enumerate}
There are two cases to handle, depending on the fractional parts $\varepsilon_u$ and $\varepsilon_v$.

\paragraph{Case 1: $\varepsilon_u + \varepsilon_v \le 1$.}
\[
\begin{aligned}
\mathbb{P}[(k_u,k_v) = (\lfloor z_u \rfloor, \lfloor z_v \rfloor)] &= \mathbb{P}( z'_u = \lfloor z_u \rfloor) + \mathbb{P}( z'_v = \lfloor z_v \rfloor) - 1,\\
\mathbb{P}[(k_u,k_v) = (\lfloor z_u \rfloor, \lceil z_v \rceil)] &= \mathbb{P}( z'_v = \lceil z_v \rceil)\mathbbm{1}_{\{\varepsilon_v>0\}},\\
\mathbb{P}[(k_u,k_v) = (\lceil z_u \rceil, \lfloor z_v \rfloor)] &= \mathbb{P}( z'_u = \lceil z_u \rceil)\mathbbm{1}_{\{\varepsilon_u>0\}},\\
\mathbb{P}[(k_u,k_v) = (\lceil z_u \rceil, \lceil z_v \rceil)] &= 0.
\end{aligned}
\]

\paragraph{Case 2: $\varepsilon_u + \varepsilon_v > 1$.}
\[
\begin{aligned}
\mathbb{P}[(k_u,k_v) = (\lfloor z_u \rfloor, \lfloor z_v \rfloor)] &= 0,\\
\mathbb{P}[(k_u,k_v) = (\lfloor z_u \rfloor, \lceil z_v \rceil)] &= \mathbb{P}(z'_u = \lfloor z_u \rfloor),\\
\mathbb{P}[(k_u,k_v) = (\lceil z_u \rceil, \lfloor z_v \rfloor)] &= \mathbb{P}(z'_v = \lfloor z_v \rfloor),\\
\mathbb{P}[(k_u,k_v) = (\lceil z_u \rceil, \lceil z_v \rceil)] &= \mathbb{P}(z'_u = \lceil z_u \rceil) + \mathbb{P}(z'_v = \lceil z_v \rceil) - 1.
\end{aligned}
\]

We then define $z'_{u^*}$ for the fused cluster as $k_u + k_v$ and couple the distributions of the children accordingly. All other steps follow exactly as in \cite{bansal15polylog}.
\end{proof}

Combining \cref{corr: potential server}, \cref{lem:rounding-merge}, and the same reasoning as in \cref{thm: reduction} gives \cref{thm: k-server}.

\end{document}